\newtheorem{theorem}{Theorem}
\newtheorem{observation}{Observation}
\newtheorem{lemma}{Lemma}
\newcommand{\mc}{\mathcal}
\newcommand{\mb}{\mathbb}
\newcommand{\id}{\mathbb{I}}
\newcommand{\comments}[1]{}
\newcommand*{\physus}{Department of Physics, Southern University of Science and Technology, Shenzhen 518055, China}
\newcommand*{\inssus}{Shenzhen Institute for Quantum Science and Engineering, Southern University of Science and Technology, Shenzhen 518055, China}
\begin{document}

\title{Certifying Quantum Temporal Correlation via Randomized Measurements: Theory and Experiment}

\author{Hongfeng Liu}
\thanks{These authors contributed equally to this work.}
\affiliation{\physus}

\author{Zhenhuan Liu}
%\email{liu-zh20@mails.tsinghua.edu.cn}
\thanks{These authors contributed equally to this work.}
\affiliation{Center for Quantum Information, Institute for Interdisciplinary Information Sciences, Tsinghua University, Beijing 100084, China}

\author{Shu Chen}
\affiliation{Center for Quantum Information, Institute for Interdisciplinary Information Sciences, Tsinghua University, Beijing 100084, China}

\author{Xinfang Nie}
\email{niexf@sustech.edu.cn}
\affiliation{\physus}
\affiliation{Quantum Science Center of Guangdong-Hong Kong-Macao Greater Bay Area, Shenzhen 518045, China}

\author{Xiangjing Liu} 
\email{xiangjing.liu@cnrsatcreate.sg}
\affiliation{CNRS@CREATE, 1 Create Way, 08-01 Create Tower, Singapore 138602, Singapore }
\affiliation{MajuLab, CNRS-UCA-SU-NUS-NTU International Joint Research Unit, Singapore}
\affiliation{Centre for Quantum Technologies, National University of Singapore, Singapore 117543, Singapore}

\author{Dawei Lu}
\email{ludw@sustech.edu.cn}
\affiliation{\physus}
\affiliation{Quantum Science Center of Guangdong-Hong Kong-Macao Greater Bay Area, Shenzhen 518045, China}
\affiliation{\inssus}
\affiliation{International Quantum Academy, Shenzhen 518055, China}

\begin{abstract}
We consider the certification of temporal quantum correlations using the pseudo-density matrix (PDM), an extension of the density matrix to the time domain, where negative eigenvalues are key indicators of temporal correlations. 
Conventional methods for detecting these correlations rely on PDM tomography, which often involves excessive redundant information and requires exponential resources. 
In this work, we develop an efficient protocol for temporal correlation detection by virtually preparing the PDM within a single time slice and estimating its second-order moments using randomized measurements.
Through sample complexity analysis, we demonstrate that our protocol requires only a constant number of measurement bases, making it particularly advantageous for systems utilizing ensemble average measurements, as it maintains constant runtime complexity regardless of the number of qubits. 
We experimentally validate our protocol on a nuclear magnetic resonance platform, a typical thermodynamic quantum system, where the experimental results closely align with theoretical predictions, confirming the effectiveness of our protocol.
\end{abstract}

\maketitle
\emph{\bfseries Introduction.}---Quantum correlations, both spatial and temporal, are distinguishing features of quantum mechanics. 
Over the past few decades, the utilization of quantum spatial correlations, particularly entanglement, has significantly shaped quantum information science~\cite{Horodecki2009entanglement}.
Detecting and quantifying entanglement are also essential methods for benchmarking the capabilities of quantum devices~\cite{gunhe2009entanglement,Haffner2005ion,Leibfried2005cat,monz2011fourteen,pan2012multiphoton,omran2019cat,chao2019twenty,brydges2019renyi,Cao2023super}. 
Recently, the focus on certifying quantum correlations has been generalized to include temporal correlations~\cite{Leggett1985quantum,emary2013leggett,budroni2022KS,vitagliano2023leggett,chen2024semi}.
Quantum temporal correlations, which emerge from sequential measurements on quantum systems, are crucial not only for deepening our understanding of the foundational aspects of quantum physics but also for a wide range of sequential information processing tasks. 
For example, the Leggett–Garg inequalities, derived under the assumptions of macrorealism and non-invasive measurability, can be violated by quantum mechanical predictions~\cite{Leggett1985quantum,fritz2010quantum,costantino2013bounding,chen2024semi}. 
Furthermore, temporal correlations have been employed to witness quantum dimensionality~\cite{wolf2009assessing,spee2020genuine,vieira2024witnessing} and have proven to be central in the performance of time-keeping devices~\cite{Paul2017autonomous,costantino2021clock,woods2022quantum}.

\begin{figure}[htbp]
\centering
\includegraphics[width=0.48\textwidth]{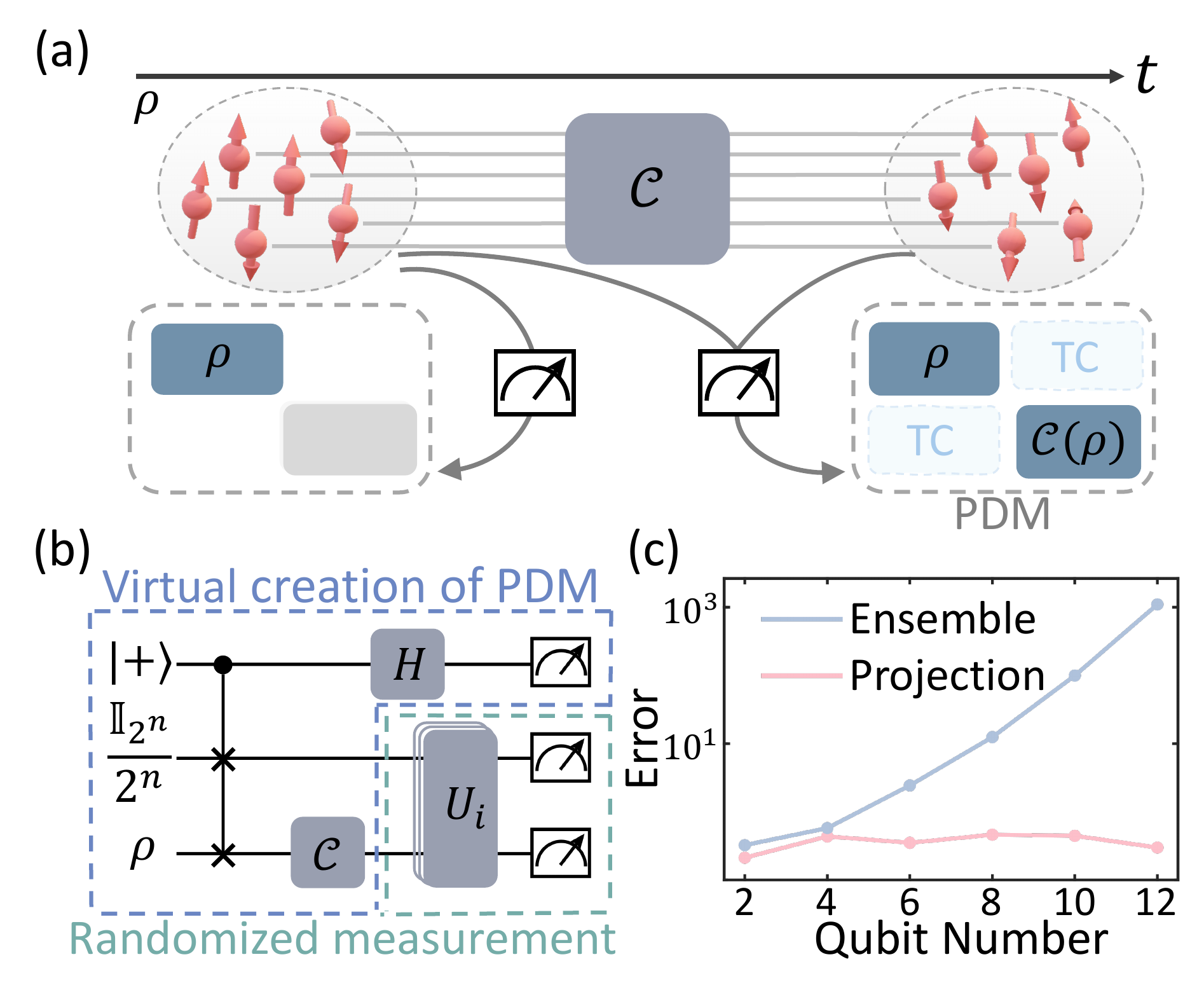}
\caption{Overview of this work. 
(a) The PDM is constructed by sequentially measuring the quantum system before and after the quantum channel, capturing the temporal correlation (TC) of the quantum process. 
(b) Our protocol consists mainly of two components. 
First, using a maximally mixed state, the input state $\rho$, a controlled SWAP gate, and a Pauli-$X$ basis measurement, the circuit within the blue box virtually prepares the PDM. 
Then, through multiple independent random unitary evolution and computational basis measurements, as shown in the green box, the purity of the PDM is estimated to infer its negativity. 
(c) The protocol is particularly suited for platforms with ensemble-average measurements. 
As illustrated, for a fixed experimental runtime, the error in ensemble-average systems does not increase, while in systems with projective measurements, the error scales exponentially with the number of qubits in the PDM. }
\label{fig:overview}
\end{figure}

Among various methodologies for certifying temporal correlations, the pseudo-density matrix (PDM) is a prominent tool due to its rich physical implications and concise mathematical form~\cite{fitzsimons2015quantum,shrotriya2022certifying,pusuluk2022witnessing,liu2024inferring,liu2023quantum,marletto2019theoretical,jia2023quantum,song2023causal,zhao2018geometry,zhang2020different,marletto2021temporal,liu2024unification}. 
As illustrated in Fig.~\ref{fig:overview}(a), the PDM is an extension of the density matrix to the time domain. 
Notably, it permits negative eigenvalues \cite{fitzsimons2015quantum}, which serve as indicators of quantum temporal correlations, as density matrices constructed from a single time point cannot exhibit such negativity. 
Additionally, compared to the violation of Leggett–Garg inequalities, negativity in the PDM functions as a subprotocol for inferring quantum causal structures~\cite{liu2023quantum,liu2024inferring} and can also be used to bound quantum channel capacity~\cite{Pisarczyk2019causal}.
The conventional method for detecting negativity involves a process known as PDM tomography~\cite{fitzsimons2015quantum,marletto2019theoretical,liu2023quantum}. 
However, as the number of qubits increases, PDM tomography becomes impractical due to the exponential consumption of both quantum and classical resources. 
Moreover, tomography can lead to redundancy when the objective is to estimate specific target information~\cite{aaronson2018shadow,huang2020predicting}, such as negativity.

In this work, we integrate the techniques of quasi-probability decomposition \cite{temme2017pec,endo2018pec} and randomized measurements \cite{Elben2019toolbox,brydges2019probing,Elben2023toolbox} to propose a protocol for detecting quantum temporal correlations without requiring full system characterization. 
Specifically, we design a quantum circuit that virtually prepares the PDM within a single time slice and employs randomized measurements to infer its negativity, as illustrated in Fig.~\ref{fig:overview}(b). 
Remarkably, our protocol requires only a constant number of measurement bases, making it particularly suitable for quantum platforms utilizing ensemble-average measurements, as depicted in Fig.\ref{fig:overview}(c). 
In such platforms, measurements in a fixed basis are performed collectively across a thermodynamically large number of copies, yielding the expectation value of an observable in a single run. 
This feature allows the runtime complexity of our protocol to be exponentially reduced compared to systems employing projective measurements.
Given the ensemble nature of nuclear spins and the high-fidelity controls achievable under ambient conditions, nuclear magnetic resonance (NMR) is an ideal candidate for testing our protocol. 
Accordingly, we conduct a proof-of-principle experiment using NMR, demonstrating that quantum temporal correlations can be efficiently detected through randomized measurements on ensemble quantum systems.

\emph{\bfseries Pseudo-density matrix.}---The PDM generalizes the concept of the density matrix to cases involving multiple time slices~\cite{fitzsimons2015quantum,liu2023quantum}. 
In this work, we focus specifically on the 2-time PDM. 
Without loss of generality, we consider scenarios where both the input and output of a quantum process are $n$-qubit states. 
Consider an $n$-qubit quantum system that is first measured by a Pauli observable $\sigma_i \in \{\mathbb{I},\sigma_x,\sigma_y,\sigma_z\}^{\otimes n}$, then transmitted through a quantum channel, followed by a second Pauli measurement $\sigma_j$. 
Let $\langle \sigma_i\sigma_j \rangle$ denote the product of the expectation values of these Pauli observables. 
Given all the expectation values $\langle \sigma_i\sigma_j \rangle$, the 2-time PDM is defined as
\begin{equation*}
R =\frac{1}{4^{n}}\sum^{4^n-1}_{i,j=0} \expval{\sigma_i\sigma_j}\sigma_i\otimes\sigma_j.
\end{equation*}
When performing the coarse-grained Pauli measurement~\cite{liu2023quantum}, the closed-formed expression of PDM coincides with the so-called canonical quantum state over time discussed in Refs.~\cite{fullwood2022quantum,arthur2023from,lie2024qsot}.
The PDM is Hermitian with unit trace but may have negative eigenvalues, which serve as indicators of quantum temporal correlations~\cite{fitzsimons2015quantum}. 

Intuitively, one could construct the PDM tomographically and then check for negative eigenvalues. 
However, the complete information obtained from tomography is excessive for this purpose and requires an impractical amount of quantum and classical resources. 
Quantitatively, if both the input and output states contain $n$ qubits, Pauli-based tomography necessitates a total of $3^{2n}$ different measurement bases. 
Furthermore, since the measurements needed to construct the PDM are more restricted than those for normal density matrices, the lower bound on the complexity of normal state tomography would still apply to PDM tomography. 
Thus, even if highly joint operations among multiple copies of the system are permitted, PDM tomography still requires $\Omega(2^{4n})$ experimental runs~\cite{haah2016tomo,chen2023adaptivity}. If only incoherent measurements are allowed, the sample complexity increases to $\Omega(2^{6n})$~\cite{chen2023adaptivity}.

\emph{\bfseries Protocol.}---To circumvent the need for PDM tomography, our core idea is to measure experimentally accessible quantities to certify temporal correlations. 
Inspired by methodologies in entanglement detection~\cite{carteret2005peres,cai2008novel,gray2018ml,elben2020mixed,zhou2020single,Neven2021resolved,yu2021optimal,liu2022detecting}, the moments of $R$, such as the purity $\Tr(R^2)$, can be used to infer the negativity of $R$.

\begin{observation} Given a PDM $R$, if $\Tr(R^2) > 1$, then $R$ is not positive semi-definite. 
\end{observation}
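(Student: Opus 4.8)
The plan is to argue by contraposition, using only the two structural properties of the PDM that were already established above, namely that $R$ is Hermitian and that $\Tr(R)=1$. First I would invoke Hermiticity to write the spectral decomposition $R=\sum_k \lambda_k \ketbra{v_k}{v_k}$ with real eigenvalues $\lambda_k$ and an orthonormal eigenbasis $\{\ket{v_k}\}$; this immediately gives $\Tr(R)=\sum_k \lambda_k = 1$ and $\Tr(R^2)=\sum_k \lambda_k^2$, both being basis-independent so the computation is legitimate.

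Next I would suppose, toward a contradiction, that $R$ \emph{is} positive semi-definite, so that every $\lambda_k\ge 0$. Under this assumption all the cross terms in the expansion of $\left(\sum_k \lambda_k\right)^2$ are non-negative, whence
\begin{equation*}
1 = \left(\sum_k \lambda_k\right)^2 = \sum_k \lambda_k^2 + \sum_{k\ne l}\lambda_k\lambda_l \;\ge\; \sum_k \lambda_k^2 = \Tr(R^2).
\end{equation*}
This contradicts the hypothesis $\Tr(R^2)>1$, so $R$ cannot be positive semi-definite, which is exactly the claim. Equivalently, one can phrase this directly: $\Tr(R^2)>1=\Tr(R)^2$ forces at least one eigenvalue to be negative, since for a non-negative spectrum the $\ell^2$ norm never exceeds the square of the $\ell^1$ norm.

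I do not expect any real obstacle here; the only point requiring care is simply to make sure the two inputs — Hermiticity (so the eigenvalues are real and $\Tr(R^2)=\sum_k\lambda_k^2$) and unit trace — are legitimately available, and both were recorded in the definition of the PDM above. If one wanted the quantitative refinement, the same inequality chain shows $\Tr(R^2)-1 \le$ (sum of squares of the negative eigenvalues up to cross terms), which could be massaged into a lower bound on the negativity; but for the stated qualitative observation the short contradiction argument suffices.
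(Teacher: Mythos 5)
Your proposal is correct and follows essentially the same route as the paper: both use Hermiticity to get real eigenvalues summing to $1$, and then note that for a non-negative spectrum $\sum_k\lambda_k^2\le\left(\sum_k\lambda_k\right)^2=1$, so $\Tr(R^2)>1$ forces a negative eigenvalue. The only cosmetic slip is the phrase ``the $\ell^2$ norm never exceeds the square of the $\ell^1$ norm,'' which should read that the sum of squares never exceeds the square of the sum; the displayed inequality itself is fine.
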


\noindent To prove this, note that $\Tr(R^2) = \sum_i \lambda_i^2$, where ${\lambda_i}$ are the eigenvalues of $R$, satisfying $\sum_i \lambda_i = 1$ due to $\Tr (R) = 1$. When $R \geq 0$, i.e., $\lambda_i \ge 0$ for all $i$, $\Tr(R^2)$ is upper bounded by $1$. 
Therefore, if $\Tr(R^2) > 1$, $R$ must have at least one negative eigenvalue. 
Additionally, we demonstrate in Appendix~\ref{app:PDM_purity} that $\Tr(R^2)$ can attain a maximum value of $\frac{1}{2}(2^n + 1)$ when the input is an $n$-qubit pure state and the channel is unitary. 
It can be similarly proved that the value of $\sqrt{\Tr(R^2)}$ provides a lower bound for $\Tr(|R|) = \sum_i |\lambda_i|$, which often serves as a monotone for quantum causality~\cite{fitzsimons2015quantum}.

Instead of PDM tomography, we employ the randomized measurement protocol to estimate $\Tr(R^2)$. 
It is known that the state purity, $\Tr(\rho^2)$, can be efficiently estimated by evolving $\rho$ with a random unitary and performing computational basis measurements~\cite{brydges2019probing,Elben2019toolbox}. 
However, since the PDM spans multiple time slices, estimating its moments using the randomized measurement technique is not straightforward. 
To address this challenge, we design a quantum circuit that virtually prepares the PDM within a single time slice, as shown in Fig.~\ref{fig:overview}(b).
In the circuit depicted in the blue box, we denote the measurement results of the control qubit as $0$ and $1$, their corresponding probabilities as $p_0$ and $p_1$, and the collapsing states of the other two registers as $\rho_0$ and $\rho_1$. 
We show in Appendix \ref{app:circuit} that
\begin{equation}\label{eq:virtual PDM} 
2^n(p_0\rho_0 - p_1\rho_1) = R, 
\end{equation}
where $R$ is the PDM defined by the input state $\rho$ and the channel $\mathcal{C}$. 
Note that, by setting the channel to be an identity channel, our circuit provides a new way to realize the virtual broadcasting map \cite{arthur2024virtual}.
Since the PDM virtually exists within a single time slice, we can apply the randomized measurement toolbox to estimate $\Tr(R^2)$.

As illustrated in the green box of Fig.\ref{fig:overview}(b), one begins by independently selecting $N_U$ random unitaries from a unitary ensemble $\mathcal{E}_U$ and then applying each unitary in $N_M$ independent experiments. 
For each unitary $U$, experimental data $\{s_a^i, \vec{s}^i\}_{i=1}^{N_M}$ is obtained, where $s_a$ and $\vec{s}$ denote the measurement results of the control qubit and the other two registers, respectively. 
This data is then used to construct an estimator $\hat{M}_2^U$ according to Eq.~\eqref{eq:estimator}. 
The final estimator $\hat{M}_2$ is calculated by averaging over the $N_U$ independently constructed $\hat{M}_2^U$.
In Appendix~\ref{app:RM} and Appendix~\ref{app:variance}, we analyze the estimator construction and the sample complexity and show that:

\begin{theorem}\label{theorem:estimator} 
When the unitary ensemble $\mathcal{E}_U$ is at least a unitary 2-design, and given the measurement results $\{s_a^i,\vec{s}^i\}_{i=1}^{N_M}$ obtained from the circuit in Fig.~\ref{fig:overview}(b) using the same random unitary sampled from $\mathcal{E}_U$, the expression 
\begin{equation}\label{eq:estimator} 
\hat{M}_2^U = \frac{2^{2n}}{N_M(N_M-1)} \sum_{i \neq j} (-1)^{s_a^i + s_a^j} X(\vec{s}^i, \vec{s}^j) 
\end{equation} 
is an unbiased estimator for $\Tr(R^2)$, where $X(\vec{s}^i, \vec{s}^j) = -(-2^{2n})^{\delta{\vec{s}^i, \vec{s}^j}}$.

Assume that $\hat{M}_2$ is constructed by averaging $N_U$ independent estimators $\hat{M}_2^U$. 
When the unitary ensemble $\mathcal{E}_U$ is at least a unitary 4-design, to ensure that $\abs{\hat{M}_2 - \Tr(R^2)} \le \epsilon$ with probability at least $1 - \delta$, it is required that $N_U = \mathcal{O}(\frac{1}{\epsilon^2 \delta})$ and $N_M = \mathcal{O}(2^{3n})$. 
Consequently, the total sample complexity is $N_U \times N_M = \mathcal{O}(\frac{2^{3n}}{\epsilon^2 \delta})$. 
\end{theorem}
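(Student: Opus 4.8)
The plan is to prove the two assertions---unbiasedness and the variance bound---treating the draw of $U\sim\mathcal{E}_U$ and the computational-basis outcomes as one source of randomness. Write $D=2^{2n}$ for the dimension of the register carrying the virtually prepared PDM and, using Eq.~\eqref{eq:virtual PDM}, set $\bar\rho:=p_0\rho_0-p_1\rho_1=R/2^{n}$, so that $\Tr\bar\rho=2^{-n}$ and $\Tr(\bar\rho^{2})=\Tr(R^{2})/D$; let $\Sigma:=p_0\rho_0+p_1\rho_1$ be the unconditional state of that register. A single run returns $s_a\in\{0,1\}$ with probability $p_{s_a}$ and then $\vec{s}$ with conditional probability $\langle\vec{s}|U\rho_{s_a}U^{\dagger}|\vec{s}\rangle$, and distinct runs are i.i.d.

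For unbiasedness I would exploit the $U$-statistic structure, $\mathbb{E}[\hat{M}_2^{U}]=2^{2n}\,\mathbb{E}\!\left[(-1)^{s_a+s_a'}X(\vec{s},\vec{s}')\right]$ for two independent runs. Summing the signed weights $(-1)^{s_a}p_{s_a}$ over $s_a$ replaces each run's state by $\bar\rho$, so $\mathbb{E}[\hat{M}_2^{U}]=2^{2n}\,\mathbb{E}_U\,\Tr\!\left[(\bar\rho\otimes\bar\rho)(U^{\dagger})^{\otimes2}\,\Xi\,U^{\otimes2}\right]$ with $\Xi:=\sum_{\vec{s},\vec{s}'}X(\vec{s},\vec{s}')\,|\vec{s}\rangle\langle\vec{s}|\otimes|\vec{s}'\rangle\langle\vec{s}'|$. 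The whole point of the definition $X(\vec{s},\vec{s}')=-(-D)^{\delta_{\vec{s},\vec{s}'}}$ is that it forces $\Tr\Xi=D$ and $\Tr(\Xi\,\mathbb{S})=D^{2}$, which makes the standard unitary-2-design twirl of $\Xi$ have vanishing $\mathbb{I}$-component and unit $\mathbb{S}$-component; hence $\mathbb{E}_U(U^{\dagger})^{\otimes2}\Xi\,U^{\otimes2}=\mathbb{S}$ and $\mathbb{E}[\hat{M}_2^{U}]=2^{2n}\Tr(\bar\rho^{2})=\Tr(R^{2})$. Only the second moment of $U$ is used, so a 2-design suffices, and averaging over the $N_U$ independent blocks keeps $\hat{M}_2$ unbiased.

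For the sample complexity I would use independence of the $N_U$ blocks, $\mathrm{Var}[\hat{M}_2]=\mathrm{Var}[\hat{M}_2^{U}]/N_U$, together with Chebyshev's inequality, so the whole task reduces to showing $\mathrm{Var}[\hat{M}_2^{U}]=\mathcal{O}(1)$ when $N_M=\mathcal{O}(2^{3n})$; this then yields $N_U=\mathcal{O}\big(1/(\epsilon^{2}\delta)\big)$. To bound $\mathbb{E}[(\hat{M}_2^{U})^{2}]-(\Tr R^{2})^{2}$ I would expand the product of the two $U$-statistics and partition the index quadruple $(i,j,k,l)$ according to whether it involves four, three, or two distinct shots. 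In the four-distinct group all runs carry $\bar\rho$; a unitary 4-design is needed here to evaluate the degree-four moment of $U$ via Weingarten calculus, the dominant permutation pair reproduces $(\Tr R^{2})^{2}$ and cancels the subtracted term, and the $\mathcal{O}(1/N_M)$ mismatch between $N_M(N_M-1)(N_M-2)(N_M-3)$ and $[N_M(N_M-1)]^{2}$ leaves $\mathcal{O}((\Tr R^{2})^{2}/N_M)+\mathcal{O}(1)=\mathcal{O}(2^{2n}/N_M)+\mathcal{O}(1)$. In the three-distinct group the repeated index loses its sign and collapses to $\Sigma$ while the two free indices still carry $\bar\rho$; because $\Tr\bar\rho$, $\Tr(\bar\rho^{2})$ and $\|\bar\rho\|_{\infty}$ are all exponentially small---using $\Tr(R^{2})\le\frac12(2^{n}+1)$ from Appendix~\ref{app:PDM_purity}---each such term is $\mathcal{O}(2^{-n})$, and with $\mathcal{O}(N_M^{3})$ terms and the prefactor $2^{4n}/[N_M(N_M-1)]^{2}$ this group contributes $\mathcal{O}(2^{3n}/N_M)$. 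The two-distinct group is the dominant one: there the summand is $X(\vec{s},\vec{s}')^{2}=1+(D^{2}-1)\delta_{\vec{s},\vec{s}'}$ with both runs in state $\Sigma$, its mean is $1+(D-1)(1+\Tr\Sigma^{2})=\mathcal{O}(2^{2n})$, and with $\mathcal{O}(N_M^{2})$ terms it contributes $\mathcal{O}(2^{6n}/N_M^{2})$. Adding the three groups, $\mathrm{Var}[\hat{M}_2^{U}]=\mathcal{O}(2^{6n}/N_M^{2})+\mathcal{O}(2^{3n}/N_M)+\mathcal{O}(1)$, which is $\mathcal{O}(1)$ precisely when $N_M=\mathcal{O}(2^{3n})$.

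I expect the main obstacle to be the four-distinct group of the variance: after the 4-design twirl one faces a sum $\sum_{\sigma,\pi\in S_4}\mathrm{Wg}(\sigma\pi^{-1},D)\,\Tr\!\big[(\Xi\otimes\Xi)P_\pi\big]\,\Tr\!\big[\bar\rho^{\otimes4}P_\sigma\big]$, and one must verify that, once the $(\Tr R^{2})^{2}$-producing term is separated off, each of the remaining $\mathcal{O}((4!)^{2})$ contributions stays $\mathcal{O}(1)$ after multiplication by the $2^{4n}$ prefactor---a bookkeeping exercise over permutation types and Weingarten orders. The structural fact that makes everything go through, and that pins the complexity at $2^{3n}$ rather than something larger, is that the measurement signs $(-1)^{s_a}$ push every non-repeated run into $\bar\rho=R/2^{n}$, whose small trace and purity suppress all higher-coincidence terms; only the purely diagonal $X^{2}$ contributions, which are immune to this suppression, survive to fix $N_M=\Theta(2^{3n})$.
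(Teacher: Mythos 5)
Your proposal is correct and follows essentially the same route as the paper's Appendices C--E: the same twirl of the diagonal kernel $X$ to the SWAP operator for unbiasedness (your $\bar\rho=R/2^n$ and $\Sigma$ are the paper's $R/d$ and $Q/d$), the same partition of the variance into two-, three-, and four-coincidence index groups with matching bounds $\mathcal{O}(2^{6n}/N_M^2)+\mathcal{O}(2^{3n}/N_M)+\mathcal{O}(1)$, and the same Chebyshev step. Your closed form $1+(D-1)(1+\Tr\Sigma^2)$ for the dominant group is algebraically identical to the paper's $\Tr(Q)^2+\frac{d^2-1}{d^2}\Tr(Q^2)$, so nothing substantive differs.
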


\begin{figure*}[htbp]
\centering
\includegraphics[width=0.9\textwidth]{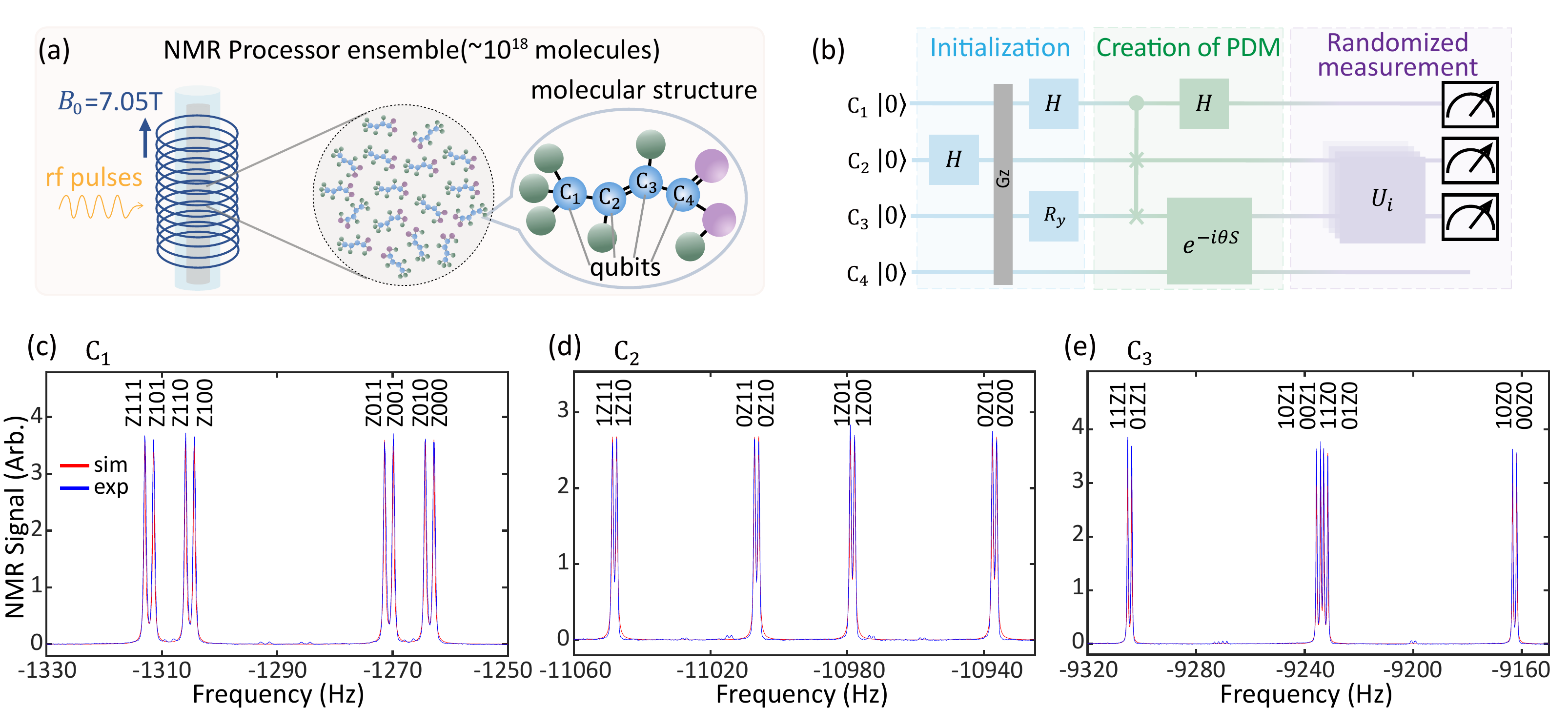}
\caption{Experimental setup. (a) The ensemble NMR system. The sample contains a large number ($\sim 10^{18}$) of identical molecules, all of which participate in the experimental process. 
(b) Quantum circuit used to implement the protocol. The circuit consists of three segments: initialization, virtual creation of the PDM, and randomized measurement. 
(c-e) Experimental spectra of $\text{C}_{1}$, $\text{C}_{2}$, and $\text{C}_{3}$, respectively. The $y$-axis has an arbitrary unit, and the absolute signal strengths in the spectra are meaningless. 
Each integral of a resonant peak at a specific frequency represents the expectation value of a particular observable. 
For example, the notation $Z111$ represents $\sigma_z \otimes \vert111\rangle\langle111\vert$.
The red lines indicate the simulation results, while the blue lines show the experimental results.
}
\label{fig:exp}
\end{figure*}

Although the sample complexity for measuring $\Tr(R^2)$ remains exponential with respect to the system size, $\mathcal{O}(2^{3n})$ is significantly smaller than the sample complexity required for PDM tomography based on joint operations. 
It is important to note that an exponential sample complexity for detecting temporal correlations is unavoidable, as the lower bounds for sample complexity in certain channel distinguishing tasks, which is achievable by measuring $\Tr(R^2)$, have been proven to be exponential~\cite{chen2022separation,chen2024tight}.
A particularly advantageous aspect of our protocol is that the number of different unitaries, i.e., the measurement bases, remains constant and independent of the system size, as indicated by $N_U = \mathcal{O}(\frac{1}{\epsilon^2\delta})$. 
This feature makes our protocol especially suitable for platforms that utilize ensemble-average measurements, such as NMR, cold atomic systems, and nitrogen vacancy centers in diamonds. 
In these platforms, the exponential projective measurements performed in a single measurement basis can be accomplished much more efficiently.

We also numerically demonstrate that this property holds even when the unitary ensemble is not a unitary 4-design. 
In Fig.\ref{fig:overview}(c), we set the channel to be a fully depolarizing channel, the input state to be $\rho = \ketbra{0}{0}$, and the unitary ensemble to be the Clifford group, which is only a unitary 3-design \cite{zhu2017clifford}. 
For the line labeled ``Ensemble", we set $N_U = 10$ and $N_M = \infty$, observing that the statistical error does not increase with the qubit number. 
For the line labeled ``Projection", we set $N_M = 100$, and it is evident that the error scales exponentially with the qubit number.

\emph{\bfseries NMR experimental scheme.}---We experimentally demonstrate our temporal correlation detection protocol on an NMR platform. 
The experiments are conducted on a Bruker 300 MHz spectrometer at room temperature using a four-qubit nuclear spin system composed of $^{13}$C-labeled trans-crotonic acid dissolved in $d_6$-acetone, with its molecular structure shown in Fig.~\ref{fig:exp}(a). 
The four carbon nuclear spins, labeled as C$_{1-4}$, form a four-qubit quantum processor~\cite{PhysRevLett.129.070502,PhysRevLett.132.210403}, characterized by the internal Hamiltonian:
$  \mathcal{H}_{\text{NMR}}=\sum^4_{i=1}\pi\nu_i\sigma^i_z+\sum^4_{i<j,=1}\frac{\pi}{2}J_{ij}\sigma^i_z\sigma^j_z,$
where $\nu_i$ represents the chemical shift of the $i$-th spin, and $J_{ij}$ denotes the scalar coupling between the $i$-th and $j$-th spins. 
Specific values for $\nu_i$ and $J_{ij}$ can be found in Appendix~\ref{app:exp}. 
In the NMR system, single-qubit rotations are implemented using transverse radio-frequency pulses, while two-qubit interactions are achieved through free evolution. The accuracy of experimental control can be further enhanced using the gradient ascent pulse engineering  algorithm~\cite{khaneja2005optimal}.

The quantum circuit used is shown in Fig.\ref{fig:exp}(b). 
The first three spins, C$_{1,2,3}$, correspond to the control qubit, the ancillary qubit, and the system qubit, respectively. 
The fourth spin, C$_4$, serves as the environment, which interacts with C$_3$ to realize the channel $\mathcal{C}$ and is not measured at the end of the circuit. 
The quantum circuit can be divided into three stages: state initialization, the virtual creation of the PDM, and the randomized measurement, as depicted in Fig.\ref{fig:exp}(b).

(i) Initialization. Starting from the $\ket{0000}$ state, we first apply a Hadamard gate to C$_2$, followed by a gradient-field pulse in the $z$-direction applied to all spins. 
This sequence transforms C$_2$ into the maximally mixed state ${\mathbb{I}_2}/{2}$ while leaving the other three qubits unchanged. 
Next, we apply another Hadamard gate to C$_1$ to prepare it in the $\ket{+}$ state and perform a rotation $R_y$ on C$_3$ to initialize the system qubit as a parameterized pure state $\ket{\psi(p)} = \sqrt{p}\ket{0} + \sqrt{1-p}\ket{1}$. 

(ii) Virtual creation of the PDM. First, we perform a controlled-SWAP gate on the first three qubits, C$_{1,2,3}$, as depicted in Fig.~\ref{fig:exp}(b). 
Next, we evolve the system spin C$_3$ and the environment spin C$_4$ using a partial SWAP operation with a parameterized time $\theta$. 
This unitary evolution induces a partial replacement channel $\mathcal{C}$ acting on C$_3$. 
Note that when $\theta = \frac{\pi}{2}$, $\mathcal{C}$ becomes a fully replacement channel that replaces any arbitrary input state with $\ketbra{0}{0}$, thereby eliminating causal influence. 
Finally, by applying a Hadamard gate to the control qubit and measuring it in the computational basis ${\ket{0}, \ket{1}}$, the residual system collapses to $\rho_0$ and $\rho_1$ with probabilities $p_0$ and $p_1$, respectively. 
At this stage, the PDM is virtually prepared, as suggested by Eq.~\eqref{eq:virtual PDM}.

(iii) Randomized measurements. The randomized measurements are performed by applying a set of $N_U = 200$ random unitaries independently sampled from the Clifford group, followed by computational basis measurements. 
As shown in Theorem~\ref{theorem:estimator}, systems with projective measurements require $N_M = \mathcal{O}(2^{3n})$ measurements for each unitary to accurately estimate $\Tr(R^2)$. 
At the same time, an NMR platform can extract all the diagonal elements of an $n$-qubit density matrix with only $n$ measurements, as discussed in Appendix~\ref{app:exp}. 
In our experiment, after the random unitary evolution, we use just three measurements to extract eight diagonal elements of the density matrix of C$_1$, C$_2$, and C$_3$.
As shown in Fig.\ref{fig:exp}(c)-(e), each integral of a resonant peak at a specific frequency represents the expectation value of an observable that is diagonal in the computational basis. 
Subsequently, the eight diagonal elements can be computed by linearly combining these expectation values. 
Denoting the diagonal elements as $\mathrm{Pr}(s_a, \vec{s})$, the estimator in Eq.~\eqref{eq:estimator} reduces to
\begin{equation*}
\hat{M}_2^U = 2^{2n} \sum_{s_a, s_a', \vec{s}, \vec{s}'} (-1)^{s_a + s_a'} X(\vec{s}, \vec{s}') \mathrm{Pr}(s_a, \vec{s}) \mathrm{Pr}(s_a', \vec{s}'), 
\end{equation*}
which corresponds to the case of $N_M = \infty$ for projective measurements. 
Therefore, the total runtime of our experiment scales only polynomially with the qubit number $n$.
Note that even with the ensemble-average measurements, performing PDM tomography remains challenging for the NMR system, as an exponential number of measurement bases is required~\cite{li2017NMRmeasurement}.

\begin{figure}[htbt]
\centering
\includegraphics[width=1\linewidth]{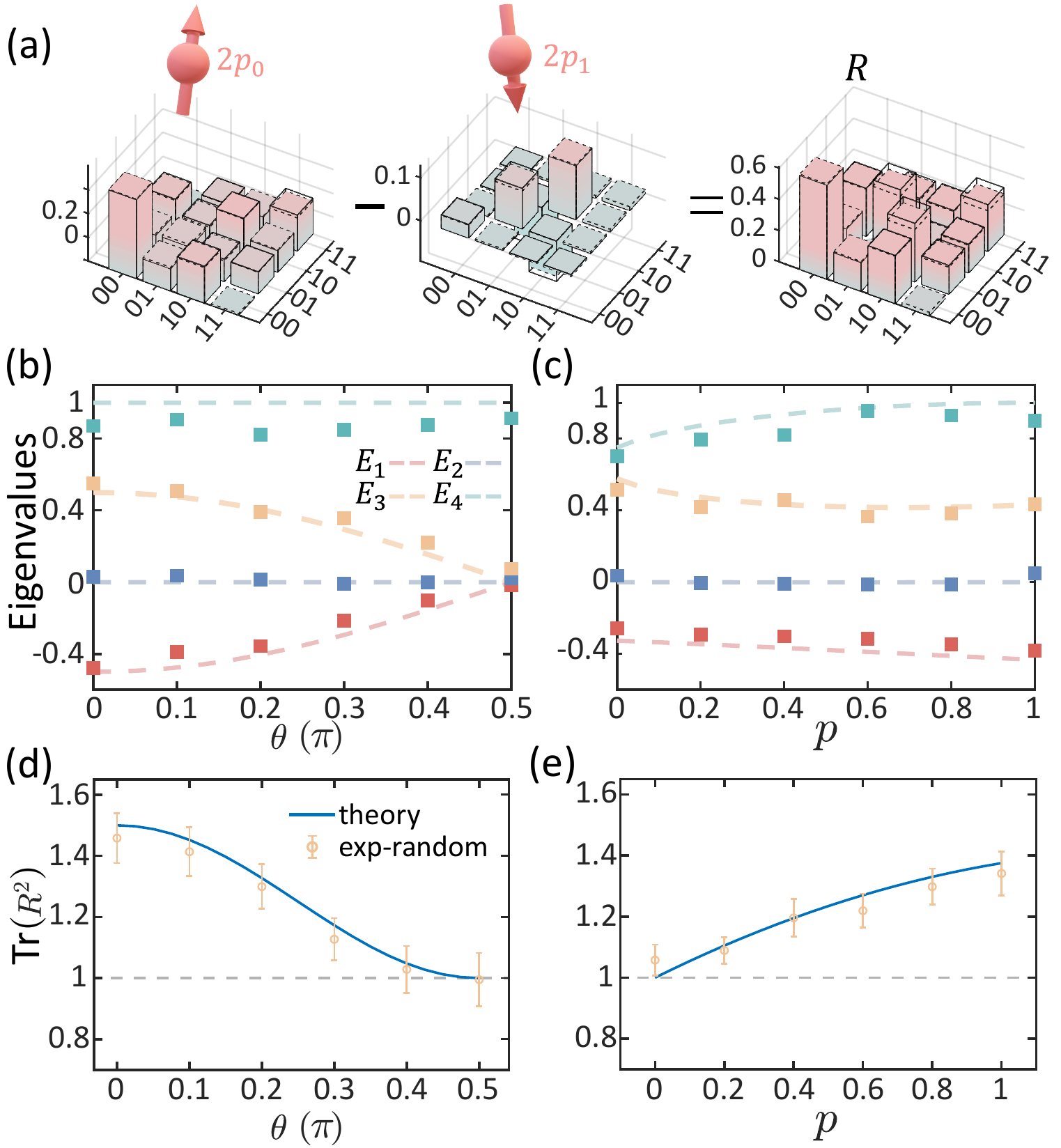}
\caption{Experimental results. (a) The calculation process and results of virtual creation of PDM.
The PDM $R$ for $\theta=\pi/6$ and $p=0.6$ is derived by subtracting the subspace where $s_a=1$ from the subspace where $s_a=0$. 
The solid bars indicate the theoretical prediction, and the colored bars with dashed lines indicate the experimental results.
(b) The eigenvalues of the corresponding PDM $R$ for different channel parameter $\theta$ and fixed state parameter $p=1$. The dashed lines are the theoretical predictions while the colored markers indicate the experimental results.
(c) The eigenvalues for different state parameters $p$ and fixed channel parameter $\theta=\pi/6$. 
(d-e) The experimental results of randomized measurement of $\Tr(R^2)$. In (d), the probability amplitude $p$ of the system state $\rho$ is fixed at 1, while in (e), $\theta$ from the channel is set at $\pi/6$. The theoretical and experimental results are shown by the blue solid line and yellow circular marker, respectively.
}
\label{fig:res}
\end{figure}

\emph{\bfseries Results.}---We first demonstrate the effectiveness of the virtual creation method. 
Before implementing the randomized measurement protocol, we experimentally performed quantum state tomography on qubits C$_{1,2,3}$ and then processed the experimental data to construct $R$, according to Eq.~\eqref{eq:virtual PDM}. 
The processing steps and results are depicted in Fig.~\ref{fig:res}(a), using the case $\theta = \pi/6$ and $p = 0.6$ as an example. 
Both the experimental predictions and theoretical results of $\rho_{0,1}$ and the PDM $R$ are presented side by side for comparison. 
The close agreement between them demonstrates the effectiveness of the virtual creation method. 
Additional tomography results can be found in Appendix~\ref{app:tomoexp}.

We also present the eigenvalues of PDMs, which include two sets of experiments:
(i) Setting $p = 1$ and varying the channel parameter $\theta$ from 0 to $\frac{\pi}{2}$, we experimentally obtain the corresponding PDM and calculate the associated eigenvalues $E_i$ of $R$, with results shown in Fig.~\ref{fig:res}(b).
(ii) Setting $\theta = \pi/6$ and varying $p$ from 0 to 1, the results are presented in Fig.~\ref{fig:res}(c). 
The experimental results are in good agreement with the theoretical predictions, where the negative eigenvalues illustrate the temporal quantum correlation.

We then applied the randomized measurement protocol to estimate $\Tr(R^2)$ and use it to certify the temporal quantum correlation. 
First, we fixed the state parameter at $p = 1$ and varied $\theta$ from 0 to $\frac{\pi}{2}$. 
The results, shown in Fig.~\ref{fig:res}(d), indicate temporal correlation for all parameters except $\theta = \pi/2$, as the estimated values are greater than one. 
The partial swap channel $e^{-i\theta S}$ reduces to the identity operation when $\theta = 0$. 
At this point, $\Tr(R^2)$ reaches its maximum value of 1.5, indicating that the final state of the system is entirely determined by its initial state. 
As $\theta$ increases, the SWAP component in the $e^{-i\theta S}$ operation becomes more pronounced, leading to a decrease in the correlation between the input and output states. 
When $\theta = \frac{\pi}{2}$, the channel acts as a full replacement channel, eliminating causality between the input and output states, resulting in $\Tr(R^2)$ being around 1.

Moreover, we fixed $\theta$ at $\pi/6$ while varying the parameter $p$ from 0 to 1 to quantify how the strength of causation depends on the initial state, with experimental results depicted in Fig.~\ref{fig:res}(e). 
As the probability amplitude $p$ increases, the measured $\Tr(R^2)$ also grows. 
All these experimental results align with theoretical expectations, demonstrating the effectiveness of our protocol.
By comparing Fig.~\ref{fig:res}(c) and Fig.~\ref{fig:res}(e), we observe that $R$ has negative eigenvalues at the point where $p = 0$, while $\Tr(R^2)$ fails to detect this negativity. 
This indicates the limitation of using only the second-order moment to detect quantum temporal correlations. 
It is worth exploring the use of higher-order moments to enhance the capabilities of our protocol~\cite{yu2021optimal}.

\emph{\bfseries Discussion.}---Since negativity in the PDM is a key signal for quantum temporal correlations, we employed quasi-probability decomposition and randomized measurements to estimate the second moment of the PDM, thereby assessing the negativity and certifying temporal quantum correlations. 
Our results naturally inspire more efficient means in temporal quantum correlation detection~\cite{emary2013leggett,vitagliano2023leggett}, quantum channel capacity~\cite{Pisarczyk2019causal} and quantum causal inference~\cite{liu2023quantum}.  
Moreover, a key finding of our study is the proficiency of NMR systems in measuring the diagonal elements of density matrices. 
It is therefore valuable to investigate other applications that could benefit from and leverage this unique capability of NMR systems.

%To achieve this, we designed a high non-trivial quantum circuit that virtually prepares the PDM within a single time slice. We then used the randomized measurement toolbox to estimate this moment and analyzed the sample complexity, demonstrating that the protocol is particularly suitable for systems with ensemble-average measurements. Finally, we experimentally implemented our protocol on an NMR platform, achieving significant agreement with our theoretical predictions.

\begin{acknowledgments}
We appreciate the valuable discussions with Andreas Elben, Richard Kueng, Oscar Dahlsten, and Mile Gu. 
HL, XN, and DL acknowledge the support from the National Key Research and Development Program of China (2019YFA0308100), National Natural Science Foundation of China (12104213,12075110,12204230), Science, Technology and Innovation Commission of Shenzhen Municipality (JCYJ20200109140803865, KQTD20190929173815000),  Guangdong Innovative and Entrepreneurial Research Team Program (2019ZT08C044), and Guangdong Provincial Key Laboratory (2019B121203002).
ZL acknowledges the support from the National Natural Science Foundation of China Grant No.~12174216 and the Innovation Program for Quantum Science and Technology Grant No.~2021ZD0300804 and No.~2021ZD0300702. 
XL is supported by the National Research Foundation, Prime Minister’s Office, Singapore under its Campus for Research Excellence and Technological Enterprise (CREATE) programme.
\end{acknowledgments}

%\bibliography{reference.bib}

%apsrev4-2.bst 2019-01-14 (MD) hand-edited version of apsrev4-1.bst
%Control: key (0)
%Control: author (8) initials jnrlst
%Control: editor formatted (1) identically to author
%Control: production of article title (0) allowed
%Control: page (0) single
%Control: year (1) truncated
%Control: production of eprint (0) enabled
%

%%%%%%%%%%%%%%%%%%%%%%%%%%%%%%%%%%%%%%%%%%%%%

\appendix

\onecolumngrid
\newpage

\section{Random Unitaries}
Here we provide some preliminaries of random unitaries. 
Intuitively, the Haar measure is the uniform distribution over a unitary group that satisfies 
\begin{equation}
\int_{\mathrm{Haar}}f(U)dU = \int_{\mathrm{Haar}}f(UV)dU =\int_{\mathrm{Haar}}f(VU)dU
\end{equation}
for arbitrary unitary $V$ and function $f(\cdot)$.
According to the Schur-Wely duality, we have
\begin{equation}
\Phi_t(X)=\int_{\mathrm{Haar}}U^{\otimes t}X U^{\dagger\otimes t}dU=\sum_{\pi,\tau\in\mc{S}_t}C_{\pi,\tau}\Tr(\hat{W}_\pi X)\hat{W}_\tau,
\end{equation}
where $\pi$ and $\tau$ are elements in the $t$-th order permutation group $\mc{S}_t$, $\hat{W}_\pi$ is the permutation operator associating with $\pi$, $C_{\pi,\tau}$ is the element of Weingarten matrix. 
In addition to the Haar measure distribution, averaging over some other unitary sets can give us the same result.
Specifically, if a unitary set $\mathcal{E}_t$ is a unitary $t$-design, then 
\begin{equation}
\int_{\mathrm{Haar}}U^{\otimes t^\prime}X U^{\dagger\otimes t^\prime}dU=\frac{1}{\norm{\mathcal{E}_t}}\sum_{U\in\mathcal{E}_t}U^{\otimes t^\prime}X U^{\dagger\otimes t^{\prime}}
\end{equation}
holds for all $t^{\prime}\le t$ and $X$.

When $t=2$, which is the case that relates to our protocol, we have
\begin{equation}\label{eq:second_twirling}
\Phi_2(X)=\frac{1}{d^2-1}\left(\Tr(X)\mb{I}-\frac{1}{d}\Tr(X)S-\frac{1}{d}\Tr(SX)\mb{I}+\Tr(SX)S\right),
\end{equation}
where $X$ is a $d^2\times d^2$ matrix, and $S$ and $\mb{I}$ are the SWAP and identity operators, which correspond to the two elements of $\mc{S}_2$. When setting $X=\sum_{s,s^\prime}X(s,s^\prime)\ketbra{s,s^\prime}{s,s^\prime}$, where $X(s,s^\prime)=-(-d)^{\delta_{s,s^\prime}}$, we have $\Phi_2(X)=S$.

\section{Circuit Analysis}\label{app:circuit}

In this section, we review a closed form of the 2-time pseudo density matrix (PDM) and give a detailed analysis of the tensor network presentation for preparing the virtual PDM.

The definition of the 2-time PDM is
\begin{equation}
R =\frac{1}{4^{n}}\sum^{4^n-1}_{i,j=0} \expval{\sigma_i\sigma_j}\sigma_i\otimes\sigma_j. 
\end{equation}
The key to constructing the PDM is to obtain the 2-time correlators $\langle \sigma_i\sigma_j\rangle$. The measurement scheme is crucial since measurements at the earlier time influence the quantum system at a later time. We call measurements that project the quantum state to the $\pm 1$ eigenspace of a $n$-qubit Pauli observable $\sigma_i$ the coarse-grained measurements. If we implement the coarse-grained measurements at each time,  then a closed form of the PDM can be written as~\cite{liu2023quantum}
\begin{align}
    R=\frac{1}{2}\left[ \Lambda_{\mc{C}} (\rho\otimes\mb{I}) + (\rho\otimes\mb{I}) \Lambda_{\mc{C}}\right],
\end{align}
where $\Lambda_{\mc{C}}$ denotes Choi–Jamiołkowski (CJ) isomorphism of the channel $\mc{C}$, which is defined as
\begin{align}
    \Lambda_{\mc{C}}=\sum_{i,j} \ket{i}\bra{j} \otimes \mc{C}( \ketbra{j}{i}).
\end{align}
We call $\Lambda_{\mc{C}}$ the CJ matrix of the channel $\mc{C}$. The action of a channel $\mc{C}$ on a quantum state $\rho$ can be equivalently given by
\begin{equation}
\mathcal{C}(\rho)=\Tr_1[\Lambda_{\mathcal{C}}(\rho\otimes\mathbb{I})].
\end{equation}
To affiliate our analysis, we re-express the PDM and $\Lambda_{\mc{C}}$ in the tensor network representation by Fig.~\ref{fig:PDM}. 
\begin{figure}[htbp]
\centering
\includegraphics[width=0.7\textwidth]{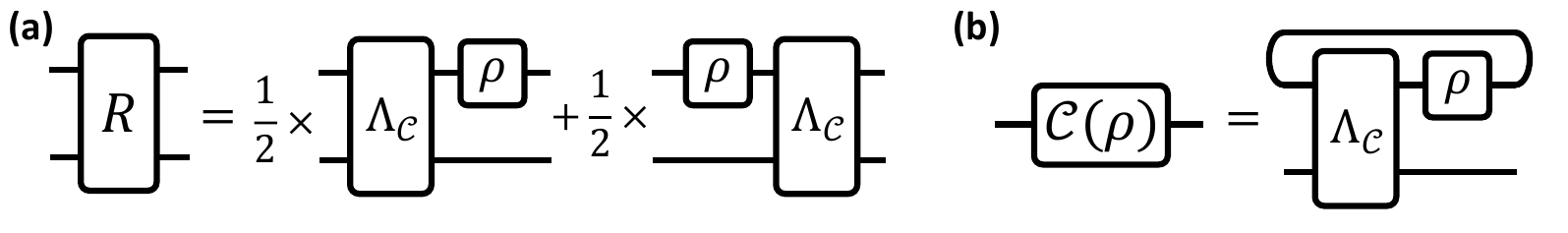}
\caption{(a) The tensor network representation of PDM using density matrix $\rho$ and CJ matrix $\Lambda_{\mathcal{C}}$. (b) The action of a channel is represented using the CJ matrix.}
\label{fig:PDM}
\end{figure}

The negativity of the PDM is the key indicator for temporal correlation. 
Based on our Observable 1, one can certify negativity in PDM via Tr$(R^2)$. Here we combine the randomized measurement technique and the circuit shown in Fig.~\ref{fig:circuit_app} to measure the moments of $R$, like Tr$(R^2)$. 
In addition to the channel $\mc{C}$ and the state $\rho$ which consist of the PDM, this circuit also contains an ancilla qubit, a maximally mixed state, the controlled SWAP operation, the random unitary evolution, and the computational basis measurements at the end. 
To show that this circuit can be used to measure the moment, we first need to analyze how the whole state evolves under this circuit.

\begin{figure}[htbp]
\centering
\includegraphics[width=0.3\textwidth]{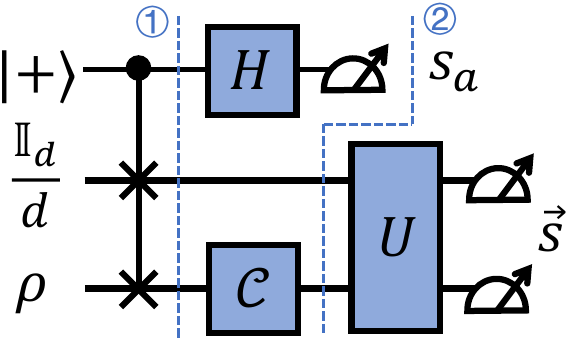}
\caption{Circuit for measuring the moment of PDM.}
\label{fig:circuit_app}
\end{figure}

At the first dashed line, the whole state evolves into 
\begin{eqnarray}
\begin{aligned}
&\frac{1}{2d}\left[  \ketbra{0}{0} \otimes (\mathbb{I} \otimes \rho)  + \ketbra{0}{1} \otimes (\mathbb{I} \otimes \rho) S + \ketbra{1}{0} \otimes S (\mathbb{I} \otimes \rho) + \ketbra{1}{1} \otimes S (\mathbb{I} \otimes \rho) S \right]\\
=&\frac{1}{2d}\left(
\ketbra{0}{0}\otimes
\begin{tabular}{c}
\includegraphics[scale=0.5]{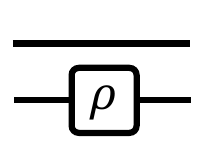}
\end{tabular}
+
\ketbra{0}{1}\otimes
\begin{tabular}{c}
\includegraphics[scale=0.5]{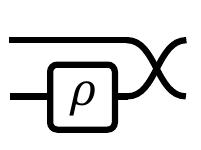}
\end{tabular}
+
\ketbra{1}{0}\otimes
\begin{tabular}{c}
\includegraphics[scale=0.5]{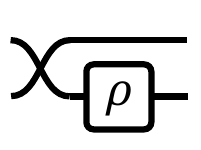}
\end{tabular}
+
\ketbra{1}{1}\otimes
\begin{tabular}{c}
\includegraphics[scale=0.5]{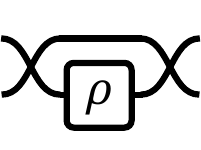}
\end{tabular}
\right)
\end{aligned}
\end{eqnarray}
After applying the channel $\mc{C}$, the whole state evolves into
\begin{eqnarray}
\begin{aligned}
&\frac{1}{2d}\left(
\ketbra{0}{0}\otimes
\begin{tabular}{c}
\includegraphics[scale=0.45]{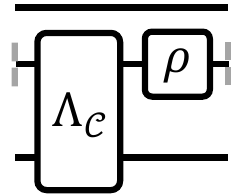}
\end{tabular}
+
\ketbra{0}{1}\otimes
\begin{tabular}{c}
\includegraphics[scale=0.45]{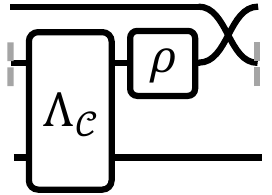}
\end{tabular}
+
\ketbra{1}{0}\otimes
\begin{tabular}{c}
\includegraphics[scale=0.45]{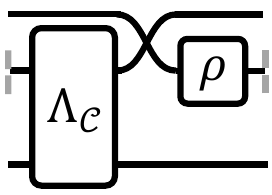}
\end{tabular}
+
\ketbra{1}{1}\otimes
\begin{tabular}{c}
\includegraphics[scale=0.45]{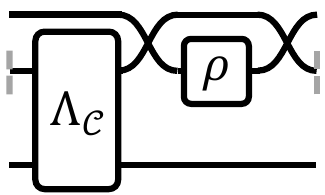}
\end{tabular}
\right)\\
=&\frac{1}{2d}\left(
\ketbra{0}{0}\otimes
\begin{tabular}{c}
\includegraphics[scale=0.45]{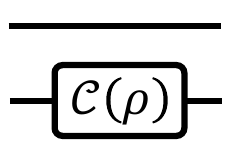}
\end{tabular}
+
\ketbra{0}{1}\otimes
\begin{tabular}{c}
\includegraphics[scale=0.45]{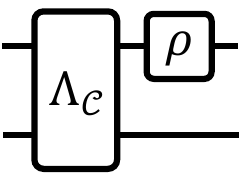}
\end{tabular}
+
\ketbra{1}{0}\otimes
\begin{tabular}{c}
\includegraphics[scale=0.45]{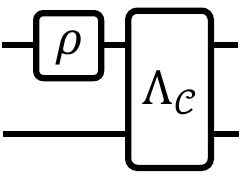}
\end{tabular}
+
\ketbra{1}{1}\otimes
\begin{tabular}{c}
\includegraphics[scale=0.45]{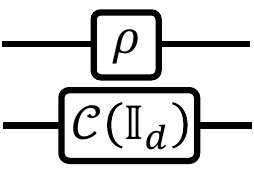}
\end{tabular}
\right),
\end{aligned}
\end{eqnarray}
where the gray dashed lines represent the trace functions.

At the second dashed line in Fig.~\ref{fig:circuit_app}, the control qubit is measured in the Pauli-$X$ basis.
Define a new matrix $Q=\frac{1}{2}\left[\mb{I}_d\otimes\mathcal{C}(\rho)+\rho\otimes\mathcal{C}(\mb{I}_d)\right]$.
Thus, the joint measurement probability distribution at the end of this circuit is
\begin{equation}\label{eq:probs}
\begin{aligned}
&\mathrm{Pr}(s_a=0,\vec{s}|U)=\frac{1}{2d}\bra{\vec{s}}U(Q+R)U^\dagger\ket{\vec{s}}\\
&\mathrm{Pr}(s_a=1,\vec{s}|U)=\frac{1}{2d}\bra{\vec{s}}U(Q-R)U^\dagger\ket{\vec{s}},
\end{aligned}
\end{equation}
where $U$ is a global random unitary acting on two systems, $s_a$ and $\vec{s}$ represent the measurement results of control qubit and the other two systems.
Here, $s_a=0$ corresponds to the control qubit collapsing to $\ket{+}$ and vice versa.
Thus, one can see that the probability distribution of the measurement results contains the information of PDM $R$. 
By taking the difference between these two probabilities, we can equivalently perform randomized measurement on the PDM $R$ and get $\bra{\vec{s}}URU^\dagger\ket{\vec{s}}$.

\section{Randomized Measurements}\label{app:RM}
We summarize our algorithm below:
\begin{algorithm}[H]
\caption{$\Tr(R^2)$ Estimation}\label{algo:causality}
\begin{algorithmic}[1]
\Require
The input state $\rho$, the quantum channel $\mathcal{C}$, a unitary ensemble $\mathcal{E}_U$ which is at least unitary two design.
\Ensure
An unbiased estimation $\hat{M}_2$ of $\Tr(R^2)$.

\For{$i= 1 \text{\textbf{ to }} N_U$} 
\State Randomly select a unitary evolution $U$ from $\mathcal{E}_U$.
\For{$j= 1 \text{\textbf{ to }} N_M$} 
\State Run the circuit shown in Fig.~\ref{fig:overview}(b) and record the measurement result $s_a^j$ and $\vec{s}^j$, which indicate measurement results of the control qubit and other qubits after the random unitary evolution.
\EndFor \,
Use the data $\{s_a^j,\vec{s}^j\}_{j=1}^{N_M}$ and Eq.~\eqref{eq:estimator} to calculate the estimator $\hat{M}_2^U$.
\EndFor
\State Calculate the average of all $\hat{M}_2^U$ to get the final estimator $\hat{M}_2$.
\end{algorithmic}
\end{algorithm}

Combining Eq.~\eqref{eq:second_twirling} and Eq.~\eqref{eq:probs}, we can now construct the estimator for $\Tr(R^2)$. 
Essentially, one applies $N_U$ independent unitaries in the circuit of Fig.~\ref{fig:circuit_app} and measures $N_M$ times under a single unitary.
After collecting all the data, the measurement results acquired with different unitaries will be processed independently.
Assume that the data $\{(s_a^i,\vec{s}^i)\}_{i=1}^{N_M}$ is collected with the same random unitary, the estimator is
\begin{equation}\label{eq:estimator_app}
\hat{M}_2^U = \frac{d^2}{N_M(N_M-1)}\sum_{i\neq j}(-1)^{s_a^i+s_a^j}X(\vec{s}^i,\vec{s}^j),
\end{equation}
where $X(s,s^\prime)=-(-d^2)^{\delta_{s,s^\prime}}$.
Here we choose $d^2$ instead of $d$ because the dimensions of PDM and the system being measured are all $d^2$.
Then, the final estimator is obtained by averaging over all estimators obtained from different unitaries.

The unbiasedness of this estimator can be verified using random unitary theory. 
Considering the independence of summation terms in Eq.~\ref{eq:estimator_app}, we have
\begin{equation}
\begin{aligned}
\mb{E}[\hat{M}_2^U] &= d^2\mb{E}_{U,s_a,s_a^\prime,\vec{s},\vec{s}^\prime}(-1)^{s_a+s_a^\prime}X(\vec{s},\vec{s}^\prime)\\
&= d^2\mb{E}_U\sum_{s_a,s_a^\prime,\vec{s},\vec{s}^\prime}(-1)^{s_a+s_a^\prime}X(\vec{s},\vec{s}^\prime) \times \mathrm{Pr}(s_a,\vec{s}|U)\times\mathrm{Pr}(s_a^\prime,\vec{s}^\prime|U)\\
&= d^2\mb{E}_U\sum_{\vec{s},\vec{s}^\prime}X(\vec{s},\vec{s}^\prime) \times \left[\mathrm{Pr}(s_a=0,\vec{s}|U)-\mathrm{Pr}(s_a=1,\vec{s}|U)\right] \times \left[\mathrm{Pr}(s_a^\prime=0,\vec{s}^\prime|U)-\mathrm{Pr}(s_a^\prime=1,\vec{s}^\prime|U)\right]\\
&= \mb{E}_U\sum_{\vec{s},\vec{s}^\prime}X(\vec{s},\vec{s}^\prime) \times \bra{\vec{s}}URU^\dagger\ket{\vec{s}} \times \bra{\vec{s}^\prime}URU^\dagger\ket{\vec{s}^\prime}\\
&= \mb{E}_U\Tr\left[R^{\otimes 2}U^{\dagger\otimes 2}XU^{\otimes 2}\right]\\
&= \Tr(R^2),
\end{aligned}
\end{equation}
where $X=\sum_{\vec{s},\vec{s}^\prime}-(-d^2)^{\delta_{\vec{s},\vec{s}^\prime}}\ketbra{\vec{s},\vec{s}^\prime}{\vec{s},\vec{s}^\prime}$, the last equality is because $\mb{E}_UU^{\otimes 2}XU^{\dagger\otimes 2}=S$ and $\Tr(S\sigma^{\otimes 2})=\Tr(\sigma^2)$.

\section{Properties of PDM}\label{app:PDM_purity}
To benefit our derivation for sample complexity, we need to have some properties of PDM.
\begin{lemma}
For arbitrary quantum channel $\mathcal{C}:\mathcal{H}_d\to\mathcal{H}_d$ and input state $\rho\in D(\mathcal{H}_d)$, we have $\Tr(R^2)\le \mathcal{O}(d)$, where the equal sign is reached by the unitary channel and pure input state.
\end{lemma}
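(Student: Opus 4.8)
The plan is to work from the closed-form expression valid for coarse-grained measurements, $R=\tfrac12[\Lambda_{\mathcal{C}}(\rho\otimes\mathbb{I})+(\rho\otimes\mathbb{I})\Lambda_{\mathcal{C}}]$, and to reduce $\Tr(R^2)$ to a trace against the positive-semidefinite Choi matrix of $\mathcal{C}$, where the trace-preservation constraint can be applied directly. First I would strip off the symmetrization: writing $X=(\rho\otimes\mathbb{I})\Lambda_{\mathcal{C}}$ so that $R=\tfrac12(X+X^\dagger)$, the Hilbert--Schmidt triangle inequality gives $\|R\|_2\le\tfrac12(\|X\|_2+\|X^\dagger\|_2)=\|X\|_2$; hence, using Hermiticity of $\Lambda_{\mathcal{C}}$ and $\rho$ together with cyclicity of the trace,
\[
\Tr(R^2)\le\Tr(X^\dagger X)=\Tr(\Lambda_{\mathcal{C}}^{2}(\rho^{2}\otimes\mathbb{I})).
\]

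Next I would pass from the Jamio\l{}kowski operator $\Lambda_{\mathcal{C}}$, which is in general not positive, to the standard Choi matrix $J:=\Lambda_{\mathcal{C}}^{T_1}=\sum_{ij}\ketbra{i}{j}\otimes\mathcal{C}(\ketbra{i}{j})\ge0$, which obeys $\Tr J=d$ and $\Tr_2 J=\mathbb{I}$ (the latter being exactly trace preservation of $\mathcal{C}$). Using the partial-transpose relations $\Tr(A^{T_1}B)=\Tr(A\,B^{T_1})$, $(AB)^{T_1}=B^{T_1}A^{T_1}$ and $(J^{T_1})^{T_1}=J$, one rewrites the right-hand side as
\[
\Tr(\Lambda_{\mathcal{C}}^{2}(\rho^{2}\otimes\mathbb{I}))=\Tr(J^{2}((\rho^{T})^{2}\otimes\mathbb{I})).
\]
Since $J\ge0$ and $\|J\|_\infty\le\Tr J=d$, the operator inequality $J^{2}\le dJ$ holds; pairing against the positive operator $(\rho^{T})^{2}\otimes\mathbb{I}$ and using $\Tr_2 J=\mathbb{I}$ then gives
\[
\Tr(J^{2}((\rho^{T})^{2}\otimes\mathbb{I}))\le d\,\Tr(J((\rho^{T})^{2}\otimes\mathbb{I}))=d\,\Tr((\rho^{T})^{2})=d\,\Tr(\rho^{2})\le d.
\]
Chaining the estimates yields $\Tr(R^{2})\le d\,\Tr(\rho^{2})\le d=\mathcal{O}(d)$.

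For the equality/tightness claim I would just evaluate on a unitary channel $\mathcal{C}(\cdot)=U(\cdot)U^\dagger$ and a pure input $\rho=\ketbra{\psi}{\psi}$, where $\Lambda_{\mathcal{C}}=(\mathbb{I}\otimes U)\,\mathbb{S}\,(\mathbb{I}\otimes U^\dagger)$ with $\mathbb{S}$ the swap. A short direct computation --- one gets $\Lambda_{\mathcal{C}}^{2}=\mathbb{I}$ and $[\Lambda_{\mathcal{C}}(\rho\otimes\mathbb{I})]^{2}=\rho\otimes U\rho U^\dagger$ --- then gives $\Tr(R^{2})=\tfrac12(d+1)=\Theta(d)$, so the $\mathcal{O}(d)$ scaling is saturated (up to the constant) precisely for unitary channels on pure states.

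I expect the main obstacle to be this middle step: carrying the partial transposes around consistently while moving from the non-positive operator $\Lambda_{\mathcal{C}}$ to the positive Choi matrix $J$, so that positivity of $J$, the bound $\|J\|_\infty\le d$, and $\Tr_2 J=\mathbb{I}$ can all be brought to bear. Once that reduction is in place, the Hilbert--Schmidt triangle inequality and the unitary-channel example are routine.
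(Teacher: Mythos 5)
Your proof is correct, but it takes a genuinely different route from the paper's. The paper diagonalizes the (positive) Choi matrix $\Lambda_{\mathcal{C}}^{T_1}=U_1\Sigma_1U_1^\dagger$ and $\rho\otimes\mathbb{I}=U_2\Sigma_2U_2^\dagger$, shows $\Tr(R^2)$ is a convex function of each spectrum $\Sigma_1,\Sigma_2$ separately, and concludes that the maximum is attained at the extreme points (unitary channel, pure state), where it evaluates to exactly $\tfrac12(d+1)$. Your argument replaces this by the Hilbert--Schmidt triangle inequality $\Tr(R^2)\le\Tr(X^\dagger X)$ for $X=(\rho\otimes\mathbb{I})\Lambda_{\mathcal{C}}$, followed by the operator inequality $J^2\le(\Tr J)\,J=dJ$ for the positive Choi matrix and trace preservation $\Tr_2 J=\mathbb{I}$, giving $\Tr(R^2)\le d\,\Tr(\rho^2)\le d$. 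This is shorter and more elementary, and even exhibits the dependence on the input purity; the price is that it is a factor of roughly $2$ looser and does not show that the unitary channel and pure state actually \emph{maximize} $\Tr(R^2)$ --- only that they achieve $\tfrac12(d+1)=\Theta(d)$, which suffices for the lemma as stated but not for the main text's sharper claim that $\tfrac12(2^n+1)$ is the exact maximum. One technical caveat: the rule $(AB)^{T_1}=B^{T_1}A^{T_1}$ you invoke is false for partial transposes in general (it reverses the order on the first factor but not the second). Your conclusion $\Tr\bigl(\Lambda_{\mathcal{C}}^2(\rho^2\otimes\mathbb{I})\bigr)=\Tr\bigl(J^2((\rho^{T})^2\otimes\mathbb{I})\bigr)$ is nevertheless correct, because the sandwiched operator $\rho^2\otimes\mathbb{I}$ acts trivially on the second tensor factor, so $\bigl(\Lambda_{\mathcal{C}}(\rho^2\otimes\mathbb{I})\bigr)^{T_1}=((\rho^T)^2\otimes\mathbb{I})J$ does hold and the identity follows from $\Tr(A^{T_1}B^{T_1})=\Tr(AB)$; you should justify the step this way rather than by the invalid general rule. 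The final evaluation on the unitary-channel/pure-state example, giving $\tfrac12(d+1)$, agrees with the paper.
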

\begin{proof}
It is easy to prove that $\Tr(R^2)=\Tr[(R^{\mathrm{T}_1})^2]$ where
\begin{equation}
R^{\mathrm{T}_1}=\frac{1}{2}\left(\Lambda_{\mathcal{C}}^{\mathrm{T}_1}(\rho\otimes\mathbb{I})+(\rho\otimes\mathbb{I})\Lambda_{\mathcal{C}}^{\mathrm{T}_1}\right)
\end{equation}
and $\mathrm{T}_1$ represents partial transposition of indices that contract with indices from $\rho$.
According to the Choi–Jamiołkowski isomorphism, $\Lambda_{\mathcal{C}}^{\mathrm{T}_1}$ is now a positive semi-definite matrix.
We then take a spectral decomposition of $\Lambda_{\mathcal{C}}^{\mathrm{T}_1}=U_1\sum_1U_1^\dagger$ and $\rho\otimes\mathbb{I}=U_2\sum_2 U_2^\dagger$, where $\sum_1$ and $\sum_2$ are positive semi-definite diagonal matrices.
Then we have
\begin{equation}
\Tr(R^2)=\frac{1}{2}\left[\Tr(U_1\Sigma_1U_1^\dagger U_2\Sigma_2U_2^\dagger U_1\Sigma_1U_1^\dagger U_2\Sigma_2U_2^\dagger)+\Tr(U_1\Sigma_1^2U_1^\dagger U_2\Sigma_2^2U_2^\dagger)\right].
\end{equation}
Defining $V=U_1^\dagger U_2$, we have
\begin{equation}
\Tr(R^2)=\frac{1}{2}\left[ \Tr(\Sigma_1V\Sigma_2V^\dagger\Sigma_1V\Sigma_2V^\dagger)+\Tr(\Sigma_1^2V\Sigma_2^2V^\dagger) \right].
\end{equation}
Defining $B=V\Sigma_2V^\dagger$, we can simplify the above expression into
\begin{equation}
\Tr(R^2)=\frac{1}{2}\left[\Tr(\Sigma_1B\Sigma_1B)+\Tr(\Sigma_1^2B^2)\right].
\end{equation}

Next we are going to prove that $\Tr(R^2)$ is a convex function of $\Sigma_1$.
We first expand matrices $\Sigma=\sum_i\lambda_i\ketbra{i}{i}$ and $B=\sum_{i,j}b_{i,j}\ketbra{i}{j}$.
Then
\begin{equation}
f(\Sigma_1)=\Tr(R^2)=\frac{1}{2}\left(\sum_{i,j}\lambda_ib_{i,j}\lambda_jb_{j,i}+\sum_{i,j}\lambda_i^2b_{i,j}b_{j,i}\right)=\frac{1}{2}\sum_{i,j}b_{i,j}b_{j,i}(\lambda_i^2+\lambda_i\lambda_j).
\end{equation}
Defining $X=\sum_ix_i\ketbra{i}{i}$ and $Y=\sum_jx_j\ketbra{j}{j}$ with $x_i,y_i\ge0$, we have
\begin{equation}
f(\theta X+(1-\theta)Y)=\frac{1}{2}\sum_{i,j}b_{i,j}b_{j,i}\left[(\theta x_i+(1-\theta)y_i)^2+(\theta x_i+(1-\theta)y_i)(\theta x_j+(1-\theta)y_j)\right]
\end{equation}
and 
\begin{equation}
\theta f(X)+(1-\theta) f(Y)=\frac{1}{2}\sum_{i,j}b_{i,j}b_{j,i}\left(\theta x_i^2+\theta x_ix_j+(1-\theta) y_i^2+(1-\theta)y_iy_j\right).
\end{equation}
The difference between them is
\begin{equation}
\begin{aligned}
&\frac{1}{2}\theta(1-\theta)\sum_{i,j}b_{i,j}b_{j,i}\left[(x_i^2+x_ix_j+y_i^2+y_iy_j)-(2x_iy_i+x_iy_j+x_jy_i)\right]\\
=&\frac{1}{2}\theta(1-\theta)\sum_{i,j}b_{i,j}b_{j,i}\left[(x_i-y_i)^2+(x_i-y_i)(x_j-y_j)\right]\\
=&\frac{1}{4}\theta(1-\theta)\sum_{i,j}b_{i,j}b_{j,i}\left[(x_i-y_i)^2+2(x_i-y_i)(x_j-y_j)+(x_j-y_j)^2\right]\\
=&\frac{1}{4}\theta(1-\theta)\sum_{i,j}b_{i,j}b_{j,i}\left[(x_i-y_i)+(x_j-y_j)\right]^2\ge 0,
\end{aligned}
\end{equation}
which shows that $\Tr(R^2)$ is a convex function of $\Sigma_1$.
Following the same logic, one can similarly prove that $\Tr(R^2)$ is also a convex function of $\Sigma_2$.
As $\Sigma_1$ and $\Sigma_2$ represent the eigenvalues of $\Lambda_{\mathcal{C}}^{\mathrm{T}_1}$ and $\rho\otimes\mathbb{I}$, they should contain at least one and $d$ positive nonzero elements, respectively.
Therefore, the convexity means that the maximal value of $\Tr(R^2)$ can be obtained when $\mathcal{C}$ is a unitary channel and $\rho$ is a pure state.

When $\mathcal{C}=\mathcal{U}$ is a unitary channel and $\rho$ is a pure state, $R^{\mathrm{T}_1}$ can be represented as
\begin{equation}
R^{\mathrm{T}_1}=\frac{1}{2}\left(\ketbra{\Psi_\mathcal{U}}{\Psi_\mathcal{U}}(\psi\otimes\mathbb{I})+(\psi\otimes\mathbb{I})\ketbra{\Psi_\mathcal{U}}{\Psi_\mathcal{U}}\right),
\end{equation}
where $\ket{\Psi_\mathcal{U}}$ is a $d^2$-dimensional unnormalized pure state satisfying $\braket{\Psi_\mathcal{U}}{\Psi_\mathcal{U}}=d$.
Then, we have 
\begin{equation}
\begin{aligned}
\Tr(R^2)=&\frac{1}{2}\left[d\bra{\Psi_\mathcal{U}}(\psi\otimes\mathbb{I})\ket{\Psi_\mathcal{U}}+\bra{\Psi_\mathcal{U}}(\psi\otimes\mathbb{I})\ket{\Psi_\mathcal{U}}^2\right]\\
=&\frac{1}{2}\left[d\Tr(U^\dagger\psi U)+\Tr(U^\dagger\psi U)^2\right]\\
=&\frac{1}{2}(d+1)=\mathcal{O}(d)
\end{aligned}
\end{equation}

\end{proof}

Note that this proof cannot be directly adopted to bound $\Tr(R^{2k})$ as $\Tr[(R^{\mathrm{T}_1})^{2k}]=\Tr(R^{2t})$ only holds when $k=1$.

\section{Variance Analysis}\label{app:variance}

Now we need to derive the sample complexity of this protocol. 
Specifically, if we want to measure $\Tr(R^2)$ to $\epsilon$ accuracy, how many experiments we need to perform? 
We consider the case that $d,N_M\gg 1$. By definition,
\begin{equation}
\mathrm{Var}(\hat{M}_2^U)=\mb{E}[(\hat{M}_2^U)^2]-\mb{E}(\hat{M}_2^U)^2.
\end{equation}
Substituting the estimator Eq.~\eqref{eq:estimator_app}, we have
\begin{equation}\label{eq:var_expression}
\mathrm{Var}(\hat{M}_2^U)=\mb{E}\left[\frac{d^4}{N_M^2(N_M-1)^2}\sum_{i\neq j}\sum_{i^\prime\neq j^\prime}(-1)^{s_a^i+s_a^j+s_a^{i^\prime}+s_a^{j^\prime}}X(\vec{s}^i,\vec{s}^j)X(\vec{s}^{i^\prime},\vec{s}^{j^\prime})\right]-\Tr(R^2)^2.
\end{equation}
Expanding the summation according to the relation between $(i,j)$ and $(i^\prime,j^\prime)$, we have
\begin{equation}
\begin{aligned}
&\mb{E}\left[\sum_{i\neq j}\sum_{i^\prime\neq j^\prime}(-1)^{s_a^i+s_a^j+s_a^{i^\prime}+s_a^{j^\prime}}X(\vec{s}^i,\vec{s}^j)X(\vec{s}^{i^\prime},\vec{s}^{j^\prime})\right]\\
=&\mb{E}\left[\sum_{i\neq j}X(\vec{s}^i,\vec{s}^j)^2+\sum_{i\neq j\neq k}(-1)^{s_a^i+s_a^k}X(\vec{s}^i,\vec{s}^j)X(\vec{s}^j,\vec{s}^k)+\sum_{i\neq j\neq k\neq l}(-1)^{s_a^i+s_a^j+s_a^k+s_a^l}X(\vec{s}^i,\vec{s}^j)X(\vec{s}^k,\vec{s}^l)\right]\\
=& 2N_M(N_M-1)\mb{E}\left[X(\vec{s},\vec{s}^\prime)^2\right]+4N_M(N_M-1)(N_M-2)\mb{E}\left[(-1)^{s_a+s_a^{\prime\prime}}X(\vec{s},\vec{s}^\prime)X(\vec{s}^\prime,\vec{s}^{\prime\prime})\right]\\
&+N_M(N_M-1)(N_M-2)(N_M-3)\mb{E}\left[(-1)^{s_a+s_a^\prime+s_a^{\prime\prime}+s_a^{\prime\prime\prime}}X(\vec{s},\vec{s}^\prime)X(\vec{s}^{\prime\prime},\vec{s}^{\prime\prime\prime})\right].
\end{aligned}
\end{equation}

We now need to specify the three terms one by one. Firstly,
\begin{equation}
\begin{aligned}
\mb{E}\left[X(\vec{s},\vec{s}^\prime)^2\right]=&\mb{E}_U\left\{\sum_{\vec{s},\vec{s}^\prime}\left[\mathrm{Pr}(s_a=0,\vec{s}|U)+\mathrm{Pr}(s_a=1,\vec{s}|U)\right]\left[\mathrm{Pr}(s_a^\prime=0,\vec{s}^\prime|U)+\mathrm{Pr}(s_a^\prime=1,\vec{s}^\prime|U)\right]X(\vec{s},\vec{s}^\prime)^2\right\}\\
=&\frac{1}{d^2}\mb{E}_U\left\{\sum_{\vec{s},\vec{s}^\prime}\bra{\vec{s}}UQU^\dagger\ket{\vec{s}}\bra{\vec{s}^\prime}UQU^\dagger\ket{\vec{s}^\prime}X(\vec{s},\vec{s}^\prime)^2\right\}\\
=&\frac{1}{d^2}\mb{E}_U\Tr\left[U^{\dagger\otimes 2}X^2U^{\otimes 2}Q^{\otimes 2}\right]\\
=&\frac{1}{d^2}\Tr\left\{[d^2\mb{I}+(d^2-1)S]Q^{\otimes 2}\right\}\\
=&\Tr(Q)^2+\frac{d^2-1}{d^2}\Tr(Q^2),
\end{aligned}
\end{equation}
where the fourth equal sign is because $\mb{E}_U(U^{\dagger\otimes 2}X^2U^{\otimes 2})=d^2\mb{I}+(d^2-1)S$, which can be verified using the random unitary theory. 
By definition, $Q=\frac{1}{2}\left[\mb{I}_d\otimes\mathcal{C}(\rho)+\rho\otimes\mathcal{C}(\mb{I}_d)\right]$.
Thus, we have $\Tr(Q)=d$ and $\Tr(Q^2)=\frac{d^2}{4}\Tr\left[\mathcal{C}(\frac{\mathbb{I}_d}{d})^2\right]\Tr(\rho^2)+\frac{d}{4}\Tr\left[\mc{C}(\rho)^2\right]+\frac{d}{2}\Tr\left[\mathcal{C}(\rho)\mathcal{C}(\frac{\mathbb{I}_d}{d})\right]$.
Substituting this into the calculation of the expectation, we get the first term
\begin{equation}\label{eq:var_term1}
\mb{E}\left[X(\vec{s},\vec{s}^\prime)^2\right]=d^2+\frac{d^2-1}{d^2}\left\{\frac{d^2}{4}\Tr\left[\mathcal{C}(\frac{\mathbb{I}_d}{d})^2\right]\Tr(\rho^2)+\frac{d}{4}\Tr\left[\mc{C}(\rho)^2\right]+\frac{d}{2}\Tr\left[\mathcal{C}(\rho)\mathcal{C}(\frac{\mathbb{I}_d}{d})\right]\right\}\le\frac{5}{4}d^2+\mc{O}(d).
\end{equation}

Similarly, for the second term, we have
\begin{equation}
\begin{aligned}
&\mb{E}\left[(-1)^{s_a+s_a^{\prime\prime}}X(\vec{s},\vec{s}^\prime)X(\vec{s}^\prime,\vec{s}^{\prime\prime})\right]\\
=&\frac{1}{d^3}\mb{E}_U\left\{\sum_{\vec{s},\vec{s}^\prime,\vec{s}^{\prime\prime}}\bra{\vec{s}}URU^\dagger\ket{\vec{s}}\bra{\vec{s}^\prime}UQU^\dagger\ket{\vec{s}^\prime}\bra{\vec{s}^{\prime\prime}}UQU^\dagger\ket{\vec{s}^{\prime\prime}}X(\vec{s},\vec{s}^\prime)X(\vec{s}^\prime,\vec{s}^{\prime\prime})\right\}\\
=&\frac{1}{d^3}\mb{E}_U\Tr\left[U^{\dagger\otimes 3}X_3U^{\otimes 3}(R\otimes Q\otimes R)\right]\\
=&-\frac{1}{d^3(d^2+2)}\left[\Tr(Q)\Tr(R)^2+2\Tr(R)\Tr(QR)\right]+\frac{d^2+1}{d^3(d^2+2)}\left[\Tr(Q)\Tr(R^2)+2\Tr(QR^2)\right],
\end{aligned}
\end{equation}
where $X_3=\sum_{\vec{s},\vec{s}^\prime,\vec{s}^{\prime\prime}}X(\vec{s},\vec{s}^\prime)X(\vec{s}^\prime,\vec{s}^{\prime\prime})\ketbra{\vec{s},\vec{s}^\prime,\vec{s}^{\prime\prime}}{\vec{s},\vec{s}^\prime,\vec{s}^{\prime\prime}}$ and the proof of the last equal sign can be found in Ref.~\cite{liu2022detecting}.
Using facts including $\Tr(Q)=d$, $\Tr(R)=1$, $\Tr(Q^2)\le\mc{O}(d^2)$, $\Tr(R^4)\le\Tr(R^2)^2\le d^2$, $\abs{\Tr(QR)}\le\sqrt{\Tr(Q^2)\Tr(R^2)}\le\mc{O}(d^{3/2})$, and $\Tr(QR^2)\le\Tr(Q)\Tr(R^2)\le\mc{O}(d^2)$, we have 
\begin{equation}\label{eq:var_term2}
\mb{E}\left[(-1)^{s_a+s_a^{\prime\prime}}X(\vec{s},\vec{s}^\prime)X(\vec{s}^\prime,\vec{s}^{\prime\prime})\right]\le\mc{O}\left(\frac{1}{d}\right).
\end{equation}

Combining the conclusions derived in Ref.~\cite{liu2022detecting} and $\Tr(R), \Tr(R^2), \Tr(R^4)>0$, we can prove that the next term is 
\begin{equation}
\begin{aligned}
&\mb{E}\left[(-1)^{s_a+s_a^\prime+s_a^{\prime\prime}+s_a^{\prime\prime\prime}}X(\vec{s},\vec{s}^\prime)X(\vec{s}^{\prime\prime},\vec{s}^{\prime\prime\prime})\right]\\
=&\frac{1}{d^4}\mb{E}_U\Tr[U^{\dagger\otimes 4}X^{\otimes 2}U^{\otimes 4}R^{\otimes 4}]\\
\le&\frac{2}{d^6(d^2+2)(d^2+3)}\left[1+2\Tr(R^2)\right]+\frac{8(d^2+1)}{d^6(d^2+2)(d^2+3)}\abs{\Tr(R^3)}\\
&+\frac{d^2+1}{d^6(d^2+3)}\left[2\Tr(R^2)^2+2\Tr(R^4)\right]+\frac{d^2(d^2+2)(d^2+3)+2}{d^6(d^2+2)(d^2+3)}\Tr(R^2)^2.
\end{aligned}
\end{equation}
As $\abs{\Tr(R^3)}\le\sqrt{\Tr(R^2)\Tr(R^4)}\le\mathcal{O}(d^{3/2})$, we have
\begin{equation}\label{eq:var_term3}
\mb{E}\left[(-1)^{s_a+s_a^\prime+s_a^{\prime\prime}+s_a^{\prime\prime\prime}}X(\vec{s},\vec{s}^\prime)X(\vec{s}^{\prime\prime},\vec{s}^{\prime\prime\prime})\right]\le\mathcal{O}\left(\frac{1}{d^4}\right)+\frac{\Tr(R^2)^2}{d^4}
\end{equation}

Substituting Eq.~\eqref{eq:var_term1}, Eq.~\eqref{eq:var_term2}, and Eq.~\eqref{eq:var_term3} into Eq.~\eqref{eq:var_expression}, we have
\begin{equation}
\mathrm{Var}(\hat{M}_2^U)\le\mathcal{O}\left(\frac{d^6}{N_M^2}+\frac{d^3}{N_M}+1\right).
\end{equation}
As the final estimator is obtained by averaging over data collected in $N_U$ different unitaries, the total variance is
\begin{equation}
\mathcal{O}\left[\frac{1}{N_U}\left(\frac{d^6}{N_M^2}+\frac{d^3}{N_M}+1\right)\right].
\end{equation}
Thus, according to Chebyshev's inequality, to make sure that $\abs{\hat{M}_2-\Tr(R^2)}\le\epsilon$ with probability at least $1-\delta$, the experiment complexity should satisfy $N_M=\mathcal{O}(d^3)$ and $N_U=\mathcal{O}(\frac{1}{\epsilon^2\delta})$.
The total sample complexity is $N_M\times N_U=\mathcal{O}(\frac{d^3}{\epsilon^2\delta})$

As discussed in the main context, for NMR platform, it is equivalent to the case of $N_M=\infty$ as every computational basis measurement is performed on an ensemble with particle number of the thermodynamic limit.
Then, the total sample complexity is equivalent to the number of different unitaries $N_U=\mathcal{O}(\frac{1}{\epsilon^2\delta})$, which is independent of the system size.

\section{Theoretical model in our experiment}

We introduce the physical process and the corresponding PDM of our experiment. In the main body, the physical process involves a quantum state  $\ket{\psi}=\sqrt{p}\ket{0}+\sqrt{1-p}\ket{1}$
undergoing a partial swap interaction with the environment qubit $\gamma_E= \ket{0}\bra{0}$.  To simplify our analysis, we take $p=1$ in the following.

Consider a system qubit $\rho_S= \ketbra{0}{0}$ interacts with an environment qubit $\gamma_E= \ketbra{0}{0}$ via the partial swap interaction 
$$
V=e^{-i \theta S}= \cos(\theta) \,  \mathbb{I} + i  \sin(\theta) \, S := c \mathbb{I} + i s S,
$$
where $S$ denotes the 2-qubit swap operator and $\theta\in [0,\frac{\pi}{2}]$. 
The effective dynamics of the system can be modeled as a partial replacement channel $\mc{N}$ with the set of Kraus operators given by 
\begin{align}\label{eq:krausop}
\{K_1 = c \mathbb{I} + i  s \ket{0} \bra{0}, K_2= is \ket{0}\bra{1}  \}.
\end{align}
One sees that when $c=0$, the initial state of the system cannot influence its final state, i.e., no temporal correlation from the input to the output. However, the influence from its input to output increases when $c$ increases.

Given the Kraus operators of the channel $\mc{N}$, one can calculate the CJ matrix $\Lambda_{\mc{N}}$. Therefore, the corresponding PDM of the system across two times is given by 
\begin{align}
R= \frac{1}{2}\left[ (\rho_s \otimes \id) \, \Lambda_{\mc{N}} + \Lambda_{\mc{N}} \, (\rho \otimes \id) \right] =
    \begin{pmatrix}
        1 & 0  & 0  &  0  \\
        0 & 0  & \frac{c(c + i s)}{2}  &  0  \\
        0 &  \frac{c(c - i s)}{2} &0  &  0  \\
        0 &  0 &0  &  0  \\
    \end{pmatrix}.
\end{align}
We proceed to obtain
\begin{align}
\Tr(R^2)-1 = \frac{1}{2}c^2.
\end{align}
The quantity $\Tr(R^2)-1$ is equal to 0 when $c=0$, and monotonically increases as $c$ increase.  Hence, $\Tr(R^2)$ correctly characterizes the temporal correlation (causal influence) from the system's input to its output.

\section{Experimental Details}\label{app:exp}
Our experiments were conducted using a nuclear magnetic resonance (NMR) quantum processor, which utilizes the nuclear spins within a molecule to encode qubits. Initially, we provide a detailed characterization of the NMR system, covering aspects such as sample preparation, control mechanisms, and measurement techniques. Subsequently, we will elaborate on the method for creating a pseudo-pure state (PPS). Lastly, we describe the process of quantum state tomography in NMR systems.

\textit{Characterization}--In this experiment, the $^{13}$C-labeled trans-crotonic acid dissolved in $d_6$ acetone is used as a 4-qubit quantum processor. The molecular structure and the relevant parameters is shown in Fig.~\ref{molecular structrue}, where $^{13}$C$_1$ to $^{13}$C$_4$ correspond to qubits Q1 to Q4. The methyl group M and all hydrogen atoms were decoupled throughout all experiments. The total Hamiltonian $\mathcal{H}_\text{tot}$ of this system includes the internal Hamiltonian $\mathcal{H}_\text{int}$ and the control Hamiltonian $\mathcal{H}_\text{con}$ is
\begin{align}
{\mathcal{H}_{{\rm{tot}}}} =&{\mathcal{H}_{{\rm{int}}}} +{\mathcal{H}_{{\rm{con}}}} \\\nonumber
 =&\sum\limits_{i = 1}^4 {\pi {\nu_i}\sigma _z^i} + \sum\limits_{1 \le i < j \le 4}^4 {\frac{\pi }{2}{J_{ij}}\sigma _z^i} \sigma _z^j\\\nonumber
 &-B_1\sum\limits_{i = 1}^4 \gamma_i[\cos(\omega_{rf}t+ \phi)\sigma_x^i+\sin(\omega_{rf}t+ \phi)\sigma_y^i],
\end{align}
where the $\nu_i$ is the chemical shift of the $i$th spin and $J_{ij}$ is the scalar coupling strength between spins $i$-th and $j$-th nuclei. Here, $B_1$, $\omega_{rf}$ and $\phi$ denote the amplitude, frequency and phase of the control pulse, respectively.

\begin{figure}
\centering
\includegraphics[width=0.7\linewidth]{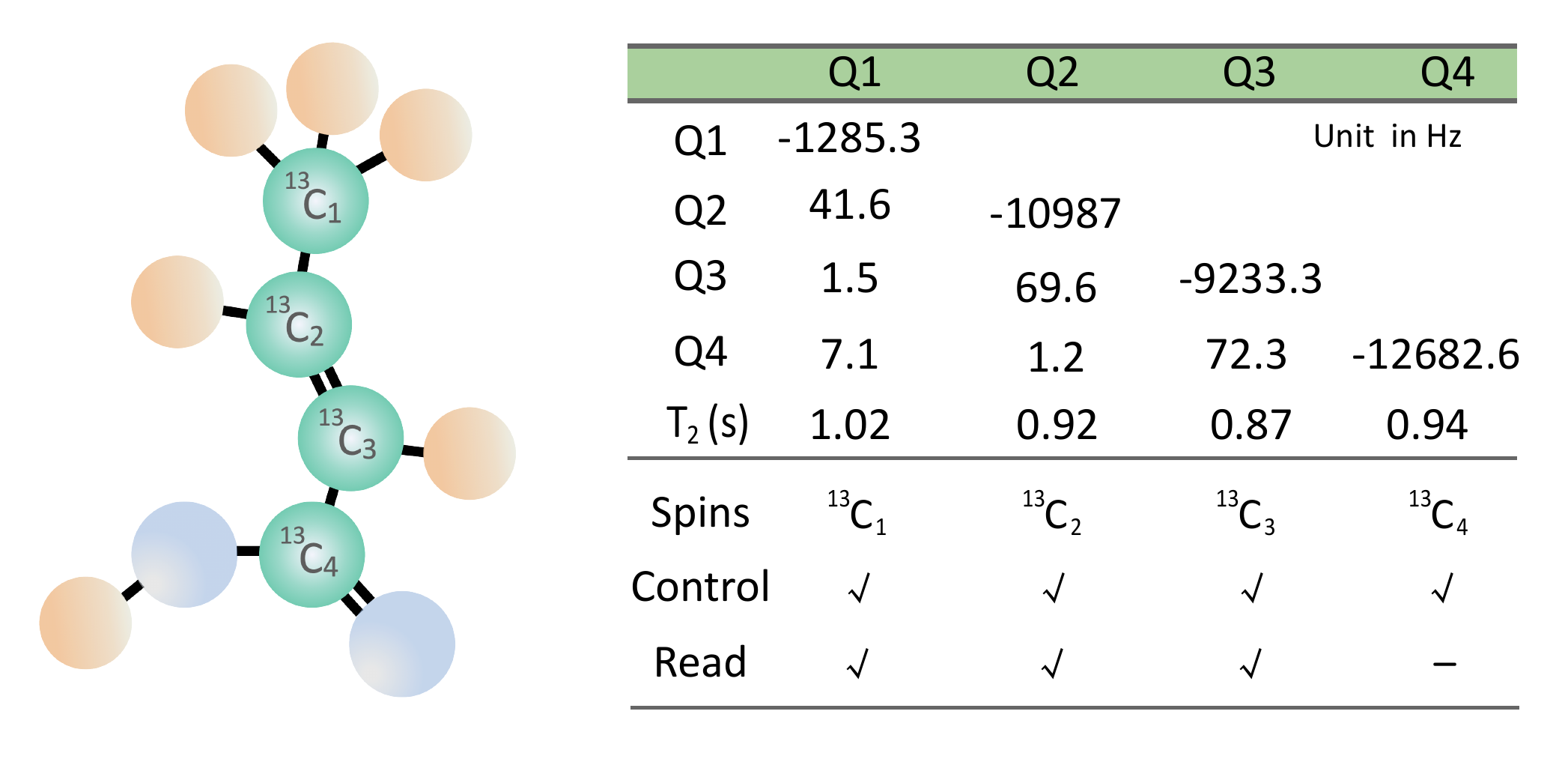}
\caption{The molecular structure and NMR parameters of $^{13}$C-labeled trans-crotonic acid. In the table, the chemical shifts and the scalar coupling constants (in Hz) are listed by the diagonal and off-diagonal numbers, respectively. The relaxation time $T_2$ (in seconds) are shown at the table.}
\label{molecular structrue}
\end{figure}

\textit{Pseudo-pure state preparation.}--At room temperature, the thermal equilibrium state of the four-qubit NMR system is a highly mixed state that described by
\begin{equation}
\rho_{\text{eq}} = \frac{\mathbb{I}}{16} + \epsilon\sum^4_{i=1}\sigma_z^i,
\label{eqstate}
\end{equation}
where $\mathbb{I}$ is the $16 \times 16$ identity matrix, and $\epsilon$, representing polarization, is approximately $10^{-5}$. This state is unsuitable for use as the initial state in quantum computing.
% At room temperature, the thermal equilibrium state of NMR system is a highly mixed state, which is unsuitable for use as the initial state in quantum computing. 
Various initialization methods are available, including the spatial averaging method, line-selective transition method, time averaging method, and cat-state method. In our experiments, we employed the spatial averaging method to initialize the NMR system, applying the pulse sequence illustrated in Fig.\ref{pps&measure}(a). In the circuit diagram, colored rectangles indicate single-qubit rotations performed using rf pulses. The two-qubit gates are achieved through the scalar coupling among different spins combined with shaped pulses. The pulse sequence transforms the equilibrium state described in Eq.~\ref{eqstate} into a pseudo-pure state (PPS), as shown by the equation
\begin{equation}
\rho_{\text{PPS}} = \frac{1-\epsilon'}{16}\mathbb{I} + \epsilon'\op{0000}{0000}.
\label{eq:PPS}
\end{equation}
The dominant component, the identity matrix $\mathbb{I}$, remains constant under any unitary transformation and is undetectable in NMR experiments. This characteristic enables the quantum system to be effectively treated as the pure state $\op{0000}{0000}$, despite its actual mixed nature. In our experimental setup, we combined each segment of the quantum circuit, separated by three gradient pulses, into one unitary operation. We then utilized the optimal-control algorithm to search for the corresponding rf pulse. The shaped pulses used in the experiments had lengths of 3 ms, 20 ms, 15 ms and 15 ms, respectively. All the pulses exhibited fidelities over 99.5$\%$.

\textit{Measurement.}--In the NMR quantum processor, the experimental sample consists not of a single molecule, but rather of a system comprising a large number of identical molecules. Consequently, the measurements performed by the NMR system are ensemble averages. After the operation, the nuclear spins precess around the $B_0$ direction and gradually return to thermal equilibrium. The precessing nuclear spins induce an electrical signal in the $x,y$-plane. Thus, the NMR system can only measure the transverse magnetization vectors, specifically the expectation values of $\sigma_x$ and $\sigma_y$. In a four-qubit NMR quantum processor, the signal of each spin is usually split into 8 peaks due to the couplings between different nuclei. 
According to the spin dynamics in NMR, the signal of each peak includes both real and imaginary components. These components encode the expectation values of the Pauli matrices $\sigma_x$ and $\sigma_y$ for the observable spin, respectively.
Consequently, NMR can measure the expectation values of the single-quantum coherence operators consisted of $\sigma_x$ or $\sigma_y$ in the target qubit and $\sigma_z$ or $I$ in the rest qubits. In our protocol, the focus is on measuring longitudinal magnetization observables such as $\sigma_zIII$. To facilitate this, readout pulses are applied to convert these observables into their transverse counterparts. Specifically, the readout pulse $R_y^1(\pi/2)$ is used to measure $\sigma_zIII$ by transferring it to $\sigma_xIII$.

\begin{figure}
\centering
\includegraphics[width=1\linewidth]{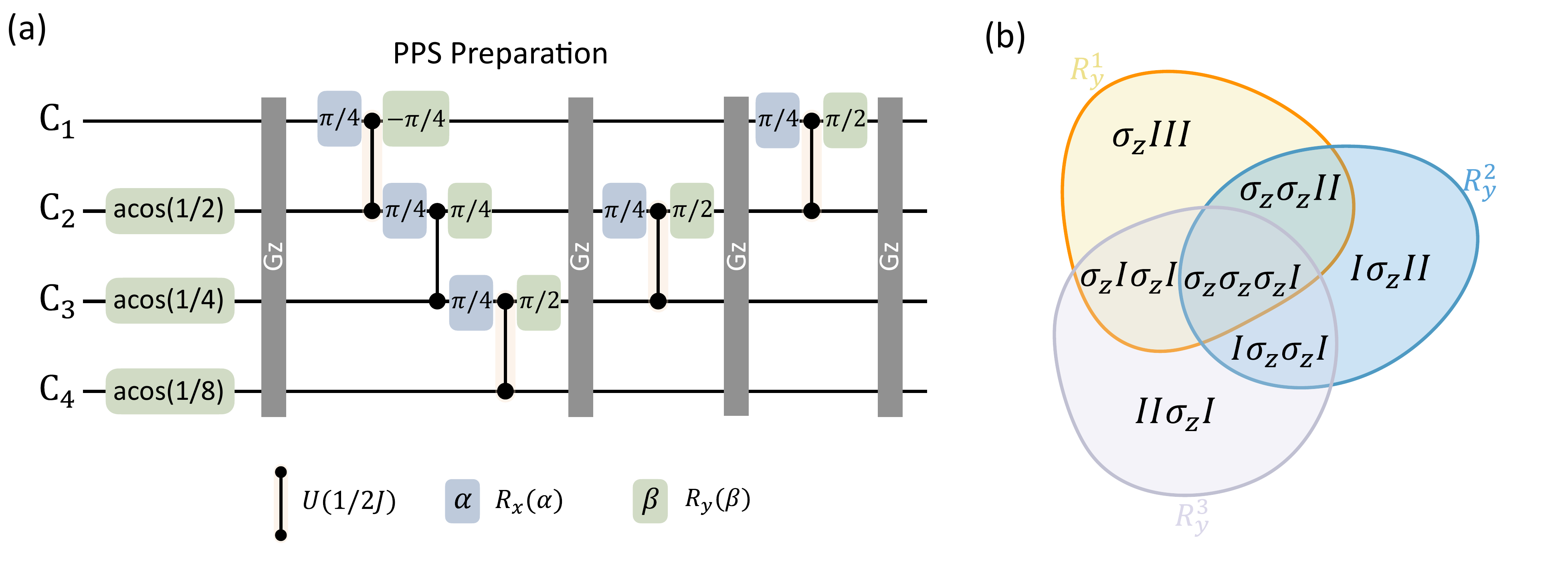}
\caption{(a) NMR pulse sequence to prepare the 4-qubit system to the PPS. The blue and green rectangles indicate the $R_x$ and $R_y$ rotation gate, respectively. The grey rectangles mean the Gz pulse. The gradient-field pulse, denoted by Gz, are used to eliminate all coherence from the instantaneous state. (b) 
Venn diagram visualizing the target observables set for our randomized meassurement task. The readout pulse for measuring the operators to perform the reduced 3-qubit diagonal density matrix tomography. Each circle represents the obervables that can be measured after applying a readout pulse of the corresponding color. }
\label{pps&measure}
\end{figure}

\textit{Quantum state tomography}--To measuring the $\text{Tr}(R^2)$ using randomization measurements, only the diagonal elements of the density matrix of final state are required. Here, we illustrate the process of performing tomography on the diagonal elements of a density matrix. The diagonal elements can be decomposed via the Pauli basis $\prod^{n}_{i}\otimes\sigma^i_{0,z}$, where the Pauli matrices $\sigma_0=I$ and $\sigma_z$ are used. Hence, using the above readout method, the diagonal elements of an unknown quantum state $\rho$ can be determined. 

In our experiment, we focus exclusively on the final state of the first three qubits and perform direct measurements on them to obtain the expectation values of the 8 Pauli operators $\sigma_{0,z}\otimes\sigma_{0,z}\otimes\sigma_{0,z}\otimes I$. As shown in Fig.~\ref{pps&measure}(b), three readout operations are employed to realize the tomography of the reduced 3-qubit diagonal density matrix. Specifically, the figure only shows a subset of the measurable observables that we are concerned with. In reality, the number of observables that can be measured after each readout pulse far exceeds this subset.

\begin{table}
\begin{tabular}{ccc}
\toprule
Experiments & Readout Pulses &  Measured Observables\\
\hline
      & $I$    & $\sigma_xIII$, $\sigma_x\sigma_zII$, $\sigma_xI\sigma_zI$, $\sigma_x\sigma_z\sigma_zI$,\\ 
      &        & $I\sigma_xII$, $I\sigma_x\sigma_zI$, $\sigma_z\sigma_xII$, $\sigma_z\sigma_x\sigma_zI$,\\
      &        & $II\sigma_xI$, $I\sigma_z\sigma_xI$, $\sigma_zI\sigma_xI$, $\sigma_z\sigma_z\sigma_xI$\\ 
No. 1 & $R_y^1(\pi/2)$ & $\sigma_zIII$, $\sigma_z\sigma_zII$, $\sigma_zI\sigma_zI$, $\sigma_z\sigma_z\sigma_zI$\\
No. 2 & $R_y^2(\pi/2)$ & $I\sigma_zII$, $I\sigma_z\sigma_zI$, $\sigma_z\sigma_zII$, $\sigma_z\sigma_z\sigma_zI$\\
No. 3 & $R_y^3(\pi/2)$ & $II\sigma_zI$, $I\sigma_z\sigma_zI$, $\sigma_zI\sigma_zI$, $\sigma_z\sigma_z\sigma_zI$\\

\toprule
\end{tabular}
\caption{Readout pulses and the measured operators to perform the reduced 3-qubit diagonal density matrix quantum state tomography.}
\label{reduced}
\end{table}

\section{Experimental Tomography Result for $R$}\label{app:tomoexp}
The results of the PDM $R$ are shown in the Fig~.\ref{exptomores1} and Fig~.\ref{exptomores2}. The solid lines represent the theoretical predication while the color bars with dashed lines indicate the experimentally results.

\begin{figure}
\centering
\includegraphics[width=0.9\linewidth]{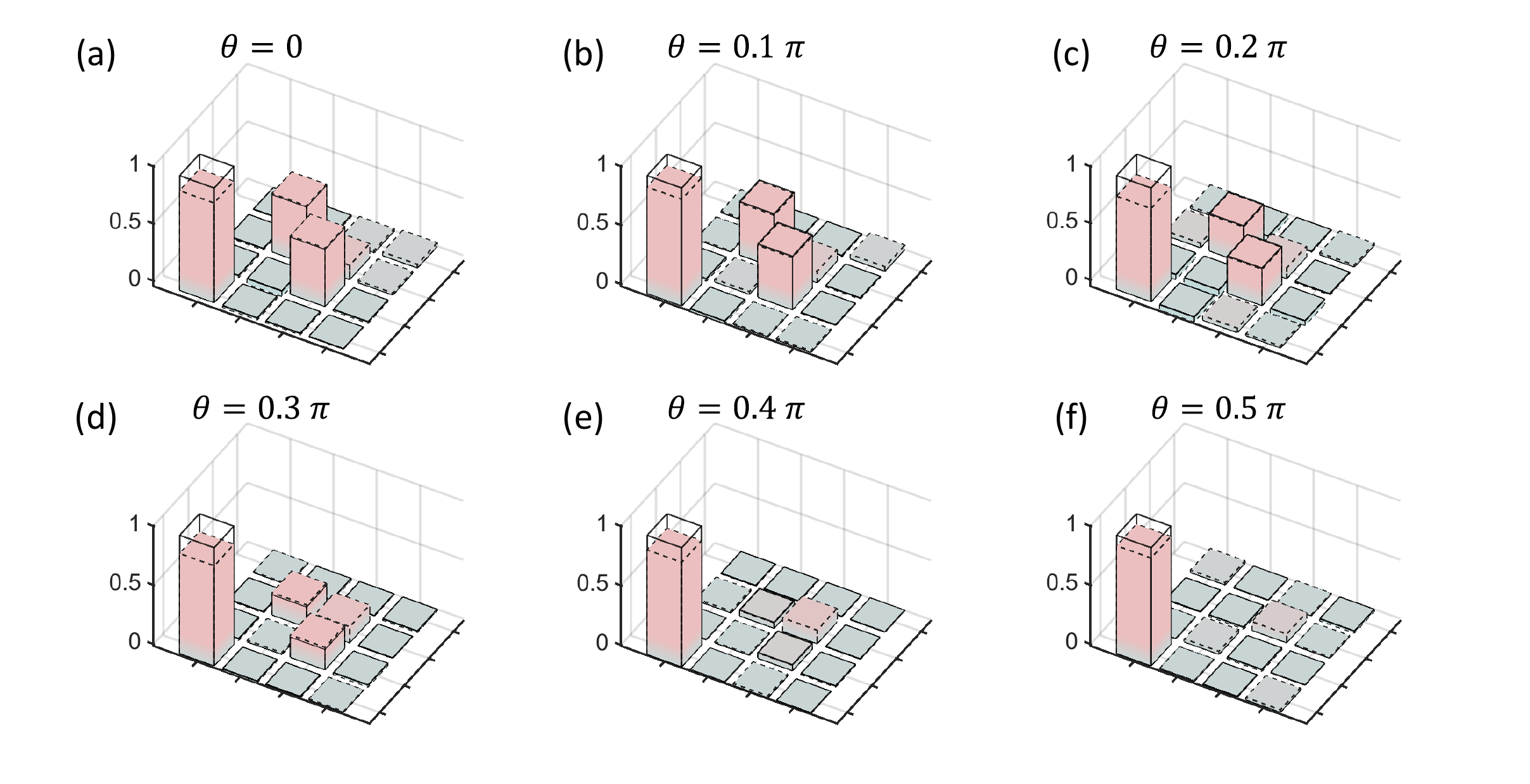}
\caption{The tomography results of fixing $p=1$ in system $\rho$.}
\label{exptomores1}
\end{figure}

\begin{figure}
\centering
\includegraphics[width=0.9\linewidth]{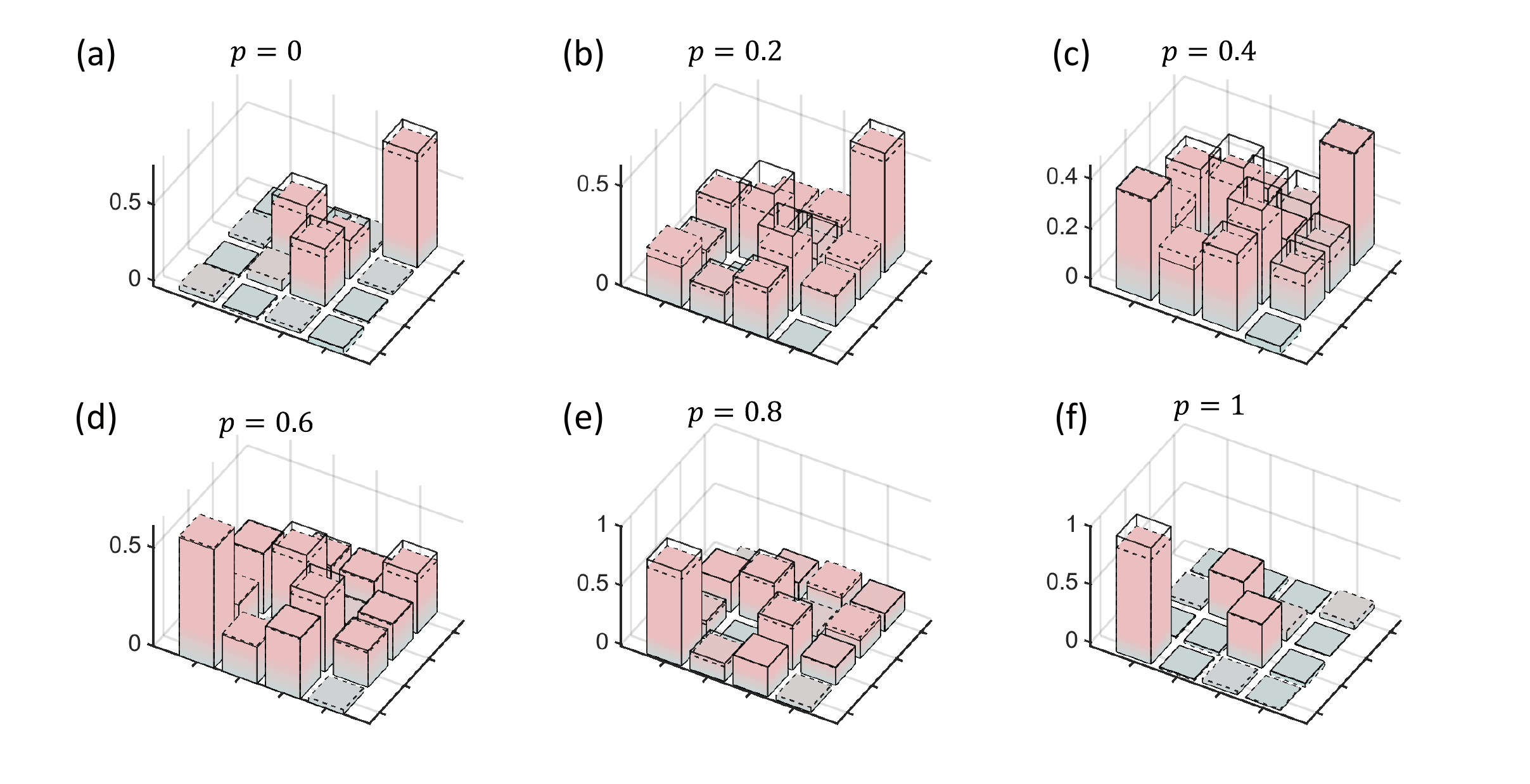}
\caption{The tomography results of fixing $\theta=\pi/6$ of channel.}
\label{exptomores2}
\end{figure}

\newpage
\section{Experimental Result of Each Unitary $U$}
In our experiment, we selected 200 Clifford Unitary operators $U$ to conduct the randomized measurements. Each point of Fig.~\ref{fig:res} is derived from 200 experimental results. Here we present the comparison between these experimental results and simulation results behind each data point, with the comparative data displayed in Fig.\ref{expres1} and Fig.\ref{expres2}.

\begin{figure}
\centering
\includegraphics[width=0.9\linewidth]{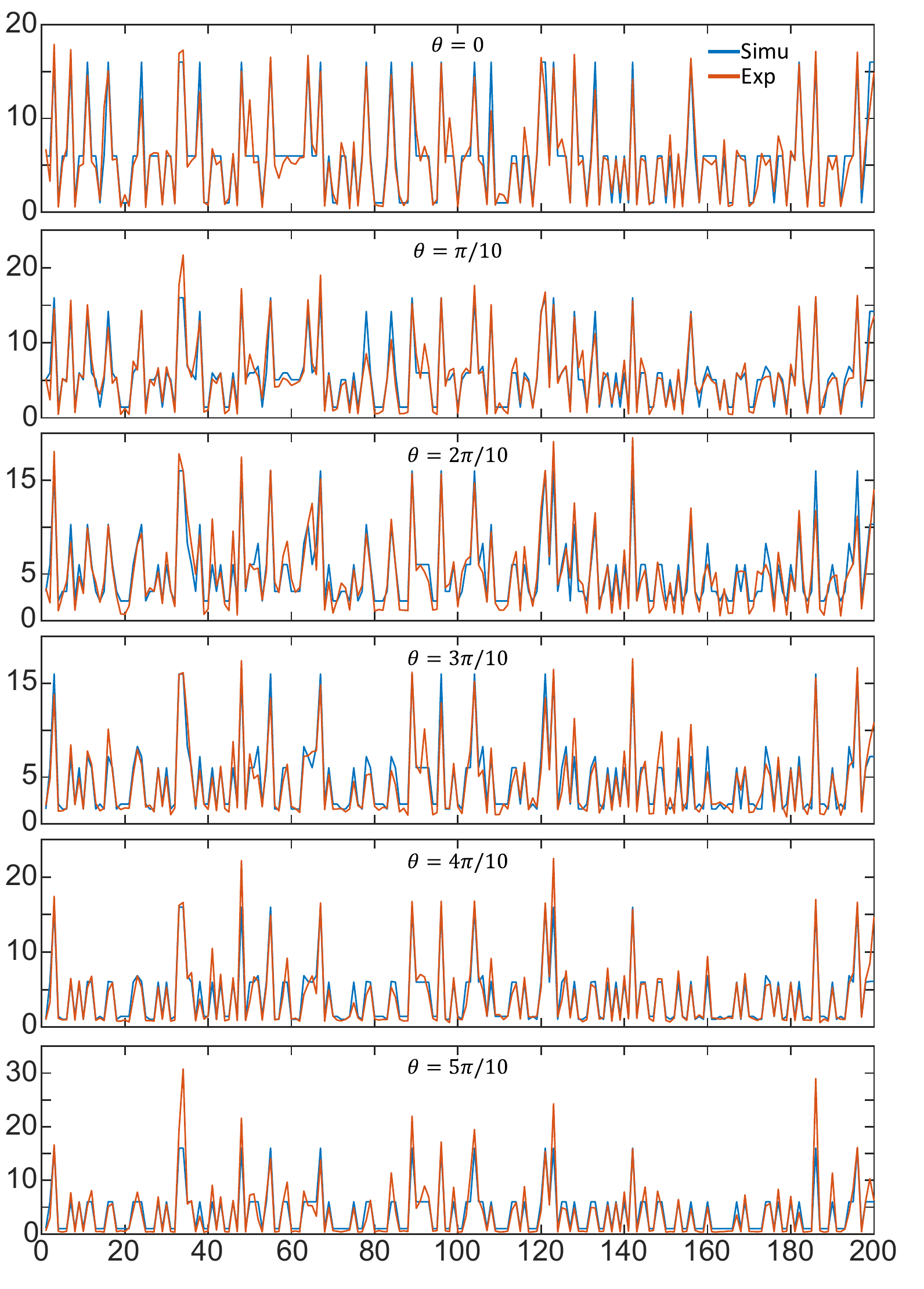}
\caption{The comparison results of fixing $p=1$ in system $\rho$.}
\label{expres1}
\end{figure}

\begin{figure}
\centering
\includegraphics[width=0.9\linewidth]{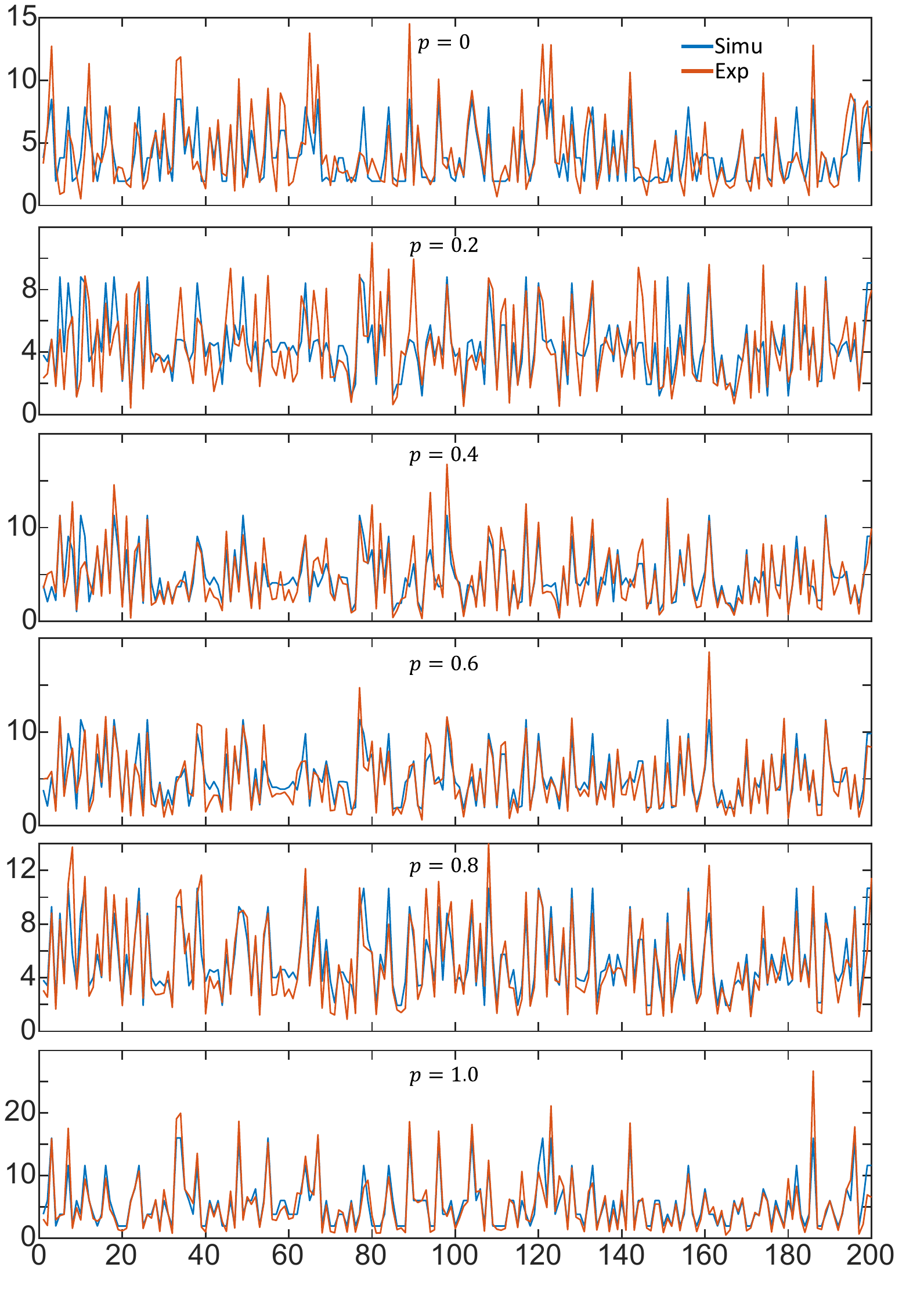}
\caption{The comparison results of fixing $\theta=\pi/6$ of channel.}
\label{expres2}
\end{figure}


\begin{thebibliography}{64}%
	\makeatletter
	\providecommand \@ifxundefined [1]{%
		\@ifx{#1\undefined}
	}%
	\providecommand \@ifnum [1]{%
		\ifnum #1\expandafter \@firstoftwo
		\else \expandafter \@secondoftwo
		\fi
	}%
	\providecommand \@ifx [1]{%
		\ifx #1\expandafter \@firstoftwo
		\else \expandafter \@secondoftwo
		\fi
	}%
	\providecommand \natexlab [1]{#1}%
	\providecommand \enquote  [1]{``#1''}%
	\providecommand \bibnamefont  [1]{#1}%
	\providecommand \bibfnamefont [1]{#1}%
	\providecommand \citenamefont [1]{#1}%
	\providecommand \href@noop [0]{\@secondoftwo}%
	\providecommand \href [0]{\begingroup \@sanitize@url \@href}%
	\providecommand \@href[1]{\@@startlink{#1}\@@href}%
	\providecommand \@@href[1]{\endgroup#1\@@endlink}%
	\providecommand \@sanitize@url [0]{\catcode `\\12\catcode `\$12\catcode
		`\&12\catcode `\#12\catcode `\^12\catcode `\_12\catcode `\%12\relax}%
	\providecommand \@@startlink[1]{}%
	\providecommand \@@endlink[0]{}%
	\providecommand \url  [0]{\begingroup\@sanitize@url \@url }%
	\providecommand \@url [1]{\endgroup\@href {#1}{\urlprefix }}%
	\providecommand \urlprefix  [0]{URL }%
	\providecommand \Eprint [0]{\href }%
	\providecommand \doibase [0]{https://doi.org/}%
	\providecommand \selectlanguage [0]{\@gobble}%
	\providecommand \bibinfo  [0]{\@secondoftwo}%
	\providecommand \bibfield  [0]{\@secondoftwo}%
	\providecommand \translation [1]{[#1]}%
	\providecommand \BibitemOpen [0]{}%
	\providecommand \bibitemStop [0]{}%
	\providecommand \bibitemNoStop [0]{.\EOS\space}%
	\providecommand \EOS [0]{\spacefactor3000\relax}%
	\providecommand \BibitemShut  [1]{\csname bibitem#1\endcsname}%
	\let\auto@bib@innerbib\@empty
	%</preamble>
	\bibitem [{\citenamefont {Horodecki}\ \emph {et~al.}(2009)\citenamefont
		{Horodecki}, \citenamefont {Horodecki}, \citenamefont {Horodecki},\ and\
		\citenamefont {Horodecki}}]{Horodecki2009entanglement}%
	\BibitemOpen
	\bibfield  {author} {\bibinfo {author} {\bibfnamefont {R.}~\bibnamefont
			{Horodecki}}, \bibinfo {author} {\bibfnamefont {P.}~\bibnamefont
			{Horodecki}}, \bibinfo {author} {\bibfnamefont {M.}~\bibnamefont
			{Horodecki}},\ and\ \bibinfo {author} {\bibfnamefont {K.}~\bibnamefont
			{Horodecki}},\ }\bibfield  {title} {\bibinfo {title} {Quantum entanglement},\
	}\href {https://doi.org/10.1103/RevModPhys.81.865} {\bibfield  {journal}
		{\bibinfo  {journal} {Rev. Mod. Phys.}\ }\textbf {\bibinfo {volume} {81}},\
		\bibinfo {pages} {865} (\bibinfo {year} {2009})}\BibitemShut {NoStop}%
	\bibitem [{\citenamefont {Gühne}\ and\ \citenamefont
		{Tóth}(2009)}]{gunhe2009entanglement}%
	\BibitemOpen
	\bibfield  {author} {\bibinfo {author} {\bibfnamefont {O.}~\bibnamefont
			{Gühne}}\ and\ \bibinfo {author} {\bibfnamefont {G.}~\bibnamefont {Tóth}},\
	}\bibfield  {title} {\bibinfo {title} {Entanglement detection},\ }\href
	{https://doi.org/https://doi.org/10.1016/j.physrep.2009.02.004} {\bibfield
		{journal} {\bibinfo  {journal} {Physics Reports}\ }\textbf {\bibinfo {volume}
			{474}},\ \bibinfo {pages} {1} (\bibinfo {year} {2009})}\BibitemShut {NoStop}%
	\bibitem [{\citenamefont {H{\"a}ffner}\ \emph {et~al.}(2005)\citenamefont
		{H{\"a}ffner}, \citenamefont {H{\"a}nsel}, \citenamefont {Roos},
		\citenamefont {Benhelm}, \citenamefont {Chek-al kar}, \citenamefont
		{Chwalla}, \citenamefont {K{\"o}rber}, \citenamefont {Rapol}, \citenamefont
		{Riebe}, \citenamefont {Schmidt}, \citenamefont {Becher}, \citenamefont
		{G{\"u}hne}, \citenamefont {D{\"u}r},\ and\ \citenamefont
		{Blatt}}]{Haffner2005ion}%
	\BibitemOpen
	\bibfield  {author} {\bibinfo {author} {\bibfnamefont {H.}~\bibnamefont
			{H{\"a}ffner}}, \bibinfo {author} {\bibfnamefont {W.}~\bibnamefont
			{H{\"a}nsel}}, \bibinfo {author} {\bibfnamefont {C.~F.}\ \bibnamefont
			{Roos}}, \bibinfo {author} {\bibfnamefont {J.}~\bibnamefont {Benhelm}},
		\bibinfo {author} {\bibfnamefont {D.}~\bibnamefont {Chek-al kar}}, \bibinfo
		{author} {\bibfnamefont {M.}~\bibnamefont {Chwalla}}, \bibinfo {author}
		{\bibfnamefont {T.}~\bibnamefont {K{\"o}rber}}, \bibinfo {author}
		{\bibfnamefont {U.~D.}\ \bibnamefont {Rapol}}, \bibinfo {author}
		{\bibfnamefont {M.}~\bibnamefont {Riebe}}, \bibinfo {author} {\bibfnamefont
			{P.~O.}\ \bibnamefont {Schmidt}}, \bibinfo {author} {\bibfnamefont
			{C.}~\bibnamefont {Becher}}, \bibinfo {author} {\bibfnamefont
			{O.}~\bibnamefont {G{\"u}hne}}, \bibinfo {author} {\bibfnamefont
			{W.}~\bibnamefont {D{\"u}r}},\ and\ \bibinfo {author} {\bibfnamefont
			{R.}~\bibnamefont {Blatt}},\ }\bibfield  {title} {\bibinfo {title} {Scalable
			multiparticle entanglement of trapped ions},\ }\href
	{https://doi.org/10.1038/nature04279} {\bibfield  {journal} {\bibinfo
			{journal} {Nature}\ }\textbf {\bibinfo {volume} {438}},\ \bibinfo {pages}
		{643} (\bibinfo {year} {2005})}\BibitemShut {NoStop}%
	\bibitem [{\citenamefont {Leibfried}\ \emph {et~al.}(2005)\citenamefont
		{Leibfried}, \citenamefont {Knill}, \citenamefont {Seidelin}, \citenamefont
		{Britton}, \citenamefont {Blakestad}, \citenamefont {Chiaverini},
		\citenamefont {Hume}, \citenamefont {Itano}, \citenamefont {Jost},
		\citenamefont {Langer}, \citenamefont {Ozeri}, \citenamefont {Reichle},\ and\
		\citenamefont {Wineland}}]{Leibfried2005cat}%
	\BibitemOpen
	\bibfield  {author} {\bibinfo {author} {\bibfnamefont {D.}~\bibnamefont
			{Leibfried}}, \bibinfo {author} {\bibfnamefont {E.}~\bibnamefont {Knill}},
		\bibinfo {author} {\bibfnamefont {S.}~\bibnamefont {Seidelin}}, \bibinfo
		{author} {\bibfnamefont {J.}~\bibnamefont {Britton}}, \bibinfo {author}
		{\bibfnamefont {R.~B.}\ \bibnamefont {Blakestad}}, \bibinfo {author}
		{\bibfnamefont {J.}~\bibnamefont {Chiaverini}}, \bibinfo {author}
		{\bibfnamefont {D.~B.}\ \bibnamefont {Hume}}, \bibinfo {author}
		{\bibfnamefont {W.~M.}\ \bibnamefont {Itano}}, \bibinfo {author}
		{\bibfnamefont {J.~D.}\ \bibnamefont {Jost}}, \bibinfo {author}
		{\bibfnamefont {C.}~\bibnamefont {Langer}}, \bibinfo {author} {\bibfnamefont
			{R.}~\bibnamefont {Ozeri}}, \bibinfo {author} {\bibfnamefont
			{R.}~\bibnamefont {Reichle}},\ and\ \bibinfo {author} {\bibfnamefont {D.~J.}\
			\bibnamefont {Wineland}},\ }\bibfield  {title} {\bibinfo {title} {Creation of
			a six-atom `schr{\"o}dinger cat' state},\ }\href
	{https://doi.org/10.1038/nature04251} {\bibfield  {journal} {\bibinfo
			{journal} {Nature}\ }\textbf {\bibinfo {volume} {438}},\ \bibinfo {pages}
		{639} (\bibinfo {year} {2005})}\BibitemShut {NoStop}%
	\bibitem [{\citenamefont {Monz}\ \emph {et~al.}(2011)\citenamefont {Monz},
		\citenamefont {Schindler}, \citenamefont {Barreiro}, \citenamefont {Chwalla},
		\citenamefont {Nigg}, \citenamefont {Coish}, \citenamefont {Harlander},
		\citenamefont {H\"ansel}, \citenamefont {Hennrich},\ and\ \citenamefont
		{Blatt}}]{monz2011fourteen}%
	\BibitemOpen
	\bibfield  {author} {\bibinfo {author} {\bibfnamefont {T.}~\bibnamefont
			{Monz}}, \bibinfo {author} {\bibfnamefont {P.}~\bibnamefont {Schindler}},
		\bibinfo {author} {\bibfnamefont {J.~T.}\ \bibnamefont {Barreiro}}, \bibinfo
		{author} {\bibfnamefont {M.}~\bibnamefont {Chwalla}}, \bibinfo {author}
		{\bibfnamefont {D.}~\bibnamefont {Nigg}}, \bibinfo {author} {\bibfnamefont
			{W.~A.}\ \bibnamefont {Coish}}, \bibinfo {author} {\bibfnamefont
			{M.}~\bibnamefont {Harlander}}, \bibinfo {author} {\bibfnamefont
			{W.}~\bibnamefont {H\"ansel}}, \bibinfo {author} {\bibfnamefont
			{M.}~\bibnamefont {Hennrich}},\ and\ \bibinfo {author} {\bibfnamefont
			{R.}~\bibnamefont {Blatt}},\ }\bibfield  {title} {\bibinfo {title} {14-qubit
			entanglement: Creation and coherence},\ }\href
	{https://link.aps.org/doi/10.1103/PhysRevLett.106.130506} {\bibfield
		{journal} {\bibinfo  {journal} {Phys. Rev. Lett.}\ }\textbf {\bibinfo
			{volume} {106}},\ \bibinfo {pages} {130506} (\bibinfo {year}
		{2011})}\BibitemShut {NoStop}%
	\bibitem [{\citenamefont {Pan}\ \emph {et~al.}(2012)\citenamefont {Pan},
		\citenamefont {Chen}, \citenamefont {Lu}, \citenamefont {Weinfurter},
		\citenamefont {Zeilinger},\ and\ \citenamefont {\ifmmode~\dot{Z}\else
			\.{Z}\fi{}ukowski}}]{pan2012multiphoton}%
	\BibitemOpen
	\bibfield  {author} {\bibinfo {author} {\bibfnamefont {J.-W.}\ \bibnamefont
			{Pan}}, \bibinfo {author} {\bibfnamefont {Z.-B.}\ \bibnamefont {Chen}},
		\bibinfo {author} {\bibfnamefont {C.-Y.}\ \bibnamefont {Lu}}, \bibinfo
		{author} {\bibfnamefont {H.}~\bibnamefont {Weinfurter}}, \bibinfo {author}
		{\bibfnamefont {A.}~\bibnamefont {Zeilinger}},\ and\ \bibinfo {author}
		{\bibfnamefont {M.}~\bibnamefont {\ifmmode~\dot{Z}\else \.{Z}\fi{}ukowski}},\
	}\bibfield  {title} {\bibinfo {title} {Multiphoton entanglement and
			interferometry},\ }\href {https://doi.org/10.1103/RevModPhys.84.777}
	{\bibfield  {journal} {\bibinfo  {journal} {Rev. Mod. Phys.}\ }\textbf
		{\bibinfo {volume} {84}},\ \bibinfo {pages} {777} (\bibinfo {year}
		{2012})}\BibitemShut {NoStop}%
	\bibitem [{\citenamefont {Omran}\ \emph {et~al.}(2019)\citenamefont {Omran},
		\citenamefont {Levine}, \citenamefont {Keesling}, \citenamefont {Semeghini},
		\citenamefont {Wang}, \citenamefont {Ebadi}, \citenamefont {Bernien},
		\citenamefont {Zibrov}, \citenamefont {Pichler}, \citenamefont {Choi},
		\citenamefont {Cui}, \citenamefont {Rossignolo}, \citenamefont {Rembold},
		\citenamefont {Montangero}, \citenamefont {Calarco}, \citenamefont {Endres},
		\citenamefont {Greiner}, \citenamefont {Vuletić},\ and\ \citenamefont
		{Lukin}}]{omran2019cat}%
	\BibitemOpen
	\bibfield  {author} {\bibinfo {author} {\bibfnamefont {A.}~\bibnamefont
			{Omran}}, \bibinfo {author} {\bibfnamefont {H.}~\bibnamefont {Levine}},
		\bibinfo {author} {\bibfnamefont {A.}~\bibnamefont {Keesling}}, \bibinfo
		{author} {\bibfnamefont {G.}~\bibnamefont {Semeghini}}, \bibinfo {author}
		{\bibfnamefont {T.~T.}\ \bibnamefont {Wang}}, \bibinfo {author}
		{\bibfnamefont {S.}~\bibnamefont {Ebadi}}, \bibinfo {author} {\bibfnamefont
			{H.}~\bibnamefont {Bernien}}, \bibinfo {author} {\bibfnamefont {A.~S.}\
			\bibnamefont {Zibrov}}, \bibinfo {author} {\bibfnamefont {H.}~\bibnamefont
			{Pichler}}, \bibinfo {author} {\bibfnamefont {S.}~\bibnamefont {Choi}},
		\bibinfo {author} {\bibfnamefont {J.}~\bibnamefont {Cui}}, \bibinfo {author}
		{\bibfnamefont {M.}~\bibnamefont {Rossignolo}}, \bibinfo {author}
		{\bibfnamefont {P.}~\bibnamefont {Rembold}}, \bibinfo {author} {\bibfnamefont
			{S.}~\bibnamefont {Montangero}}, \bibinfo {author} {\bibfnamefont
			{T.}~\bibnamefont {Calarco}}, \bibinfo {author} {\bibfnamefont
			{M.}~\bibnamefont {Endres}}, \bibinfo {author} {\bibfnamefont
			{M.}~\bibnamefont {Greiner}}, \bibinfo {author} {\bibfnamefont
			{V.}~\bibnamefont {Vuletić}},\ and\ \bibinfo {author} {\bibfnamefont
			{M.~D.}\ \bibnamefont {Lukin}},\ }\bibfield  {title} {\bibinfo {title}
		{Generation and manipulation of schrödinger cat states in rydberg atom
			arrays},\ }\href {https://www.science.org/doi/abs/10.1126/science.aax9743}
	{\bibfield  {journal} {\bibinfo  {journal} {Science}\ }\textbf {\bibinfo
			{volume} {365}},\ \bibinfo {pages} {570} (\bibinfo {year}
		{2019})}\BibitemShut {NoStop}%
	\bibitem [{\citenamefont {Song}\ \emph {et~al.}(2019)\citenamefont {Song},
		\citenamefont {Xu}, \citenamefont {Li}, \citenamefont {Zhang}, \citenamefont
		{Zhang}, \citenamefont {Liu}, \citenamefont {Guo}, \citenamefont {Wang},
		\citenamefont {Ren}, \citenamefont {Hao}, \citenamefont {Feng}, \citenamefont
		{Fan}, \citenamefont {Zheng}, \citenamefont {Wang}, \citenamefont {Wang},\
		and\ \citenamefont {Zhu}}]{chao2019twenty}%
	\BibitemOpen
	\bibfield  {author} {\bibinfo {author} {\bibfnamefont {C.}~\bibnamefont
			{Song}}, \bibinfo {author} {\bibfnamefont {K.}~\bibnamefont {Xu}}, \bibinfo
		{author} {\bibfnamefont {H.}~\bibnamefont {Li}}, \bibinfo {author}
		{\bibfnamefont {Y.-R.}\ \bibnamefont {Zhang}}, \bibinfo {author}
		{\bibfnamefont {X.}~\bibnamefont {Zhang}}, \bibinfo {author} {\bibfnamefont
			{W.}~\bibnamefont {Liu}}, \bibinfo {author} {\bibfnamefont {Q.}~\bibnamefont
			{Guo}}, \bibinfo {author} {\bibfnamefont {Z.}~\bibnamefont {Wang}}, \bibinfo
		{author} {\bibfnamefont {W.}~\bibnamefont {Ren}}, \bibinfo {author}
		{\bibfnamefont {J.}~\bibnamefont {Hao}}, \bibinfo {author} {\bibfnamefont
			{H.}~\bibnamefont {Feng}}, \bibinfo {author} {\bibfnamefont {H.}~\bibnamefont
			{Fan}}, \bibinfo {author} {\bibfnamefont {D.}~\bibnamefont {Zheng}}, \bibinfo
		{author} {\bibfnamefont {D.-W.}\ \bibnamefont {Wang}}, \bibinfo {author}
		{\bibfnamefont {H.}~\bibnamefont {Wang}},\ and\ \bibinfo {author}
		{\bibfnamefont {S.-Y.}\ \bibnamefont {Zhu}},\ }\bibfield  {title} {\bibinfo
		{title} {Generation of multicomponent atomic schrödinger cat states of up to
			20 qubits},\ }\href {https://www.science.org/doi/abs/10.1126/science.aay0600}
	{\bibfield  {journal} {\bibinfo  {journal} {Science}\ }\textbf {\bibinfo
			{volume} {365}},\ \bibinfo {pages} {574} (\bibinfo {year}
		{2019})}\BibitemShut {NoStop}%
	\bibitem [{\citenamefont {Brydges}\ \emph
		{et~al.}(2019{\natexlab{a}})\citenamefont {Brydges}, \citenamefont {Elben},
		\citenamefont {Jurcevic}, \citenamefont {Vermersch}, \citenamefont {Maier},
		\citenamefont {Lanyon}, \citenamefont {Zoller}, \citenamefont {Blatt},\ and\
		\citenamefont {Roos}}]{brydges2019renyi}%
	\BibitemOpen
	\bibfield  {author} {\bibinfo {author} {\bibfnamefont {T.}~\bibnamefont
			{Brydges}}, \bibinfo {author} {\bibfnamefont {A.}~\bibnamefont {Elben}},
		\bibinfo {author} {\bibfnamefont {P.}~\bibnamefont {Jurcevic}}, \bibinfo
		{author} {\bibfnamefont {B.}~\bibnamefont {Vermersch}}, \bibinfo {author}
		{\bibfnamefont {C.}~\bibnamefont {Maier}}, \bibinfo {author} {\bibfnamefont
			{B.~P.}\ \bibnamefont {Lanyon}}, \bibinfo {author} {\bibfnamefont
			{P.}~\bibnamefont {Zoller}}, \bibinfo {author} {\bibfnamefont
			{R.}~\bibnamefont {Blatt}},\ and\ \bibinfo {author} {\bibfnamefont {C.~F.}\
			\bibnamefont {Roos}},\ }\bibfield  {title} {\bibinfo {title} {Probing rényi
			entanglement entropy via randomized measurements},\ }\href
	{https://doi.org/10.1126/science.aau4963} {\bibfield  {journal} {\bibinfo
			{journal} {Science}\ }\textbf {\bibinfo {volume} {364}},\ \bibinfo {pages}
		{260} (\bibinfo {year} {2019}{\natexlab{a}})}\BibitemShut {NoStop}%
	\bibitem [{\citenamefont {Cao}\ \emph {et~al.}(2023)\citenamefont {Cao},
		\citenamefont {Wu}, \citenamefont {Chen}, \citenamefont {Gong}, \citenamefont
		{Wu}, \citenamefont {Ye}, \citenamefont {Zha}, \citenamefont {Qian},
		\citenamefont {Ying}, \citenamefont {Guo}, \citenamefont {Zhu}, \citenamefont
		{Huang}, \citenamefont {Zhao}, \citenamefont {Li}, \citenamefont {Wang},
		\citenamefont {Yu}, \citenamefont {Fan}, \citenamefont {Wu}, \citenamefont
		{Su}, \citenamefont {Deng}, \citenamefont {Rong}, \citenamefont {Li},
		\citenamefont {Zhang}, \citenamefont {Chung}, \citenamefont {Liang},
		\citenamefont {Lin}, \citenamefont {Xu}, \citenamefont {Sun}, \citenamefont
		{Guo}, \citenamefont {Li}, \citenamefont {Huo}, \citenamefont {Peng},
		\citenamefont {Lu}, \citenamefont {Yuan}, \citenamefont {Zhu},\ and\
		\citenamefont {Pan}}]{Cao2023super}%
	\BibitemOpen
	\bibfield  {author} {\bibinfo {author} {\bibfnamefont {S.}~\bibnamefont
			{Cao}}, \bibinfo {author} {\bibfnamefont {B.}~\bibnamefont {Wu}}, \bibinfo
		{author} {\bibfnamefont {F.}~\bibnamefont {Chen}}, \bibinfo {author}
		{\bibfnamefont {M.}~\bibnamefont {Gong}}, \bibinfo {author} {\bibfnamefont
			{Y.}~\bibnamefont {Wu}}, \bibinfo {author} {\bibfnamefont {Y.}~\bibnamefont
			{Ye}}, \bibinfo {author} {\bibfnamefont {C.}~\bibnamefont {Zha}}, \bibinfo
		{author} {\bibfnamefont {H.}~\bibnamefont {Qian}}, \bibinfo {author}
		{\bibfnamefont {C.}~\bibnamefont {Ying}}, \bibinfo {author} {\bibfnamefont
			{S.}~\bibnamefont {Guo}}, \bibinfo {author} {\bibfnamefont {Q.}~\bibnamefont
			{Zhu}}, \bibinfo {author} {\bibfnamefont {H.-L.}\ \bibnamefont {Huang}},
		\bibinfo {author} {\bibfnamefont {Y.}~\bibnamefont {Zhao}}, \bibinfo {author}
		{\bibfnamefont {S.}~\bibnamefont {Li}}, \bibinfo {author} {\bibfnamefont
			{S.}~\bibnamefont {Wang}}, \bibinfo {author} {\bibfnamefont {J.}~\bibnamefont
			{Yu}}, \bibinfo {author} {\bibfnamefont {D.}~\bibnamefont {Fan}}, \bibinfo
		{author} {\bibfnamefont {D.}~\bibnamefont {Wu}}, \bibinfo {author}
		{\bibfnamefont {H.}~\bibnamefont {Su}}, \bibinfo {author} {\bibfnamefont
			{H.}~\bibnamefont {Deng}}, \bibinfo {author} {\bibfnamefont {H.}~\bibnamefont
			{Rong}}, \bibinfo {author} {\bibfnamefont {Y.}~\bibnamefont {Li}}, \bibinfo
		{author} {\bibfnamefont {K.}~\bibnamefont {Zhang}}, \bibinfo {author}
		{\bibfnamefont {T.-H.}\ \bibnamefont {Chung}}, \bibinfo {author}
		{\bibfnamefont {F.}~\bibnamefont {Liang}}, \bibinfo {author} {\bibfnamefont
			{J.}~\bibnamefont {Lin}}, \bibinfo {author} {\bibfnamefont {Y.}~\bibnamefont
			{Xu}}, \bibinfo {author} {\bibfnamefont {L.}~\bibnamefont {Sun}}, \bibinfo
		{author} {\bibfnamefont {C.}~\bibnamefont {Guo}}, \bibinfo {author}
		{\bibfnamefont {N.}~\bibnamefont {Li}}, \bibinfo {author} {\bibfnamefont
			{Y.-H.}\ \bibnamefont {Huo}}, \bibinfo {author} {\bibfnamefont {C.-Z.}\
			\bibnamefont {Peng}}, \bibinfo {author} {\bibfnamefont {C.-Y.}\ \bibnamefont
			{Lu}}, \bibinfo {author} {\bibfnamefont {X.}~\bibnamefont {Yuan}}, \bibinfo
		{author} {\bibfnamefont {X.}~\bibnamefont {Zhu}},\ and\ \bibinfo {author}
		{\bibfnamefont {J.-W.}\ \bibnamefont {Pan}},\ }\bibfield  {title} {\bibinfo
		{title} {Generation of genuine entanglement up to 51 superconducting
			qubits},\ }\href {https://doi.org/10.1038/s41586-023-06195-1} {\bibfield
		{journal} {\bibinfo  {journal} {Nature}\ }\textbf {\bibinfo {volume} {619}},\
		\bibinfo {pages} {738} (\bibinfo {year} {2023})}\BibitemShut {NoStop}%
	\bibitem [{\citenamefont {Leggett}\ and\ \citenamefont
		{Garg}(1985)}]{Leggett1985quantum}%
	\BibitemOpen
	\bibfield  {author} {\bibinfo {author} {\bibfnamefont {A.~J.}\ \bibnamefont
			{Leggett}}\ and\ \bibinfo {author} {\bibfnamefont {A.}~\bibnamefont {Garg}},\
	}\bibfield  {title} {\bibinfo {title} {Quantum mechanics versus macroscopic
			realism: Is the flux there when nobody looks?},\ }\href
	{https://doi.org/10.1103/PhysRevLett.54.857} {\bibfield  {journal} {\bibinfo
			{journal} {Phys. Rev. Lett.}\ }\textbf {\bibinfo {volume} {54}},\ \bibinfo
		{pages} {857} (\bibinfo {year} {1985})}\BibitemShut {NoStop}%
	\bibitem [{\citenamefont {Emary}\ \emph {et~al.}(2013)\citenamefont {Emary},
		\citenamefont {Lambert},\ and\ \citenamefont {Nori}}]{emary2013leggett}%
	\BibitemOpen
	\bibfield  {author} {\bibinfo {author} {\bibfnamefont {C.}~\bibnamefont
			{Emary}}, \bibinfo {author} {\bibfnamefont {N.}~\bibnamefont {Lambert}},\
		and\ \bibinfo {author} {\bibfnamefont {F.}~\bibnamefont {Nori}},\ }\bibfield
	{title} {\bibinfo {title} {Leggett--garg inequalities},\ }\href
	{https://iopscience.iop.org/article/10.1088/0034-4885/77/1/016001/meta}
	{\bibfield  {journal} {\bibinfo  {journal} {Reports on Progress in Physics}\
		}\textbf {\bibinfo {volume} {77}},\ \bibinfo {pages} {016001} (\bibinfo
		{year} {2013})}\BibitemShut {NoStop}%
	\bibitem [{\citenamefont {Budroni}\ \emph {et~al.}(2022)\citenamefont
		{Budroni}, \citenamefont {Cabello}, \citenamefont {G\"uhne}, \citenamefont
		{Kleinmann},\ and\ \citenamefont {Larsson}}]{budroni2022KS}%
	\BibitemOpen
	\bibfield  {author} {\bibinfo {author} {\bibfnamefont {C.}~\bibnamefont
			{Budroni}}, \bibinfo {author} {\bibfnamefont {A.}~\bibnamefont {Cabello}},
		\bibinfo {author} {\bibfnamefont {O.}~\bibnamefont {G\"uhne}}, \bibinfo
		{author} {\bibfnamefont {M.}~\bibnamefont {Kleinmann}},\ and\ \bibinfo
		{author} {\bibfnamefont {J.-A.}\ \bibnamefont {Larsson}},\ }\bibfield
	{title} {\bibinfo {title} {Kochen-specker contextuality},\ }\href
	{https://doi.org/10.1103/RevModPhys.94.045007} {\bibfield  {journal}
		{\bibinfo  {journal} {Rev. Mod. Phys.}\ }\textbf {\bibinfo {volume} {94}},\
		\bibinfo {pages} {045007} (\bibinfo {year} {2022})}\BibitemShut {NoStop}%
	\bibitem [{\citenamefont {Vitagliano}\ and\ \citenamefont
		{Budroni}(2023)}]{vitagliano2023leggett}%
	\BibitemOpen
	\bibfield  {author} {\bibinfo {author} {\bibfnamefont {G.}~\bibnamefont
			{Vitagliano}}\ and\ \bibinfo {author} {\bibfnamefont {C.}~\bibnamefont
			{Budroni}},\ }\bibfield  {title} {\bibinfo {title} {Leggett-garg macrorealism
			and temporal correlations},\ }\href
	{https://doi.org/10.1103/PhysRevA.107.040101} {\bibfield  {journal} {\bibinfo
			{journal} {Phys. Rev. A}\ }\textbf {\bibinfo {volume} {107}},\ \bibinfo
		{pages} {040101} (\bibinfo {year} {2023})}\BibitemShut {NoStop}%
	\bibitem [{\citenamefont {Chen}\ and\ \citenamefont
		{Eisert}(2024)}]{chen2024semi}%
	\BibitemOpen
	\bibfield  {author} {\bibinfo {author} {\bibfnamefont {S.-L.}\ \bibnamefont
			{Chen}}\ and\ \bibinfo {author} {\bibfnamefont {J.}~\bibnamefont {Eisert}},\
	}\bibfield  {title} {\bibinfo {title} {Semi-device-independently
			characterizing quantum temporal correlations},\ }\href
	{https://link.aps.org/doi/10.1103/PhysRevLett.132.220201} {\bibfield
		{journal} {\bibinfo  {journal} {Phys. Rev. Lett.}\ }\textbf {\bibinfo
			{volume} {132}},\ \bibinfo {pages} {220201} (\bibinfo {year}
		{2024})}\BibitemShut {NoStop}%
	\bibitem [{\citenamefont {Fritz}(2010)}]{fritz2010quantum}%
	\BibitemOpen
	\bibfield  {author} {\bibinfo {author} {\bibfnamefont {T.}~\bibnamefont
			{Fritz}},\ }\bibfield  {title} {\bibinfo {title} {Quantum correlations in the
			temporal clauser--horne--shimony--holt (chsh) scenario},\ }\href
	{https://iopscience.iop.org/article/10.1088/1367-2630/12/8/083055/meta}
	{\bibfield  {journal} {\bibinfo  {journal} {New J. Phys.}\ }\textbf {\bibinfo
			{volume} {12}},\ \bibinfo {pages} {083055} (\bibinfo {year}
		{2010})}\BibitemShut {NoStop}%
	\bibitem [{\citenamefont {Budroni}\ \emph {et~al.}(2013)\citenamefont
		{Budroni}, \citenamefont {Moroder}, \citenamefont {Kleinmann},\ and\
		\citenamefont {G\"uhne}}]{costantino2013bounding}%
	\BibitemOpen
	\bibfield  {author} {\bibinfo {author} {\bibfnamefont {C.}~\bibnamefont
			{Budroni}}, \bibinfo {author} {\bibfnamefont {T.}~\bibnamefont {Moroder}},
		\bibinfo {author} {\bibfnamefont {M.}~\bibnamefont {Kleinmann}},\ and\
		\bibinfo {author} {\bibfnamefont {O.}~\bibnamefont {G\"uhne}},\ }\bibfield
	{title} {\bibinfo {title} {Bounding temporal quantum correlations},\ }\href
	{https://doi.org/10.1103/PhysRevLett.111.020403} {\bibfield  {journal}
		{\bibinfo  {journal} {Phys. Rev. Lett.}\ }\textbf {\bibinfo {volume} {111}},\
		\bibinfo {pages} {020403} (\bibinfo {year} {2013})}\BibitemShut {NoStop}%
	\bibitem [{\citenamefont {Wolf}\ and\ \citenamefont
		{Perez-Garcia}(2009)}]{wolf2009assessing}%
	\BibitemOpen
	\bibfield  {author} {\bibinfo {author} {\bibfnamefont {M.~M.}\ \bibnamefont
			{Wolf}}\ and\ \bibinfo {author} {\bibfnamefont {D.}~\bibnamefont
			{Perez-Garcia}},\ }\bibfield  {title} {\bibinfo {title} {Assessing quantum
			dimensionality from observable dynamics},\ }\href
	{https://doi.org/10.1103/PhysRevLett.102.190504} {\bibfield  {journal}
		{\bibinfo  {journal} {Phys. Rev. Lett.}\ }\textbf {\bibinfo {volume} {102}},\
		\bibinfo {pages} {190504} (\bibinfo {year} {2009})}\BibitemShut {NoStop}%
	\bibitem [{\citenamefont {Spee}\ \emph {et~al.}(2020)\citenamefont {Spee},
		\citenamefont {Siebeneich}, \citenamefont {Gloger}, \citenamefont {Kaufmann},
		\citenamefont {Johanning}, \citenamefont {Kleinmann}, \citenamefont
		{Wunderlich},\ and\ \citenamefont {G{\"u}hne}}]{spee2020genuine}%
	\BibitemOpen
	\bibfield  {author} {\bibinfo {author} {\bibfnamefont {C.}~\bibnamefont
			{Spee}}, \bibinfo {author} {\bibfnamefont {H.}~\bibnamefont {Siebeneich}},
		\bibinfo {author} {\bibfnamefont {T.~F.}\ \bibnamefont {Gloger}}, \bibinfo
		{author} {\bibfnamefont {P.}~\bibnamefont {Kaufmann}}, \bibinfo {author}
		{\bibfnamefont {M.}~\bibnamefont {Johanning}}, \bibinfo {author}
		{\bibfnamefont {M.}~\bibnamefont {Kleinmann}}, \bibinfo {author}
		{\bibfnamefont {C.}~\bibnamefont {Wunderlich}},\ and\ \bibinfo {author}
		{\bibfnamefont {O.}~\bibnamefont {G{\"u}hne}},\ }\bibfield  {title} {\bibinfo
		{title} {Genuine temporal correlations can certify the quantum dimension},\
	}\href {https://iopscience.iop.org/article/10.1088/1367-2630/ab6d42/meta}
	{\bibfield  {journal} {\bibinfo  {journal} {New J. Phys.}\ }\textbf {\bibinfo
			{volume} {22}},\ \bibinfo {pages} {023028} (\bibinfo {year}
		{2020})}\BibitemShut {NoStop}%
	\bibitem [{\citenamefont {Vieira}\ \emph {et~al.}(2024)\citenamefont {Vieira},
		\citenamefont {Milz}, \citenamefont {Vitagliano},\ and\ \citenamefont
		{Budroni}}]{vieira2024witnessing}%
	\BibitemOpen
	\bibfield  {author} {\bibinfo {author} {\bibfnamefont {L.~B.}\ \bibnamefont
			{Vieira}}, \bibinfo {author} {\bibfnamefont {S.}~\bibnamefont {Milz}},
		\bibinfo {author} {\bibfnamefont {G.}~\bibnamefont {Vitagliano}},\ and\
		\bibinfo {author} {\bibfnamefont {C.}~\bibnamefont {Budroni}},\ }\bibfield
	{title} {\bibinfo {title} {Witnessing environment dimension through temporal
			correlations},\ }\href {https://doi.org/10.22331/q-2024-01-10-1224}
	{\bibfield  {journal} {\bibinfo  {journal} {Quantum}\ }\textbf {\bibinfo
			{volume} {8}},\ \bibinfo {pages} {1224} (\bibinfo {year} {2024})}\BibitemShut
	{NoStop}%
	\bibitem [{\citenamefont {Erker}\ \emph {et~al.}(2017)\citenamefont {Erker},
		\citenamefont {Mitchison}, \citenamefont {Silva}, \citenamefont {Woods},
		\citenamefont {Brunner},\ and\ \citenamefont {Huber}}]{Paul2017autonomous}%
	\BibitemOpen
	\bibfield  {author} {\bibinfo {author} {\bibfnamefont {P.}~\bibnamefont
			{Erker}}, \bibinfo {author} {\bibfnamefont {M.~T.}\ \bibnamefont
			{Mitchison}}, \bibinfo {author} {\bibfnamefont {R.}~\bibnamefont {Silva}},
		\bibinfo {author} {\bibfnamefont {M.~P.}\ \bibnamefont {Woods}}, \bibinfo
		{author} {\bibfnamefont {N.}~\bibnamefont {Brunner}},\ and\ \bibinfo {author}
		{\bibfnamefont {M.}~\bibnamefont {Huber}},\ }\bibfield  {title} {\bibinfo
		{title} {Autonomous quantum clocks: Does thermodynamics limit our ability to
			measure time?},\ }\href {https://doi.org/10.1103/PhysRevX.7.031022}
	{\bibfield  {journal} {\bibinfo  {journal} {Phys. Rev. X}\ }\textbf {\bibinfo
			{volume} {7}},\ \bibinfo {pages} {031022} (\bibinfo {year}
		{2017})}\BibitemShut {NoStop}%
	\bibitem [{\citenamefont {Budroni}\ \emph {et~al.}(2021)\citenamefont
		{Budroni}, \citenamefont {Vitagliano},\ and\ \citenamefont
		{Woods}}]{costantino2021clock}%
	\BibitemOpen
	\bibfield  {author} {\bibinfo {author} {\bibfnamefont {C.}~\bibnamefont
			{Budroni}}, \bibinfo {author} {\bibfnamefont {G.}~\bibnamefont
			{Vitagliano}},\ and\ \bibinfo {author} {\bibfnamefont {M.~P.}\ \bibnamefont
			{Woods}},\ }\bibfield  {title} {\bibinfo {title} {Ticking-clock performance
			enhanced by nonclassical temporal correlations},\ }\href
	{https://doi.org/10.1103/PhysRevResearch.3.033051} {\bibfield  {journal}
		{\bibinfo  {journal} {Phys. Rev. Res.}\ }\textbf {\bibinfo {volume} {3}},\
		\bibinfo {pages} {033051} (\bibinfo {year} {2021})}\BibitemShut {NoStop}%
	\bibitem [{\citenamefont {Woods}\ \emph {et~al.}(2022)\citenamefont {Woods},
		\citenamefont {Silva}, \citenamefont {P\"utz}, \citenamefont {Stupar},\ and\
		\citenamefont {Renner}}]{woods2022quantum}%
	\BibitemOpen
	\bibfield  {author} {\bibinfo {author} {\bibfnamefont {M.~P.}\ \bibnamefont
			{Woods}}, \bibinfo {author} {\bibfnamefont {R.}~\bibnamefont {Silva}},
		\bibinfo {author} {\bibfnamefont {G.}~\bibnamefont {P\"utz}}, \bibinfo
		{author} {\bibfnamefont {S.}~\bibnamefont {Stupar}},\ and\ \bibinfo {author}
		{\bibfnamefont {R.}~\bibnamefont {Renner}},\ }\bibfield  {title} {\bibinfo
		{title} {Quantum clocks are more accurate than classical ones},\ }\href
	{https://doi.org/10.1103/PRXQuantum.3.010319} {\bibfield  {journal} {\bibinfo
			{journal} {PRX Quantum}\ }\textbf {\bibinfo {volume} {3}},\ \bibinfo {pages}
		{010319} (\bibinfo {year} {2022})}\BibitemShut {NoStop}%
	\bibitem [{\citenamefont {Fitzsimons}\ \emph {et~al.}(2015)\citenamefont
		{Fitzsimons}, \citenamefont {Jones},\ and\ \citenamefont
		{Vedral}}]{fitzsimons2015quantum}%
	\BibitemOpen
	\bibfield  {author} {\bibinfo {author} {\bibfnamefont {J.~F.}\ \bibnamefont
			{Fitzsimons}}, \bibinfo {author} {\bibfnamefont {J.~A.}\ \bibnamefont
			{Jones}},\ and\ \bibinfo {author} {\bibfnamefont {V.}~\bibnamefont
			{Vedral}},\ }\bibfield  {title} {\bibinfo {title} {Quantum correlations which
			imply causation},\ }\href {https://doi.org/10.1038/srep18281} {\bibfield
		{journal} {\bibinfo  {journal} {Sci. Rep.}\ }\textbf {\bibinfo {volume}
			{5}},\ \bibinfo {pages} {18281} (\bibinfo {year} {2015})}\BibitemShut
	{NoStop}%
	\bibitem [{\citenamefont {Shrotriya}\ \emph {et~al.}(2022)\citenamefont
		{Shrotriya}, \citenamefont {Kwek},\ and\ \citenamefont
		{Bharti}}]{shrotriya2022certifying}%
	\BibitemOpen
	\bibfield  {author} {\bibinfo {author} {\bibfnamefont {H.}~\bibnamefont
			{Shrotriya}}, \bibinfo {author} {\bibfnamefont {L.-C.}\ \bibnamefont
			{Kwek}},\ and\ \bibinfo {author} {\bibfnamefont {K.}~\bibnamefont {Bharti}},\
	}\bibfield  {title} {\bibinfo {title} {Certifying temporal correlations},\
	}\href {https://doi.org/10.48550/arXiv.2206.06092} {\bibfield  {journal}
		{\bibinfo  {journal} {arXiv preprint arXiv:2206.06092}\ } (\bibinfo {year}
		{2022})}\BibitemShut {NoStop}%
	\bibitem [{\citenamefont {Pusuluk}\ \emph {et~al.}(2022)\citenamefont
		{Pusuluk}, \citenamefont {Gedik},\ and\ \citenamefont
		{Vedral}}]{pusuluk2022witnessing}%
	\BibitemOpen
	\bibfield  {author} {\bibinfo {author} {\bibfnamefont {O.}~\bibnamefont
			{Pusuluk}}, \bibinfo {author} {\bibfnamefont {Z.}~\bibnamefont {Gedik}},\
		and\ \bibinfo {author} {\bibfnamefont {V.}~\bibnamefont {Vedral}},\
	}\bibfield  {title} {\bibinfo {title} {Witnessing superpositions of causal
			orders by weak measurements at a given time},\ }\href
	{https://doi.org/10.48550/arXiv.2209.09172} {\bibfield  {journal} {\bibinfo
			{journal} {arXiv:2209.09172}\ } (\bibinfo {year} {2022})}\BibitemShut
	{NoStop}%
	\bibitem [{\citenamefont {Liu}\ \emph {et~al.}(2024{\natexlab{a}})\citenamefont
		{Liu}, \citenamefont {Chen},\ and\ \citenamefont
		{Dahlsten}}]{liu2024inferring}%
	\BibitemOpen
	\bibfield  {author} {\bibinfo {author} {\bibfnamefont {X.}~\bibnamefont
			{Liu}}, \bibinfo {author} {\bibfnamefont {Q.}~\bibnamefont {Chen}},\ and\
		\bibinfo {author} {\bibfnamefont {O.}~\bibnamefont {Dahlsten}},\ }\bibfield
	{title} {\bibinfo {title} {Inferring the arrow of time in quantum
			spatiotemporal correlations},\ }\href
	{https://doi.org/10.1103/PhysRevA.109.032219} {\bibfield  {journal} {\bibinfo
			{journal} {Phys. Rev. A}\ }\textbf {\bibinfo {volume} {109}},\ \bibinfo
		{pages} {032219} (\bibinfo {year} {2024}{\natexlab{a}})}\BibitemShut
	{NoStop}%
	\bibitem [{\citenamefont {Liu}\ \emph {et~al.}(2023)\citenamefont {Liu},
		\citenamefont {Qiu}, \citenamefont {Dahlsten},\ and\ \citenamefont
		{Vedral}}]{liu2023quantum}%
	\BibitemOpen
	\bibfield  {author} {\bibinfo {author} {\bibfnamefont {X.}~\bibnamefont
			{Liu}}, \bibinfo {author} {\bibfnamefont {Y.}~\bibnamefont {Qiu}}, \bibinfo
		{author} {\bibfnamefont {O.}~\bibnamefont {Dahlsten}},\ and\ \bibinfo
		{author} {\bibfnamefont {V.}~\bibnamefont {Vedral}},\ }\bibfield  {title}
	{\bibinfo {title} {Quantum causal inference with extremely light touch},\
	}\href {https://doi.org/10.48550/arXiv.2303.10544} {\bibfield  {journal}
		{\bibinfo  {journal} {arXiv:2303.10544}\ } (\bibinfo {year}
		{2023})}\BibitemShut {NoStop}%
	\bibitem [{\citenamefont {Marletto}\ \emph {et~al.}(2019)\citenamefont
		{Marletto}, \citenamefont {Vedral}, \citenamefont {Virz{\`\i}}, \citenamefont
		{Rebufello}, \citenamefont {Avella}, \citenamefont {Piacentini},
		\citenamefont {Gramegna}, \citenamefont {Degiovanni},\ and\ \citenamefont
		{Genovese}}]{marletto2019theoretical}%
	\BibitemOpen
	\bibfield  {author} {\bibinfo {author} {\bibfnamefont {C.}~\bibnamefont
			{Marletto}}, \bibinfo {author} {\bibfnamefont {V.}~\bibnamefont {Vedral}},
		\bibinfo {author} {\bibfnamefont {S.}~\bibnamefont {Virz{\`\i}}}, \bibinfo
		{author} {\bibfnamefont {E.}~\bibnamefont {Rebufello}}, \bibinfo {author}
		{\bibfnamefont {A.}~\bibnamefont {Avella}}, \bibinfo {author} {\bibfnamefont
			{F.}~\bibnamefont {Piacentini}}, \bibinfo {author} {\bibfnamefont
			{M.}~\bibnamefont {Gramegna}}, \bibinfo {author} {\bibfnamefont {I.~P.}\
			\bibnamefont {Degiovanni}},\ and\ \bibinfo {author} {\bibfnamefont
			{M.}~\bibnamefont {Genovese}},\ }\bibfield  {title} {\bibinfo {title}
		{Theoretical description and experimental simulation of quantum entanglement
			near open time-like curves via pseudo-density operators},\ }\href
	{https://www.nature.com/articles/s41467-018-08100-1} {\bibfield  {journal}
		{\bibinfo  {journal} {Nature Commun.}\ }\textbf {\bibinfo {volume} {10}},\
		\bibinfo {pages} {182} (\bibinfo {year} {2019})}\BibitemShut {NoStop}%
	\bibitem [{\citenamefont {Jia}\ \emph {et~al.}(2023)\citenamefont {Jia},
		\citenamefont {Song},\ and\ \citenamefont {Kaszlikowski}}]{jia2023quantum}%
	\BibitemOpen
	\bibfield  {author} {\bibinfo {author} {\bibfnamefont {Z.}~\bibnamefont
			{Jia}}, \bibinfo {author} {\bibfnamefont {M.}~\bibnamefont {Song}},\ and\
		\bibinfo {author} {\bibfnamefont {D.}~\bibnamefont {Kaszlikowski}},\
	}\bibfield  {title} {\bibinfo {title} {Quantum space-time marginal problem:
			global causal structure from local causal information},\ }\href
	{https://doi.org/10.1088/1367-2630/ad1416} {\bibfield  {journal} {\bibinfo
			{journal} {New J. Phys.}\ }\textbf {\bibinfo {volume} {25}},\ \bibinfo
		{pages} {123038} (\bibinfo {year} {2023})}\BibitemShut {NoStop}%
	\bibitem [{\citenamefont {Song}\ \emph {et~al.}(2023)\citenamefont {Song},
		\citenamefont {Narasimhachar}, \citenamefont {Regula}, \citenamefont
		{Elliott},\ and\ \citenamefont {Gu}}]{song2023causal}%
	\BibitemOpen
	\bibfield  {author} {\bibinfo {author} {\bibfnamefont {M.}~\bibnamefont
			{Song}}, \bibinfo {author} {\bibfnamefont {V.}~\bibnamefont {Narasimhachar}},
		\bibinfo {author} {\bibfnamefont {B.}~\bibnamefont {Regula}}, \bibinfo
		{author} {\bibfnamefont {T.~J.}\ \bibnamefont {Elliott}},\ and\ \bibinfo
		{author} {\bibfnamefont {M.}~\bibnamefont {Gu}},\ }\bibfield  {title}
	{\bibinfo {title} {Causal classification of spatiotemporal quantum
			correlations},\ }\href {https://doi.org/10.48550/arXiv.2306.09336} {\bibfield
		{journal} {\bibinfo  {journal} {arXiv:2306.09336}\ } (\bibinfo {year}
		{2023})}\BibitemShut {NoStop}%
	\bibitem [{\citenamefont {Zhao}\ \emph {et~al.}(2018)\citenamefont {Zhao},
		\citenamefont {Pisarczyk}, \citenamefont {Thompson}, \citenamefont {Gu},
		\citenamefont {Vedral},\ and\ \citenamefont {Fitzsimons}}]{zhao2018geometry}%
	\BibitemOpen
	\bibfield  {author} {\bibinfo {author} {\bibfnamefont {Z.}~\bibnamefont
			{Zhao}}, \bibinfo {author} {\bibfnamefont {R.}~\bibnamefont {Pisarczyk}},
		\bibinfo {author} {\bibfnamefont {J.}~\bibnamefont {Thompson}}, \bibinfo
		{author} {\bibfnamefont {M.}~\bibnamefont {Gu}}, \bibinfo {author}
		{\bibfnamefont {V.}~\bibnamefont {Vedral}},\ and\ \bibinfo {author}
		{\bibfnamefont {J.~F.}\ \bibnamefont {Fitzsimons}},\ }\bibfield  {title}
	{\bibinfo {title} {Geometry of quantum correlations in space-time},\ }\href
	{https://doi.org/10.1103/PhysRevA.98.052312} {\bibfield  {journal} {\bibinfo
			{journal} {Phys. Rev. A}\ }\textbf {\bibinfo {volume} {98}},\ \bibinfo
		{pages} {052312} (\bibinfo {year} {2018})}\BibitemShut {NoStop}%
	\bibitem [{\citenamefont {Zhang}\ \emph {et~al.}(2020)\citenamefont {Zhang},
		\citenamefont {Dahlsten},\ and\ \citenamefont {Vedral}}]{zhang2020different}%
	\BibitemOpen
	\bibfield  {author} {\bibinfo {author} {\bibfnamefont {T.}~\bibnamefont
			{Zhang}}, \bibinfo {author} {\bibfnamefont {O.}~\bibnamefont {Dahlsten}},\
		and\ \bibinfo {author} {\bibfnamefont {V.}~\bibnamefont {Vedral}},\
	}\bibfield  {title} {\bibinfo {title} {Different instances of time as
			different quantum modes: quantum states across space-time for continuous
			variables},\ }\href
	{https://iopscience.iop.org/article/10.1088/1367-2630/ab6b9f/meta} {\bibfield
		{journal} {\bibinfo  {journal} {New J. Phys.}\ }\textbf {\bibinfo {volume}
			{22}},\ \bibinfo {pages} {023029} (\bibinfo {year} {2020})}\BibitemShut
	{NoStop}%
	\bibitem [{\citenamefont {Marletto}\ \emph {et~al.}(2021)\citenamefont
		{Marletto}, \citenamefont {Vedral}, \citenamefont {Virz{\`\i}}, \citenamefont
		{Avella}, \citenamefont {Piacentini}, \citenamefont {Gramegna}, \citenamefont
		{Degiovanni},\ and\ \citenamefont {Genovese}}]{marletto2021temporal}%
	\BibitemOpen
	\bibfield  {author} {\bibinfo {author} {\bibfnamefont {C.}~\bibnamefont
			{Marletto}}, \bibinfo {author} {\bibfnamefont {V.}~\bibnamefont {Vedral}},
		\bibinfo {author} {\bibfnamefont {S.}~\bibnamefont {Virz{\`\i}}}, \bibinfo
		{author} {\bibfnamefont {A.}~\bibnamefont {Avella}}, \bibinfo {author}
		{\bibfnamefont {F.}~\bibnamefont {Piacentini}}, \bibinfo {author}
		{\bibfnamefont {M.}~\bibnamefont {Gramegna}}, \bibinfo {author}
		{\bibfnamefont {I.~P.}\ \bibnamefont {Degiovanni}},\ and\ \bibinfo {author}
		{\bibfnamefont {M.}~\bibnamefont {Genovese}},\ }\bibfield  {title} {\bibinfo
		{title} {Temporal teleportation with pseudo-density operators: How dynamics
			emerges from temporal entanglement},\ }\href
	{https://www.science.org/doi/full/10.1126/sciadv.abe4742} {\bibfield
		{journal} {\bibinfo  {journal} {Sci. Adv.}\ }\textbf {\bibinfo {volume}
			{7}},\ \bibinfo {pages} {eabe4742} (\bibinfo {year} {2021})}\BibitemShut
	{NoStop}%
	\bibitem [{\citenamefont {Liu}\ \emph {et~al.}(2024{\natexlab{b}})\citenamefont
		{Liu}, \citenamefont {Jia}, \citenamefont {Qiu}, \citenamefont {Li},\ and\
		\citenamefont {Dahlsten}}]{liu2024unification}%
	\BibitemOpen
	\bibfield  {author} {\bibinfo {author} {\bibfnamefont {X.}~\bibnamefont
			{Liu}}, \bibinfo {author} {\bibfnamefont {Z.}~\bibnamefont {Jia}}, \bibinfo
		{author} {\bibfnamefont {Y.}~\bibnamefont {Qiu}}, \bibinfo {author}
		{\bibfnamefont {F.}~\bibnamefont {Li}},\ and\ \bibinfo {author}
		{\bibfnamefont {O.}~\bibnamefont {Dahlsten}},\ }\bibfield  {title} {\bibinfo
		{title} {Unification of spatiotemporal quantum formalisms: mapping between
			process and pseudo-density matrices via multiple-time states},\ }\href
	{https://doi.org/10.1088/1367-2630/ad264c} {\bibfield  {journal} {\bibinfo
			{journal} {New J. Phys.}\ }\textbf {\bibinfo {volume} {26}},\ \bibinfo
		{pages} {033008} (\bibinfo {year} {2024}{\natexlab{b}})}\BibitemShut
	{NoStop}%
	\bibitem [{\citenamefont {Pisarczyk}\ \emph {et~al.}(2019)\citenamefont
		{Pisarczyk}, \citenamefont {Zhao}, \citenamefont {Ouyang}, \citenamefont
		{Vedral},\ and\ \citenamefont {Fitzsimons}}]{Pisarczyk2019causal}%
	\BibitemOpen
	\bibfield  {author} {\bibinfo {author} {\bibfnamefont {R.}~\bibnamefont
			{Pisarczyk}}, \bibinfo {author} {\bibfnamefont {Z.}~\bibnamefont {Zhao}},
		\bibinfo {author} {\bibfnamefont {Y.}~\bibnamefont {Ouyang}}, \bibinfo
		{author} {\bibfnamefont {V.}~\bibnamefont {Vedral}},\ and\ \bibinfo {author}
		{\bibfnamefont {J.~F.}\ \bibnamefont {Fitzsimons}},\ }\bibfield  {title}
	{\bibinfo {title} {Causal limit on quantum communication},\ }\href
	{https://doi.org/10.1103/PhysRevLett.123.150502} {\bibfield  {journal}
		{\bibinfo  {journal} {Phys. Rev. Lett.}\ }\textbf {\bibinfo {volume} {123}},\
		\bibinfo {pages} {150502} (\bibinfo {year} {2019})}\BibitemShut {NoStop}%
	\bibitem [{\citenamefont {Aaronson}(2018)}]{aaronson2018shadow}%
	\BibitemOpen
	\bibfield  {author} {\bibinfo {author} {\bibfnamefont {S.}~\bibnamefont
			{Aaronson}},\ }\bibfield  {title} {\bibinfo {title} {Shadow tomography of
			quantum states},\ }in\ \href {https://doi.org/10.1145/3188745.3188802} {\emph
		{\bibinfo {booktitle} {Proceedings of the 50th Annual ACM SIGACT Symposium on
				Theory of Computing}}},\ \bibinfo {series and number} {STOC 2018}\ (\bibinfo
	{publisher} {Association for Computing Machinery},\ \bibinfo {address} {New
		York, NY, USA},\ \bibinfo {year} {2018})\ p.\ \bibinfo {pages}
	{325–338}\BibitemShut {NoStop}%
	\bibitem [{\citenamefont {Huang}\ \emph {et~al.}(2020)\citenamefont {Huang},
		\citenamefont {Kueng},\ and\ \citenamefont {Preskill}}]{huang2020predicting}%
	\BibitemOpen
	\bibfield  {author} {\bibinfo {author} {\bibfnamefont {H.-Y.}\ \bibnamefont
			{Huang}}, \bibinfo {author} {\bibfnamefont {R.}~\bibnamefont {Kueng}},\ and\
		\bibinfo {author} {\bibfnamefont {J.}~\bibnamefont {Preskill}},\ }\bibfield
	{title} {\bibinfo {title} {Predicting many properties of a quantum system
			from very few measurements},\ }\href
	{https://doi.org/10.1038/s41567-020-0932-7} {\bibfield  {journal} {\bibinfo
			{journal} {Nature Phys.}\ }\textbf {\bibinfo {volume} {16}},\ \bibinfo
		{pages} {1050} (\bibinfo {year} {2020})}\BibitemShut {NoStop}%
	\bibitem [{\citenamefont {Temme}\ \emph {et~al.}(2017)\citenamefont {Temme},
		\citenamefont {Bravyi},\ and\ \citenamefont {Gambetta}}]{temme2017pec}%
	\BibitemOpen
	\bibfield  {author} {\bibinfo {author} {\bibfnamefont {K.}~\bibnamefont
			{Temme}}, \bibinfo {author} {\bibfnamefont {S.}~\bibnamefont {Bravyi}},\ and\
		\bibinfo {author} {\bibfnamefont {J.~M.}\ \bibnamefont {Gambetta}},\
	}\bibfield  {title} {\bibinfo {title} {Error mitigation for short-depth
			quantum circuits},\ }\href {https://doi.org/10.1103/PhysRevLett.119.180509}
	{\bibfield  {journal} {\bibinfo  {journal} {Phys. Rev. Lett.}\ }\textbf
		{\bibinfo {volume} {119}},\ \bibinfo {pages} {180509} (\bibinfo {year}
		{2017})}\BibitemShut {NoStop}%
	\bibitem [{\citenamefont {Endo}\ \emph {et~al.}(2018)\citenamefont {Endo},
		\citenamefont {Benjamin},\ and\ \citenamefont {Li}}]{endo2018pec}%
	\BibitemOpen
	\bibfield  {author} {\bibinfo {author} {\bibfnamefont {S.}~\bibnamefont
			{Endo}}, \bibinfo {author} {\bibfnamefont {S.~C.}\ \bibnamefont {Benjamin}},\
		and\ \bibinfo {author} {\bibfnamefont {Y.}~\bibnamefont {Li}},\ }\bibfield
	{title} {\bibinfo {title} {Practical quantum error mitigation for near-future
			applications},\ }\href {https://doi.org/10.1103/PhysRevX.8.031027} {\bibfield
		{journal} {\bibinfo  {journal} {Phys. Rev. X}\ }\textbf {\bibinfo {volume}
			{8}},\ \bibinfo {pages} {031027} (\bibinfo {year} {2018})}\BibitemShut
	{NoStop}%
	\bibitem [{\citenamefont {Elben}\ \emph {et~al.}(2019)\citenamefont {Elben},
		\citenamefont {Vermersch}, \citenamefont {Roos},\ and\ \citenamefont
		{Zoller}}]{Elben2019toolbox}%
	\BibitemOpen
	\bibfield  {author} {\bibinfo {author} {\bibfnamefont {A.}~\bibnamefont
			{Elben}}, \bibinfo {author} {\bibfnamefont {B.}~\bibnamefont {Vermersch}},
		\bibinfo {author} {\bibfnamefont {C.~F.}\ \bibnamefont {Roos}},\ and\
		\bibinfo {author} {\bibfnamefont {P.}~\bibnamefont {Zoller}},\ }\bibfield
	{title} {\bibinfo {title} {Statistical correlations between locally
			randomized measurements: A toolbox for probing entanglement in many-body
			quantum states},\ }\href {https://doi.org/10.1103/PhysRevA.99.052323}
	{\bibfield  {journal} {\bibinfo  {journal} {Phys. Rev. A}\ }\textbf {\bibinfo
			{volume} {99}},\ \bibinfo {pages} {052323} (\bibinfo {year}
		{2019})}\BibitemShut {NoStop}%
	\bibitem [{\citenamefont {Brydges}\ \emph
		{et~al.}(2019{\natexlab{b}})\citenamefont {Brydges}, \citenamefont {Elben},
		\citenamefont {Jurcevic}, \citenamefont {Vermersch}, \citenamefont {Maier},
		\citenamefont {Lanyon}, \citenamefont {Zoller}, \citenamefont {Blatt},\ and\
		\citenamefont {Roos}}]{brydges2019probing}%
	\BibitemOpen
	\bibfield  {author} {\bibinfo {author} {\bibfnamefont {T.}~\bibnamefont
			{Brydges}}, \bibinfo {author} {\bibfnamefont {A.}~\bibnamefont {Elben}},
		\bibinfo {author} {\bibfnamefont {P.}~\bibnamefont {Jurcevic}}, \bibinfo
		{author} {\bibfnamefont {B.}~\bibnamefont {Vermersch}}, \bibinfo {author}
		{\bibfnamefont {C.}~\bibnamefont {Maier}}, \bibinfo {author} {\bibfnamefont
			{B.~P.}\ \bibnamefont {Lanyon}}, \bibinfo {author} {\bibfnamefont
			{P.}~\bibnamefont {Zoller}}, \bibinfo {author} {\bibfnamefont
			{R.}~\bibnamefont {Blatt}},\ and\ \bibinfo {author} {\bibfnamefont {C.~F.}\
			\bibnamefont {Roos}},\ }\bibfield  {title} {\bibinfo {title} {Probing rényi
			entanglement entropy via randomized measurements},\ }\href
	{https://doi.org/10.1126/science.aau4963} {\bibfield  {journal} {\bibinfo
			{journal} {Science}\ }\textbf {\bibinfo {volume} {364}},\ \bibinfo {pages}
		{260} (\bibinfo {year} {2019}{\natexlab{b}})}\BibitemShut {NoStop}%
	\bibitem [{\citenamefont {Elben}\ \emph {et~al.}(2023)\citenamefont {Elben},
		\citenamefont {Flammia}, \citenamefont {Huang}, \citenamefont {Kueng},
		\citenamefont {Preskill}, \citenamefont {Vermersch},\ and\ \citenamefont
		{Zoller}}]{Elben2023toolbox}%
	\BibitemOpen
	\bibfield  {author} {\bibinfo {author} {\bibfnamefont {A.}~\bibnamefont
			{Elben}}, \bibinfo {author} {\bibfnamefont {S.~T.}\ \bibnamefont {Flammia}},
		\bibinfo {author} {\bibfnamefont {H.-Y.}\ \bibnamefont {Huang}}, \bibinfo
		{author} {\bibfnamefont {R.}~\bibnamefont {Kueng}}, \bibinfo {author}
		{\bibfnamefont {J.}~\bibnamefont {Preskill}}, \bibinfo {author}
		{\bibfnamefont {B.}~\bibnamefont {Vermersch}},\ and\ \bibinfo {author}
		{\bibfnamefont {P.}~\bibnamefont {Zoller}},\ }\bibfield  {title} {\bibinfo
		{title} {The randomized measurement toolbox},\ }\href
	{https://doi.org/10.1038/s42254-022-00535-2} {\bibfield  {journal} {\bibinfo
			{journal} {Nat. Rev. Phys.}\ }\textbf {\bibinfo {volume} {5}},\ \bibinfo
		{pages} {9} (\bibinfo {year} {2023})}\BibitemShut {NoStop}%
	\bibitem [{\citenamefont {Fullwood}\ and\ \citenamefont
		{Parzygnat}(2022)}]{fullwood2022quantum}%
	\BibitemOpen
	\bibfield  {author} {\bibinfo {author} {\bibfnamefont {J.}~\bibnamefont
			{Fullwood}}\ and\ \bibinfo {author} {\bibfnamefont {A.~J.}\ \bibnamefont
			{Parzygnat}},\ }\bibfield  {title} {\bibinfo {title} {On quantum states over
			time},\ }\href
	{https://royalsocietypublishing.org/doi/10.1098/rspa.2022.0104} {\bibfield
		{journal} {\bibinfo  {journal} {Proceedings of the Royal Society A}\ }\textbf
		{\bibinfo {volume} {478}},\ \bibinfo {pages} {20220104} (\bibinfo {year}
		{2022})}\BibitemShut {NoStop}%
	\bibitem [{\citenamefont {Parzygnat}\ and\ \citenamefont
		{Fullwood}(2023)}]{arthur2023from}%
	\BibitemOpen
	\bibfield  {author} {\bibinfo {author} {\bibfnamefont {A.~J.}\ \bibnamefont
			{Parzygnat}}\ and\ \bibinfo {author} {\bibfnamefont {J.}~\bibnamefont
			{Fullwood}},\ }\bibfield  {title} {\bibinfo {title} {From time-reversal
			symmetry to quantum bayes' rules},\ }\href
	{https://doi.org/10.1103/PRXQuantum.4.020334} {\bibfield  {journal} {\bibinfo
			{journal} {PRX Quantum}\ }\textbf {\bibinfo {volume} {4}},\ \bibinfo {pages}
		{020334} (\bibinfo {year} {2023})}\BibitemShut {NoStop}%
	\bibitem [{\citenamefont {Lie}\ and\ \citenamefont {Ng}(2024)}]{lie2024qsot}%
	\BibitemOpen
	\bibfield  {author} {\bibinfo {author} {\bibfnamefont {S.~H.}\ \bibnamefont
			{Lie}}\ and\ \bibinfo {author} {\bibfnamefont {N.~H.~Y.}\ \bibnamefont
			{Ng}},\ }\bibfield  {title} {\bibinfo {title} {Quantum state over time is
			unique},\ }\href {https://doi.org/10.1103/PhysRevResearch.6.033144}
	{\bibfield  {journal} {\bibinfo  {journal} {Phys. Rev. Res.}\ }\textbf
		{\bibinfo {volume} {6}},\ \bibinfo {pages} {033144} (\bibinfo {year}
		{2024})}\BibitemShut {NoStop}%
	\bibitem [{\citenamefont {Haah}\ \emph {et~al.}(2016)\citenamefont {Haah},
		\citenamefont {Harrow}, \citenamefont {Ji}, \citenamefont {Wu},\ and\
		\citenamefont {Yu}}]{haah2016tomo}%
	\BibitemOpen
	\bibfield  {author} {\bibinfo {author} {\bibfnamefont {J.}~\bibnamefont
			{Haah}}, \bibinfo {author} {\bibfnamefont {A.~W.}\ \bibnamefont {Harrow}},
		\bibinfo {author} {\bibfnamefont {Z.}~\bibnamefont {Ji}}, \bibinfo {author}
		{\bibfnamefont {X.}~\bibnamefont {Wu}},\ and\ \bibinfo {author}
		{\bibfnamefont {N.}~\bibnamefont {Yu}},\ }\bibfield  {title} {\bibinfo
		{title} {Sample-optimal tomography of quantum states},\ }in\ \href
	{https://doi.org/10.1145/2897518.2897585} {\emph {\bibinfo {booktitle}
			{Proceedings of the Forty-Eighth Annual ACM Symposium on Theory of
				Computing}}},\ \bibinfo {series and number} {STOC '16}\ (\bibinfo
	{publisher} {Association for Computing Machinery},\ \bibinfo {address} {New
		York, NY, USA},\ \bibinfo {year} {2016})\ p.\ \bibinfo {pages}
	{913–925}\BibitemShut {NoStop}%
	\bibitem [{\citenamefont {Chen}\ \emph {et~al.}(2023)\citenamefont {Chen},
		\citenamefont {Huang}, \citenamefont {Li}, \citenamefont {Liu},\ and\
		\citenamefont {Sellke}}]{chen2023adaptivity}%
	\BibitemOpen
	\bibfield  {author} {\bibinfo {author} {\bibfnamefont {S.}~\bibnamefont
			{Chen}}, \bibinfo {author} {\bibfnamefont {B.}~\bibnamefont {Huang}},
		\bibinfo {author} {\bibfnamefont {J.}~\bibnamefont {Li}}, \bibinfo {author}
		{\bibfnamefont {A.}~\bibnamefont {Liu}},\ and\ \bibinfo {author}
		{\bibfnamefont {M.}~\bibnamefont {Sellke}},\ }\bibfield  {title} {\bibinfo
		{title} {When does adaptivity help for quantum state learning?},\ }in\ \href
	{https://doi.org/10.1109/FOCS57990.2023.00029} {\emph {\bibinfo {booktitle}
			{2023 IEEE 64th Annual Symposium on Foundations of Computer Science
				(FOCS)}}}\ (\bibinfo {year} {2023})\ pp.\ \bibinfo {pages}
	{391--404}\BibitemShut {NoStop}%
	\bibitem [{\citenamefont {Carteret}(2005)}]{carteret2005peres}%
	\BibitemOpen
	\bibfield  {author} {\bibinfo {author} {\bibfnamefont {H.~A.}\ \bibnamefont
			{Carteret}},\ }\bibfield  {title} {\bibinfo {title} {Noiseless quantum
			circuits for the peres separability criterion},\ }\href
	{https://doi.org/10.1103/PhysRevLett.94.040502} {\bibfield  {journal}
		{\bibinfo  {journal} {Phys. Rev. Lett.}\ }\textbf {\bibinfo {volume} {94}},\
		\bibinfo {pages} {040502} (\bibinfo {year} {2005})}\BibitemShut {NoStop}%
	\bibitem [{\citenamefont {Cai}\ and\ \citenamefont
		{Song}(2008)}]{cai2008novel}%
	\BibitemOpen
	\bibfield  {author} {\bibinfo {author} {\bibfnamefont {J.}~\bibnamefont
			{Cai}}\ and\ \bibinfo {author} {\bibfnamefont {W.}~\bibnamefont {Song}},\
	}\bibfield  {title} {\bibinfo {title} {Novel schemes for directly measuring
			entanglement of general states},\ }\href
	{https://doi.org/10.1103/PhysRevLett.101.190503} {\bibfield  {journal}
		{\bibinfo  {journal} {Phys. Rev. Lett.}\ }\textbf {\bibinfo {volume} {101}},\
		\bibinfo {pages} {190503} (\bibinfo {year} {2008})}\BibitemShut {NoStop}%
	\bibitem [{\citenamefont {Gray}\ \emph {et~al.}(2018)\citenamefont {Gray},
		\citenamefont {Banchi}, \citenamefont {Bayat},\ and\ \citenamefont
		{Bose}}]{gray2018ml}%
	\BibitemOpen
	\bibfield  {author} {\bibinfo {author} {\bibfnamefont {J.}~\bibnamefont
			{Gray}}, \bibinfo {author} {\bibfnamefont {L.}~\bibnamefont {Banchi}},
		\bibinfo {author} {\bibfnamefont {A.}~\bibnamefont {Bayat}},\ and\ \bibinfo
		{author} {\bibfnamefont {S.}~\bibnamefont {Bose}},\ }\bibfield  {title}
	{\bibinfo {title} {Machine-learning-assisted many-body entanglement
			measurement},\ }\href {https://doi.org/10.1103/PhysRevLett.121.150503}
	{\bibfield  {journal} {\bibinfo  {journal} {Phys. Rev. Lett.}\ }\textbf
		{\bibinfo {volume} {121}},\ \bibinfo {pages} {150503} (\bibinfo {year}
		{2018})}\BibitemShut {NoStop}%
	\bibitem [{\citenamefont {Elben}\ \emph {et~al.}(2020)\citenamefont {Elben},
		\citenamefont {Kueng}, \citenamefont {Huang}, \citenamefont {van Bijnen},
		\citenamefont {Kokail}, \citenamefont {Dalmonte}, \citenamefont {Calabrese},
		\citenamefont {Kraus}, \citenamefont {Preskill}, \citenamefont {Zoller},\
		and\ \citenamefont {Vermersch}}]{elben2020mixed}%
	\BibitemOpen
	\bibfield  {author} {\bibinfo {author} {\bibfnamefont {A.}~\bibnamefont
			{Elben}}, \bibinfo {author} {\bibfnamefont {R.}~\bibnamefont {Kueng}},
		\bibinfo {author} {\bibfnamefont {H.-Y.~R.}\ \bibnamefont {Huang}}, \bibinfo
		{author} {\bibfnamefont {R.}~\bibnamefont {van Bijnen}}, \bibinfo {author}
		{\bibfnamefont {C.}~\bibnamefont {Kokail}}, \bibinfo {author} {\bibfnamefont
			{M.}~\bibnamefont {Dalmonte}}, \bibinfo {author} {\bibfnamefont
			{P.}~\bibnamefont {Calabrese}}, \bibinfo {author} {\bibfnamefont
			{B.}~\bibnamefont {Kraus}}, \bibinfo {author} {\bibfnamefont
			{J.}~\bibnamefont {Preskill}}, \bibinfo {author} {\bibfnamefont
			{P.}~\bibnamefont {Zoller}},\ and\ \bibinfo {author} {\bibfnamefont
			{B.}~\bibnamefont {Vermersch}},\ }\bibfield  {title} {\bibinfo {title}
		{Mixed-state entanglement from local randomized measurements},\ }\href
	{https://doi.org/10.1103/PhysRevLett.125.200501} {\bibfield  {journal}
		{\bibinfo  {journal} {Phys. Rev. Lett.}\ }\textbf {\bibinfo {volume} {125}},\
		\bibinfo {pages} {200501} (\bibinfo {year} {2020})}\BibitemShut {NoStop}%
	\bibitem [{\citenamefont {Zhou}\ \emph {et~al.}(2020)\citenamefont {Zhou},
		\citenamefont {Zeng},\ and\ \citenamefont {Liu}}]{zhou2020single}%
	\BibitemOpen
	\bibfield  {author} {\bibinfo {author} {\bibfnamefont {Y.}~\bibnamefont
			{Zhou}}, \bibinfo {author} {\bibfnamefont {P.}~\bibnamefont {Zeng}},\ and\
		\bibinfo {author} {\bibfnamefont {Z.}~\bibnamefont {Liu}},\ }\bibfield
	{title} {\bibinfo {title} {Single-copies estimation of entanglement
			negativity},\ }\href {https://doi.org/10.1103/PhysRevLett.125.200502}
	{\bibfield  {journal} {\bibinfo  {journal} {Phys. Rev. Lett.}\ }\textbf
		{\bibinfo {volume} {125}},\ \bibinfo {pages} {200502} (\bibinfo {year}
		{2020})}\BibitemShut {NoStop}%
	\bibitem [{\citenamefont {Neven}\ \emph {et~al.}(2021)\citenamefont {Neven},
		\citenamefont {Carrasco}, \citenamefont {Vitale}, \citenamefont {Kokail},
		\citenamefont {Elben}, \citenamefont {Dalmonte}, \citenamefont {Calabrese},
		\citenamefont {Zoller}, \citenamefont {Vermersch}, \citenamefont {Kueng},\
		and\ \citenamefont {Kraus}}]{Neven2021resolved}%
	\BibitemOpen
	\bibfield  {author} {\bibinfo {author} {\bibfnamefont {A.}~\bibnamefont
			{Neven}}, \bibinfo {author} {\bibfnamefont {J.}~\bibnamefont {Carrasco}},
		\bibinfo {author} {\bibfnamefont {V.}~\bibnamefont {Vitale}}, \bibinfo
		{author} {\bibfnamefont {C.}~\bibnamefont {Kokail}}, \bibinfo {author}
		{\bibfnamefont {A.}~\bibnamefont {Elben}}, \bibinfo {author} {\bibfnamefont
			{M.}~\bibnamefont {Dalmonte}}, \bibinfo {author} {\bibfnamefont
			{P.}~\bibnamefont {Calabrese}}, \bibinfo {author} {\bibfnamefont
			{P.}~\bibnamefont {Zoller}}, \bibinfo {author} {\bibfnamefont
			{B.}~\bibnamefont {Vermersch}}, \bibinfo {author} {\bibfnamefont
			{R.}~\bibnamefont {Kueng}},\ and\ \bibinfo {author} {\bibfnamefont
			{B.}~\bibnamefont {Kraus}},\ }\bibfield  {title} {\bibinfo {title}
		{Symmetry-resolved entanglement detection using partial transpose moments},\
	}\href {https://doi.org/10.1038/s41534-021-00487-y} {\bibfield  {journal}
		{\bibinfo  {journal} {npj Quantum Inf.}\ }\textbf {\bibinfo {volume} {7}},\
		\bibinfo {pages} {152} (\bibinfo {year} {2021})}\BibitemShut {NoStop}%
	\bibitem [{\citenamefont {Yu}\ \emph {et~al.}(2021)\citenamefont {Yu},
		\citenamefont {Imai},\ and\ \citenamefont {G\"uhne}}]{yu2021optimal}%
	\BibitemOpen
	\bibfield  {author} {\bibinfo {author} {\bibfnamefont {X.-D.}\ \bibnamefont
			{Yu}}, \bibinfo {author} {\bibfnamefont {S.}~\bibnamefont {Imai}},\ and\
		\bibinfo {author} {\bibfnamefont {O.}~\bibnamefont {G\"uhne}},\ }\bibfield
	{title} {\bibinfo {title} {Optimal entanglement certification from moments of
			the partial transpose},\ }\href
	{https://doi.org/10.1103/PhysRevLett.127.060504} {\bibfield  {journal}
		{\bibinfo  {journal} {Phys. Rev. Lett.}\ }\textbf {\bibinfo {volume} {127}},\
		\bibinfo {pages} {060504} (\bibinfo {year} {2021})}\BibitemShut {NoStop}%
	\bibitem [{\citenamefont {Liu}\ \emph {et~al.}(2022)\citenamefont {Liu},
		\citenamefont {Tang}, \citenamefont {Dai}, \citenamefont {Liu}, \citenamefont
		{Chen},\ and\ \citenamefont {Ma}}]{liu2022detecting}%
	\BibitemOpen
	\bibfield  {author} {\bibinfo {author} {\bibfnamefont {Z.}~\bibnamefont
			{Liu}}, \bibinfo {author} {\bibfnamefont {Y.}~\bibnamefont {Tang}}, \bibinfo
		{author} {\bibfnamefont {H.}~\bibnamefont {Dai}}, \bibinfo {author}
		{\bibfnamefont {P.}~\bibnamefont {Liu}}, \bibinfo {author} {\bibfnamefont
			{S.}~\bibnamefont {Chen}},\ and\ \bibinfo {author} {\bibfnamefont
			{X.}~\bibnamefont {Ma}},\ }\bibfield  {title} {\bibinfo {title} {Detecting
			entanglement in quantum many-body systems via permutation moments},\ }\href
	{https://doi.org/10.1103/PhysRevLett.129.260501} {\bibfield  {journal}
		{\bibinfo  {journal} {Phys. Rev. Lett.}\ }\textbf {\bibinfo {volume} {129}},\
		\bibinfo {pages} {260501} (\bibinfo {year} {2022})}\BibitemShut {NoStop}%
	\bibitem [{\citenamefont {Parzygnat}\ \emph {et~al.}(2024)\citenamefont
		{Parzygnat}, \citenamefont {Fullwood}, \citenamefont {Buscemi},\ and\
		\citenamefont {Chiribella}}]{arthur2024virtual}%
	\BibitemOpen
	\bibfield  {author} {\bibinfo {author} {\bibfnamefont {A.~J.}\ \bibnamefont
			{Parzygnat}}, \bibinfo {author} {\bibfnamefont {J.}~\bibnamefont {Fullwood}},
		\bibinfo {author} {\bibfnamefont {F.}~\bibnamefont {Buscemi}},\ and\ \bibinfo
		{author} {\bibfnamefont {G.}~\bibnamefont {Chiribella}},\ }\bibfield  {title}
	{\bibinfo {title} {Virtual quantum broadcasting},\ }\href
	{https://doi.org/10.1103/PhysRevLett.132.110203} {\bibfield  {journal}
		{\bibinfo  {journal} {Phys. Rev. Lett.}\ }\textbf {\bibinfo {volume} {132}},\
		\bibinfo {pages} {110203} (\bibinfo {year} {2024})}\BibitemShut {NoStop}%
	\bibitem [{\citenamefont {Chen}\ \emph {et~al.}(2022)\citenamefont {Chen},
		\citenamefont {Cotler}, \citenamefont {Huang},\ and\ \citenamefont
		{Li}}]{chen2022separation}%
	\BibitemOpen
	\bibfield  {author} {\bibinfo {author} {\bibfnamefont {S.}~\bibnamefont
			{Chen}}, \bibinfo {author} {\bibfnamefont {J.}~\bibnamefont {Cotler}},
		\bibinfo {author} {\bibfnamefont {H.-Y.}\ \bibnamefont {Huang}},\ and\
		\bibinfo {author} {\bibfnamefont {J.}~\bibnamefont {Li}},\ }\bibfield
	{title} {\bibinfo {title} {Exponential separations between learning with and
			without quantum memory},\ }in\ \href
	{https://doi.org/10.1109/FOCS52979.2021.00063} {\emph {\bibinfo {booktitle}
			{2021 IEEE 62nd Annual Symposium on Foundations of Computer Science
				(FOCS)}}}\ (\bibinfo {year} {2022})\ pp.\ \bibinfo {pages}
	{574--585}\BibitemShut {NoStop}%
	\bibitem [{\citenamefont {Chen}\ \emph {et~al.}(2024)\citenamefont {Chen},
		\citenamefont {Oh}, \citenamefont {Zhou}, \citenamefont {Huang},\ and\
		\citenamefont {Jiang}}]{chen2024tight}%
	\BibitemOpen
	\bibfield  {author} {\bibinfo {author} {\bibfnamefont {S.}~\bibnamefont
			{Chen}}, \bibinfo {author} {\bibfnamefont {C.}~\bibnamefont {Oh}}, \bibinfo
		{author} {\bibfnamefont {S.}~\bibnamefont {Zhou}}, \bibinfo {author}
		{\bibfnamefont {H.-Y.}\ \bibnamefont {Huang}},\ and\ \bibinfo {author}
		{\bibfnamefont {L.}~\bibnamefont {Jiang}},\ }\bibfield  {title} {\bibinfo
		{title} {Tight bounds on pauli channel learning without entanglement},\
	}\href {https://doi.org/10.1103/PhysRevLett.132.180805} {\bibfield  {journal}
		{\bibinfo  {journal} {Phys. Rev. Lett.}\ }\textbf {\bibinfo {volume} {132}},\
		\bibinfo {pages} {180805} (\bibinfo {year} {2024})}\BibitemShut {NoStop}%
	\bibitem [{\citenamefont {Zhu}(2017)}]{zhu2017clifford}%
	\BibitemOpen
	\bibfield  {author} {\bibinfo {author} {\bibfnamefont {H.}~\bibnamefont
			{Zhu}},\ }\bibfield  {title} {\bibinfo {title} {Multiqubit clifford groups
			are unitary 3-designs},\ }\href {https://doi.org/10.1103/PhysRevA.96.062336}
	{\bibfield  {journal} {\bibinfo  {journal} {Phys. Rev. A}\ }\textbf {\bibinfo
			{volume} {96}},\ \bibinfo {pages} {062336} (\bibinfo {year}
		{2017})}\BibitemShut {NoStop}%
	\bibitem [{\citenamefont {Long}\ \emph {et~al.}(2022)\citenamefont {Long},
		\citenamefont {He}, \citenamefont {Zhang}, \citenamefont {Tang},
		\citenamefont {Lin}, \citenamefont {Liu}, \citenamefont {Nie}, \citenamefont
		{Feng}, \citenamefont {Li}, \citenamefont {Xin}, \citenamefont {Ai},\ and\
		\citenamefont {Lu}}]{PhysRevLett.129.070502}%
	\BibitemOpen
	\bibfield  {author} {\bibinfo {author} {\bibfnamefont {X.}~\bibnamefont
			{Long}}, \bibinfo {author} {\bibfnamefont {W.-T.}\ \bibnamefont {He}},
		\bibinfo {author} {\bibfnamefont {N.-N.}\ \bibnamefont {Zhang}}, \bibinfo
		{author} {\bibfnamefont {K.}~\bibnamefont {Tang}}, \bibinfo {author}
		{\bibfnamefont {Z.}~\bibnamefont {Lin}}, \bibinfo {author} {\bibfnamefont
			{H.}~\bibnamefont {Liu}}, \bibinfo {author} {\bibfnamefont {X.}~\bibnamefont
			{Nie}}, \bibinfo {author} {\bibfnamefont {G.}~\bibnamefont {Feng}}, \bibinfo
		{author} {\bibfnamefont {J.}~\bibnamefont {Li}}, \bibinfo {author}
		{\bibfnamefont {T.}~\bibnamefont {Xin}}, \bibinfo {author} {\bibfnamefont
			{Q.}~\bibnamefont {Ai}},\ and\ \bibinfo {author} {\bibfnamefont
			{D.}~\bibnamefont {Lu}},\ }\bibfield  {title} {\bibinfo {title}
		{Entanglement-enhanced quantum metrology in colored noise by quantum zeno
			effect},\ }\href {https://doi.org/10.1103/PhysRevLett.129.070502} {\bibfield
		{journal} {\bibinfo  {journal} {Phys. Rev. Lett.}\ }\textbf {\bibinfo
			{volume} {129}},\ \bibinfo {pages} {070502} (\bibinfo {year}
		{2022})}\BibitemShut {NoStop}%
	\bibitem [{\citenamefont {Huang}\ \emph {et~al.}(2024)\citenamefont {Huang},
		\citenamefont {Xi}, \citenamefont {Long}, \citenamefont {Liu}, \citenamefont
		{Fan}, \citenamefont {Wang}, \citenamefont {Zheng}, \citenamefont {Feng},
		\citenamefont {Nie},\ and\ \citenamefont {Lu}}]{PhysRevLett.132.210403}%
	\BibitemOpen
	\bibfield  {author} {\bibinfo {author} {\bibfnamefont {K.}~\bibnamefont
			{Huang}}, \bibinfo {author} {\bibfnamefont {C.}~\bibnamefont {Xi}}, \bibinfo
		{author} {\bibfnamefont {X.}~\bibnamefont {Long}}, \bibinfo {author}
		{\bibfnamefont {H.}~\bibnamefont {Liu}}, \bibinfo {author} {\bibfnamefont
			{Y.-a.}\ \bibnamefont {Fan}}, \bibinfo {author} {\bibfnamefont
			{X.}~\bibnamefont {Wang}}, \bibinfo {author} {\bibfnamefont {Y.}~\bibnamefont
			{Zheng}}, \bibinfo {author} {\bibfnamefont {Y.}~\bibnamefont {Feng}},
		\bibinfo {author} {\bibfnamefont {X.}~\bibnamefont {Nie}},\ and\ \bibinfo
		{author} {\bibfnamefont {D.}~\bibnamefont {Lu}},\ }\bibfield  {title}
	{\bibinfo {title} {Experimental realization of self-contained quantum
			refrigeration},\ }\href {https://doi.org/10.1103/PhysRevLett.132.210403}
	{\bibfield  {journal} {\bibinfo  {journal} {Phys. Rev. Lett.}\ }\textbf
		{\bibinfo {volume} {132}},\ \bibinfo {pages} {210403} (\bibinfo {year}
		{2024})}\BibitemShut {NoStop}%
	\bibitem [{\citenamefont {Khaneja}\ \emph {et~al.}(2005)\citenamefont
		{Khaneja}, \citenamefont {Reiss}, \citenamefont {Kehlet}, \citenamefont
		{Schulte-Herbr{\"u}ggen},\ and\ \citenamefont {Glaser}}]{khaneja2005optimal}%
	\BibitemOpen
	\bibfield  {author} {\bibinfo {author} {\bibfnamefont {N.}~\bibnamefont
			{Khaneja}}, \bibinfo {author} {\bibfnamefont {T.}~\bibnamefont {Reiss}},
		\bibinfo {author} {\bibfnamefont {C.}~\bibnamefont {Kehlet}}, \bibinfo
		{author} {\bibfnamefont {T.}~\bibnamefont {Schulte-Herbr{\"u}ggen}},\ and\
		\bibinfo {author} {\bibfnamefont {S.~J.}\ \bibnamefont {Glaser}},\ }\bibfield
	{title} {\bibinfo {title} {Optimal control of coupled spin dynamics: design
			of nmr pulse sequences by gradient ascent algorithms},\ }\href@noop {}
	{\bibfield  {journal} {\bibinfo  {journal} {Journal of magnetic resonance}\
		}\textbf {\bibinfo {volume} {172}},\ \bibinfo {pages} {296} (\bibinfo {year}
		{2005})}\BibitemShut {NoStop}%
	\bibitem [{\citenamefont {Li}\ \emph {et~al.}(2017)\citenamefont {Li},
		\citenamefont {Huang}, \citenamefont {Luo}, \citenamefont {Li}, \citenamefont
		{Lu},\ and\ \citenamefont {Zeng}}]{li2017NMRmeasurement}%
	\BibitemOpen
	\bibfield  {author} {\bibinfo {author} {\bibfnamefont {J.}~\bibnamefont
			{Li}}, \bibinfo {author} {\bibfnamefont {S.}~\bibnamefont {Huang}}, \bibinfo
		{author} {\bibfnamefont {Z.}~\bibnamefont {Luo}}, \bibinfo {author}
		{\bibfnamefont {K.}~\bibnamefont {Li}}, \bibinfo {author} {\bibfnamefont
			{D.}~\bibnamefont {Lu}},\ and\ \bibinfo {author} {\bibfnamefont
			{B.}~\bibnamefont {Zeng}},\ }\bibfield  {title} {\bibinfo {title} {Optimal
			design of measurement settings for quantum-state-tomography experiments},\
	}\href {https://doi.org/10.1103/PhysRevA.96.032307} {\bibfield  {journal}
		{\bibinfo  {journal} {Phys. Rev. A}\ }\textbf {\bibinfo {volume} {96}},\
		\bibinfo {pages} {032307} (\bibinfo {year} {2017})}\BibitemShut {NoStop}%
\end{thebibliography}
\end{document}